\newcommand{\Bf}[1]{{\bf #1}}
\newcommand{\Rm}[1]{{\rm #1}}
\newcommand{\mc}{\mathcal}
\newcommand{\fa}[1]{\forall{#1}~.~}
\newcommand{\dom}{\Rm{dom}}
\newcommand{\cod}{\Rm{cod}}
\newcommand{\Set}{\Bf{Set}}
\newcommand{\arrow}{\rightarrow}
\newcommand{\bul}{\bullet}
\newcommand{\id}{\Rm{id}}
\newcommand{\ox}{\otimes}
\newcommand{\CC}{\mathbb{C}}
\newcommand{\NN}{\mathbb{N}}
\newcommand{\RR}{\mathbb{R}}
\newcommand{\rue}{\ar@{=}[u]}
\newcommand{\rueh}[1]{\ar@{=}[u]^-{#1}}
\newcommand{\ruem}[1]{\ar@{=}[u]_-{#1}}
\newcommand{\ruen}[1]{\ar@{=}[u]|-{#1}}
\newcommand{\ruue}{\ar@{=}[uu]}
\newcommand{\ruueh}[1]{\ar@{=}[uu]^-{#1}}
\newcommand{\ruuem}[1]{\ar@{=}[uu]_-{#1}}
\newcommand{\ruuen}[1]{\ar@{=}[uu]|-{#1}}
\newcommand{\ruuue}{\ar@{=}[uuu]}
\newcommand{\ruuueh}[1]{\ar@{=}[uuu]^-{#1}}
\newcommand{\ruuuem}[1]{\ar@{=}[uuu]_-{#1}}
\newcommand{\ruuuen}[1]{\ar@{=}[uuu]|-{#1}}
\newcommand{\rdh}[1]{\ar[d]^-{#1}}
\newcommand{\rdm}[1]{\ar[d]_-{#1}}
\newcommand{\rde}{\ar@{=}[d]}
\newcommand{\rdeh}[1]{\ar@{=}[d]^-{#1}}
\newcommand{\rdem}[1]{\ar@{=}[d]_-{#1}}
\newcommand{\rden}[1]{\ar@{=}[d]|-{#1}}
\newcommand{\rdde}{\ar@{=}[dd]}
\newcommand{\rddeh}[1]{\ar@{=}[dd]^-{#1}}
\newcommand{\rddem}[1]{\ar@{=}[dd]_-{#1}}
\newcommand{\rdden}[1]{\ar@{=}[dd]|-{#1}}
\newcommand{\rddde}{\ar@{=}[ddd]}
\newcommand{\rdddeh}[1]{\ar@{=}[ddd]^-{#1}}
\newcommand{\rdddem}[1]{\ar@{=}[ddd]_-{#1}}
\newcommand{\rddden}[1]{\ar@{=}[ddd]|-{#1}}
\newcommand{\rrh}[1]{\ar[r]^-{#1}}
\newcommand{\rrm}[1]{\ar[r]_-{#1}}
\newcommand{\rre}{\ar@{=}[r]}
\newcommand{\rreh}[1]{\ar@{=}[r]^-{#1}}
\newcommand{\rrem}[1]{\ar@{=}[r]_-{#1}}
\newcommand{\rren}[1]{\ar@{=}[r]|-{#1}}
\newcommand{\rrue}{\ar@{=}[ru]}
\newcommand{\rrueh}[1]{\ar@{=}[ru]^-{#1}}
\newcommand{\rruem}[1]{\ar@{=}[ru]_-{#1}}
\newcommand{\rruen}[1]{\ar@{=}[ru]|-{#1}}
\newcommand{\rruue}{\ar@{=}[ruu]}
\newcommand{\rruueh}[1]{\ar@{=}[ruu]^-{#1}}
\newcommand{\rruuem}[1]{\ar@{=}[ruu]_-{#1}}
\newcommand{\rruuen}[1]{\ar@{=}[ruu]|-{#1}}
\newcommand{\rruuue}{\ar@{=}[ruuu]}
\newcommand{\rruuueh}[1]{\ar@{=}[ruuu]^-{#1}}
\newcommand{\rruuuem}[1]{\ar@{=}[ruuu]_-{#1}}
\newcommand{\rruuuen}[1]{\ar@{=}[ruuu]|-{#1}}
\newcommand{\rrde}{\ar@{=}[rd]}
\newcommand{\rrdeh}[1]{\ar@{=}[rd]^-{#1}}
\newcommand{\rrdem}[1]{\ar@{=}[rd]_-{#1}}
\newcommand{\rrden}[1]{\ar@{=}[rd]|-{#1}}
\newcommand{\rrdde}{\ar@{=}[rdd]}
\newcommand{\rrddeh}[1]{\ar@{=}[rdd]^-{#1}}
\newcommand{\rrddem}[1]{\ar@{=}[rdd]_-{#1}}
\newcommand{\rrdden}[1]{\ar@{=}[rdd]|-{#1}}
\newcommand{\rrddde}{\ar@{=}[rddd]}
\newcommand{\rrdddeh}[1]{\ar@{=}[rddd]^-{#1}}
\newcommand{\rrdddem}[1]{\ar@{=}[rddd]_-{#1}}
\newcommand{\rrddden}[1]{\ar@{=}[rddd]|-{#1}}
\newcommand{\rrre}{\ar@{=}[rr]}
\newcommand{\rrreh}[1]{\ar@{=}[rr]^-{#1}}
\newcommand{\rrrem}[1]{\ar@{=}[rr]_-{#1}}
\newcommand{\rrren}[1]{\ar@{=}[rr]|-{#1}}
\newcommand{\rrrue}{\ar@{=}[rru]}
\newcommand{\rrrueh}[1]{\ar@{=}[rru]^-{#1}}
\newcommand{\rrruem}[1]{\ar@{=}[rru]_-{#1}}
\newcommand{\rrruen}[1]{\ar@{=}[rru]|-{#1}}
\newcommand{\rrruue}{\ar@{=}[rruu]}
\newcommand{\rrruueh}[1]{\ar@{=}[rruu]^-{#1}}
\newcommand{\rrruuem}[1]{\ar@{=}[rruu]_-{#1}}
\newcommand{\rrruuen}[1]{\ar@{=}[rruu]|-{#1}}
\newcommand{\rrruuue}{\ar@{=}[rruuu]}
\newcommand{\rrruuueh}[1]{\ar@{=}[rruuu]^-{#1}}
\newcommand{\rrruuuem}[1]{\ar@{=}[rruuu]_-{#1}}
\newcommand{\rrruuuen}[1]{\ar@{=}[rruuu]|-{#1}}
\newcommand{\rrrde}{\ar@{=}[rrd]}
\newcommand{\rrrdeh}[1]{\ar@{=}[rrd]^-{#1}}
\newcommand{\rrrdem}[1]{\ar@{=}[rrd]_-{#1}}
\newcommand{\rrrden}[1]{\ar@{=}[rrd]|-{#1}}
\newcommand{\rrrdde}{\ar@{=}[rrdd]}
\newcommand{\rrrddeh}[1]{\ar@{=}[rrdd]^-{#1}}
\newcommand{\rrrddem}[1]{\ar@{=}[rrdd]_-{#1}}
\newcommand{\rrrdden}[1]{\ar@{=}[rrdd]|-{#1}}
\newcommand{\rrrddde}{\ar@{=}[rrddd]}
\newcommand{\rrrdddeh}[1]{\ar@{=}[rrddd]^-{#1}}
\newcommand{\rrrdddem}[1]{\ar@{=}[rrddd]_-{#1}}
\newcommand{\rrrddden}[1]{\ar@{=}[rrddd]|-{#1}}
\newcommand{\rrrre}{\ar@{=}[rrr]}
\newcommand{\rrrreh}[1]{\ar@{=}[rrr]^-{#1}}
\newcommand{\rrrrem}[1]{\ar@{=}[rrr]_-{#1}}
\newcommand{\rrrren}[1]{\ar@{=}[rrr]|-{#1}}
\newcommand{\rrrrue}{\ar@{=}[rrru]}
\newcommand{\rrrrueh}[1]{\ar@{=}[rrru]^-{#1}}
\newcommand{\rrrruem}[1]{\ar@{=}[rrru]_-{#1}}
\newcommand{\rrrruen}[1]{\ar@{=}[rrru]|-{#1}}
\newcommand{\rrrruue}{\ar@{=}[rrruu]}
\newcommand{\rrrruueh}[1]{\ar@{=}[rrruu]^-{#1}}
\newcommand{\rrrruuem}[1]{\ar@{=}[rrruu]_-{#1}}
\newcommand{\rrrruuen}[1]{\ar@{=}[rrruu]|-{#1}}
\newcommand{\rrrruuue}{\ar@{=}[rrruuu]}
\newcommand{\rrrruuueh}[1]{\ar@{=}[rrruuu]^-{#1}}
\newcommand{\rrrruuuem}[1]{\ar@{=}[rrruuu]_-{#1}}
\newcommand{\rrrruuuen}[1]{\ar@{=}[rrruuu]|-{#1}}
\newcommand{\rrrrde}{\ar@{=}[rrrd]}
\newcommand{\rrrrdeh}[1]{\ar@{=}[rrrd]^-{#1}}
\newcommand{\rrrrdem}[1]{\ar@{=}[rrrd]_-{#1}}
\newcommand{\rrrrden}[1]{\ar@{=}[rrrd]|-{#1}}
\newcommand{\rrrrdde}{\ar@{=}[rrrdd]}
\newcommand{\rrrrddeh}[1]{\ar@{=}[rrrdd]^-{#1}}
\newcommand{\rrrrddem}[1]{\ar@{=}[rrrdd]_-{#1}}
\newcommand{\rrrrdden}[1]{\ar@{=}[rrrdd]|-{#1}}
\newcommand{\rrrrddde}{\ar@{=}[rrrddd]}
\newcommand{\rrrrdddeh}[1]{\ar@{=}[rrrddd]^-{#1}}
\newcommand{\rrrrdddem}[1]{\ar@{=}[rrrddd]_-{#1}}
\newcommand{\rrrrddden}[1]{\ar@{=}[rrrddd]|-{#1}}
\newcommand{\rle}{\ar@{=}[l]}
\newcommand{\rleh}[1]{\ar@{=}[l]^-{#1}}
\newcommand{\rlem}[1]{\ar@{=}[l]_-{#1}}
\newcommand{\rlen}[1]{\ar@{=}[l]|-{#1}}
\newcommand{\rlue}{\ar@{=}[lu]}
\newcommand{\rlueh}[1]{\ar@{=}[lu]^-{#1}}
\newcommand{\rluem}[1]{\ar@{=}[lu]_-{#1}}
\newcommand{\rluen}[1]{\ar@{=}[lu]|-{#1}}
\newcommand{\rluue}{\ar@{=}[luu]}
\newcommand{\rluueh}[1]{\ar@{=}[luu]^-{#1}}
\newcommand{\rluuem}[1]{\ar@{=}[luu]_-{#1}}
\newcommand{\rluuen}[1]{\ar@{=}[luu]|-{#1}}
\newcommand{\rluuue}{\ar@{=}[luuu]}
\newcommand{\rluuueh}[1]{\ar@{=}[luuu]^-{#1}}
\newcommand{\rluuuem}[1]{\ar@{=}[luuu]_-{#1}}
\newcommand{\rluuuen}[1]{\ar@{=}[luuu]|-{#1}}
\newcommand{\rlde}{\ar@{=}[ld]}
\newcommand{\rldeh}[1]{\ar@{=}[ld]^-{#1}}
\newcommand{\rldem}[1]{\ar@{=}[ld]_-{#1}}
\newcommand{\rlden}[1]{\ar@{=}[ld]|-{#1}}
\newcommand{\rldde}{\ar@{=}[ldd]}
\newcommand{\rlddeh}[1]{\ar@{=}[ldd]^-{#1}}
\newcommand{\rlddem}[1]{\ar@{=}[ldd]_-{#1}}
\newcommand{\rldden}[1]{\ar@{=}[ldd]|-{#1}}
\newcommand{\rlddde}{\ar@{=}[lddd]}
\newcommand{\rldddeh}[1]{\ar@{=}[lddd]^-{#1}}
\newcommand{\rldddem}[1]{\ar@{=}[lddd]_-{#1}}
\newcommand{\rlddden}[1]{\ar@{=}[lddd]|-{#1}}
\newcommand{\rlle}{\ar@{=}[ll]}
\newcommand{\rlleh}[1]{\ar@{=}[ll]^-{#1}}
\newcommand{\rllem}[1]{\ar@{=}[ll]_-{#1}}
\newcommand{\rllen}[1]{\ar@{=}[ll]|-{#1}}
\newcommand{\rllue}{\ar@{=}[llu]}
\newcommand{\rllueh}[1]{\ar@{=}[llu]^-{#1}}
\newcommand{\rlluem}[1]{\ar@{=}[llu]_-{#1}}
\newcommand{\rlluen}[1]{\ar@{=}[llu]|-{#1}}
\newcommand{\rlluue}{\ar@{=}[lluu]}
\newcommand{\rlluueh}[1]{\ar@{=}[lluu]^-{#1}}
\newcommand{\rlluuem}[1]{\ar@{=}[lluu]_-{#1}}
\newcommand{\rlluuen}[1]{\ar@{=}[lluu]|-{#1}}
\newcommand{\rlluuue}{\ar@{=}[lluuu]}
\newcommand{\rlluuueh}[1]{\ar@{=}[lluuu]^-{#1}}
\newcommand{\rlluuuem}[1]{\ar@{=}[lluuu]_-{#1}}
\newcommand{\rlluuuen}[1]{\ar@{=}[lluuu]|-{#1}}
\newcommand{\rllde}{\ar@{=}[lld]}
\newcommand{\rlldeh}[1]{\ar@{=}[lld]^-{#1}}
\newcommand{\rlldem}[1]{\ar@{=}[lld]_-{#1}}
\newcommand{\rllden}[1]{\ar@{=}[lld]|-{#1}}
\newcommand{\rlldde}{\ar@{=}[lldd]}
\newcommand{\rllddeh}[1]{\ar@{=}[lldd]^-{#1}}
\newcommand{\rllddem}[1]{\ar@{=}[lldd]_-{#1}}
\newcommand{\rlldden}[1]{\ar@{=}[lldd]|-{#1}}
\newcommand{\rllddde}{\ar@{=}[llddd]}
\newcommand{\rlldddeh}[1]{\ar@{=}[llddd]^-{#1}}
\newcommand{\rlldddem}[1]{\ar@{=}[llddd]_-{#1}}
\newcommand{\rllddden}[1]{\ar@{=}[llddd]|-{#1}}
\newcommand{\rllle}{\ar@{=}[lll]}
\newcommand{\rllleh}[1]{\ar@{=}[lll]^-{#1}}
\newcommand{\rlllem}[1]{\ar@{=}[lll]_-{#1}}
\newcommand{\rlllen}[1]{\ar@{=}[lll]|-{#1}}
\newcommand{\rlllue}{\ar@{=}[lllu]}
\newcommand{\rlllueh}[1]{\ar@{=}[lllu]^-{#1}}
\newcommand{\rllluem}[1]{\ar@{=}[lllu]_-{#1}}
\newcommand{\rllluen}[1]{\ar@{=}[lllu]|-{#1}}
\newcommand{\rllluue}{\ar@{=}[llluu]}
\newcommand{\rllluueh}[1]{\ar@{=}[llluu]^-{#1}}
\newcommand{\rllluuem}[1]{\ar@{=}[llluu]_-{#1}}
\newcommand{\rllluuen}[1]{\ar@{=}[llluu]|-{#1}}
\newcommand{\rllluuue}{\ar@{=}[llluuu]}
\newcommand{\rllluuueh}[1]{\ar@{=}[llluuu]^-{#1}}
\newcommand{\rllluuuem}[1]{\ar@{=}[llluuu]_-{#1}}
\newcommand{\rllluuuen}[1]{\ar@{=}[llluuu]|-{#1}}
\newcommand{\rlllde}{\ar@{=}[llld]}
\newcommand{\rllldeh}[1]{\ar@{=}[llld]^-{#1}}
\newcommand{\rllldem}[1]{\ar@{=}[llld]_-{#1}}
\newcommand{\rlllden}[1]{\ar@{=}[llld]|-{#1}}
\newcommand{\rllldde}{\ar@{=}[llldd]}
\newcommand{\rlllddeh}[1]{\ar@{=}[llldd]^-{#1}}
\newcommand{\rlllddem}[1]{\ar@{=}[llldd]_-{#1}}
\newcommand{\rllldden}[1]{\ar@{=}[llldd]|-{#1}}
\newcommand{\rlllddde}{\ar@{=}[lllddd]}
\newcommand{\rllldddeh}[1]{\ar@{=}[lllddd]^-{#1}}
\newcommand{\rllldddem}[1]{\ar@{=}[lllddd]_-{#1}}
\newcommand{\rlllddden}[1]{\ar@{=}[lllddd]|-{#1}}
\newcommand{\rmid}[2]{\ar@{}[#1]|-{#2}}
\newcommand{\adj}[3]{
  \ar@<+.3pc>@{->}[#1]^-{#2} 
  \ar@<-.3pc>@{<-}[#1]_-{#3} 
  \ar@{}[#1]|-{\dashv} 
}
\newtheorem{theorem}{Theorem}
\newtheorem{proposition}[theorem]{Proposition}
\newtheorem{lemma}[theorem]{Lemma}
\newtheorem{corollary}[theorem]{Corollary}
\newtheorem{example}[theorem]{Example}
\newtheorem{definition}[theorem]{Definition}
\theoremstyle{nonumberplain}
\newtheorem{proof}{Proof}
\tikzstyle{qstate}=[fill={rgb,255: red,255; green,128; blue,0}, draw=black, shape=rectangle, tikzit fill={rgb,255: red,255; green,128; blue,0}]
\tikzstyle{tallbox}=[fill=white, draw=black, shape=rectangle, minimum height=1.25cm, minimum width=0.75cm]
\tikzstyle{box}=[fill=white, draw=black, shape=rectangle, minimum height=0.75cm, minimum width=0.75cm]
\tikzstyle{black dot}=[fill=black, draw=black, shape=circle, scale=0.5]
\tikzstyle{smallbox}=[fill=white, draw=black, shape=rectangle]
\tikzstyle{tri}=[fill=white, draw=black, regular polygon sides=3, regular polygon, shape border rotate=90, tikzit fill={rgb,255: red,0; green,128; blue,128}]
\tikzstyle{delay}=[fill=white, draw=black, tikzit fill=magenta, scale=0.7, and gate US]
\tikzstyle{downtri}=[fill=white, draw=black, tikzit fill={rgb,255: red,0; green,128; blue,128}, regular polygon, regular polygon sides=3, scale=0.7]
\tikzstyle{supertallbox}=[fill=white, draw=black, shape=rectangle, minimum height=2cm]
\tikzstyle{hint}=[fill=white, text=red, fill opacity=0, text opacity=1]
\tikzstyle{DDf}=[fill=white, draw=black, shape=rectangle, minimum height=3cm]
\tikzstyle{state}=[-, tikzit draw={rgb,255: red,0; green,187; blue,0}]
\tikzstyle{arrow}=[->, shorten <=0.1cm, shorten >=0.1cm]
\tikzstyle{focus}=[-, dashdotted, draw=red]
\tikzstyle{gray}=[-, draw={rgb,255: red,191; green,191; blue,191}]
\newcommand{\seq}[1]{\Bf{#1}}
\newcommand{\seqof}[1]{[#1]}
\newcommand{\chop}{\mathop{\bigcirc}}
\newcommand{\tr}[2]{\mathop{\mathrm{tr}}\nolimits^{#1}_{#2}}
\newcommand{\Tc}{\mathrm{Tc}}
\newcommand{\Un}{\mathrm{Un}}
\newcommand{\st}{\mathrm{st}}
\newcommand{\teq}{\mathbin{\triangleq}}
\newcommand{\prv}{\mathop{\mathrm{prv}}}
\newcommand{\nxt}{\mathop{\mathrm{nxt}}}
\renewcommand{\dom}{\mathop{\mathrm{dom}}}
\renewcommand{\cod}{\mathop{\mathrm{cod}}}
\newcommand{\sq}[1]{Dbl(#1)}
\newcommand{\caus}[1]{\mathrm{St}(#1)}
\newcommand{\causz}[1]{\mathrm{St}_0(#1)}
\newcommand{\cell}[5]{#1:#3\xrightarrow[#4]{#2}#5} 
\newcommand{\final}[1]{{!_{#1}}}
\newcommand{\Euci}{\Bf{Euc}_\infty}
\NewDocumentCommand{\valtostate}{m O{} O{}}{\,^{#2}\lfloor#1\rceil_{#3}}
\newcommand{\gc}[1]{{#1}_\bul} 
\newcommand{\seqel}[1]{\gc{\seq{#1}}}
\newcommand{\nextseqel}[1]{\seq{#1}_{\bul+1}}
\newcommand{\Nat}{\NN}
\newcommand{\tuple}[1]{\langle #1 \rangle}
\def\squD{\mc D}    
\def\seqD{\squD^*}  
\def\vc{;}  
\def\hc{*}  
\def\cc{\boxtimes}  
\title{Differentiable Causal Computations via\\Delayed Trace}
\author{
\IEEEauthorblockN{David Sprunger}
\IEEEauthorblockA{National Institute of Informatics
Tokyo, Japan 100-0003\\
Email: sprunger@nii.ac.jp}

\and 

\IEEEauthorblockN{Shin-ya Katsumata}
\IEEEauthorblockA{National Institute of Informatics
Tokyo, Japan 100-0003\\
Email: s-katsumata@nii.ac.jp}
}
\begin{document}

\maketitle

\begin{abstract}
  We investigate causal computations taking sequences of inputs to sequences
  of outputs where the $n$th output depends on the first $n$ inputs only. We
  model these in category theory via a construction taking a Cartesian
  category $\CC$ to another category $\caus\CC$ with a novel trace-like
  operation called ``delayed trace'', which misses yanking and dinaturality
  axioms of the usual trace. The delayed trace operation provides a feedback
  mechanism in $\caus\CC$ with an implicit guardedness guarantee.

  When $\CC$ is equipped with a Cartesian differential operator, we construct
  a differential operator for $\caus\CC$ using an abstract version of
  backpropagation through time, a technique from machine learning based on
  unrolling of functions. This obtains a swath of properties for
  backpropagation through time, including a chain rule and Schwartz theorem.
  Our differential operator is also able to compute the derivative of a stateful
  network without requiring the network to be unrolled.
\end{abstract}
\begin{IEEEkeywords}
  delayed trace operators, Cartesian differential categories,
  recurrent neural networks, backpropagation through time, signal flow
  graphs
\end{IEEEkeywords}

\section{Introduction}\label{sec:intro}



Many objects of study in computer science, such as Mealy machines, clocked
digital circuits, signal flow graphs, discrete-time feedback loops, and
recurrent neural networks, compute a \emph{stateful} and particularly a
\emph{causal} function of their inputs, meaning the output of the function at
a particular time may depend on not only the current input, but also all
inputs received by the device up to that time. They share a basic operational
scheme, depicted in the following diagram (which is to be read left-to-right):
\ctikzfig{simplefeedback} 
Here the box labeled $\phi$ is a (sub)device which takes an $S$-value at its
upper left interface and an $X$-value at its lower left interface and produces
output $S$- and $Y$-values at its right interfaces. The differently-shaped box
labeled $i$ is our depiction of a \emph{delay gate}, a device which stores the
value provided to its left boundary and emits it one step later at its right
boundary, initially emitting the value $i$. The whole device, which we call
$\Phi$, receives sequences of $X$-valued inputs at the left and emits
sequences of $Y$-valued outputs at the right, storing its internal state in
the delay gate.

A recurrent neural network has inputs of two types: data inputs and
parameters. \emph{Training} a neural network means finding parameter values
$\theta$ so that when $\theta$ is fixed (in the diagram below by the
triangular device which emits $\theta$ constantly), the resulting function of
data inputs has a desired behavior.

\ctikzfig{simplernn}

The key insight of \emph{gradient-based training} is that the derivative of
$\Phi$ with respect to $\theta$ gives an accurate prediction about how the
output of $\Phi$ will change in response to a small change in $\theta$,
allowing the trainer to make iterative small changes to $\theta$ to drive the
network to a desired behavior.

This idea works perfectly for feedforward (stateless) neural networks.
Recurrent neural networks require a workaround, however, due to the fact that
classical differentation does not work on stateful functions (or must be
performed in an infinite dimensional vector space).

The usual workaround is to first \emph{unroll} $\Phi$ into a
sequence of stateless functions, to which classical differentiation can be
applied \cite{Goodfellow16}. To be more precise, think of $\Phi$ as the
solution to the following recurrence relation:
$$(s_{k+1}, y_k) = \phi(s_k, x_k) \text{ where } s_0 = i.$$ 
Let $\phi_S = \pi_0 \circ \phi$ and $\phi_Y = \pi_1 \circ \phi$. Then the
unrolling of $\Phi$ is the sequence
\mbox{$\phi_k: X^{k+1} \to Y$} given by
\begin{align}
  \phi_0(x_0) &= \phi_Y(i, x_0) \nonumber \\
  \phi_1(x_0, x_1) &= \phi_Y(\phi_S(i, x_0), x_1) \label{eq:unr} \\
  \phi_2(x_0, x_1, x_2) &= \phi_Y(\phi_S(\phi_S(i, x_0), x_1), x_2) \nonumber \\
  \vdots \nonumber &
\end{align}
When the gradient of $\Phi$ is needed at an input of length $k$ by a trainer,
the gradient of $\phi_k$ at that input is used instead.

This is an empirically useful way to find gradients, known in the machine
learning literature as \emph{backpropagation through time} (BPTT) \cite{bptt}.
However, its ad-hoc nature raises some fundamental questions, the principal
one we address here being: \textbf{Does BPTT have the usual properties of
differentiation, or is it just a process involving differentiation?} That is,
does this unroll-then-differentiate procedure have a chain rule, a sum rule, a
notion of partial derivative, etc., or is it merely an empirically useful
process using derivatives?

We show that BPTT has the properties of differentiation mentioned above and
more. In particular, we are able to state the derivative of a stateful
function as another stateful function, rather than a sequence of stateless
functions. Roughly speaking, we accomplish this by taking advantage of the
fact that the unrolling above is an iterated composition of $\phi$ with
itself, and therefore its componentwise derivative can be ``re-rolled'' back
into a single stateful function.

\textbf{Outline.} Our first main contribution is to give a construction which
extends any given (Cartesian) category $\CC$, representing stateless
functions, to a new category $\caus\CC$ of stateful functions, particularly
computations extended through discrete time (definition~\ref{def:caus}). This
$\caus-$ construction captures causal functions as a special instance
(theorem~\ref{causal}), and captures other stateful devices like Mealy
machines and recurrent neural networks.

A distinctive feature of this construction includes the loop-with-delay gate
seen in the first diagram, which we will more formally call a {\em delayed
trace operator} (definition~\ref{def:dtr}). This delayed trace satisfies many
of the properties of its better-known cousin, the trace operator of Joyal et
al.~\cite{jsv} (proposition~\ref{prop:dtr_props}), but is missing the yanking
condition and satisfies a modified form of dinaturality
(theorem~\ref{thm:dinaturality}).

Our second major contribution is to give an abstract form of differentiation
in this category of stateful computations. A key result of this paper is that
if $\CC$ is a \emph{Cartesian differential category} \cite{cartesiandiffcat},
then so is $\caus\CC$ (theorem~\ref{pp:cartdiff}). In particular, this
differential operator matches the results obtained by
unrolling-then-differentiating as in BPTT (theorem~\ref{th:unrol}). The
definition of Cartesian differential categories packages many of the classic
properties of derivatives in a convenient abstract unit. Hence, showing that
$\caus\CC$ is a Cartesian differential category implicitly obtains a slew of
fundamental results for differentiation of stateful computations.

\section*{Related Work}


Signal flow graphs are a widely used model of causal computation, especially
in synchronous digital circuits and signal processing
\cite{1451723,Parhi2013}. The formation of loop paths in signal flow graphs
are often restricted so that each loop path must go through at least one
(initialized) delay gate. The delayed trace operator in $\caus\CC$ in this
paper embodies this principle.

A line of coalgebraic study of signal flow graphs by Rutten
\cite{DBLP:journals/mscs/Rutten05,DBLP:journals/lmcs/Rutten08}, Milius
\cite{DBLP:conf/lics/Milius10}, Hansen et
al. \cite{DBLP:journals/corr/HansenKR16} and Basold et
al. \cite{Basold2014} and many others achieve characterisations of
computable streams by signal flow graphs.  These coalgebraic
studies regard signal flow graphs as specification of coalgebraic
transition systems. This makes it possible to apply powerful
coalgebraic techniques to analyse the behaviour of signal flow
graphs. Our categorical work, on the other hand, regards signal flow
graphs as morphisms in a certain category, and focuses on the
categorical structures realising these flow
graphs.

An axiomatic system for representing digital circuits based on
monoidal category theory has been proposed by Ghica et
al. \cite{Ghica:2016:CSD:3077629.3077642,DBLP:conf/csl/GhicaJL17}.
Their system is an extension of a traced cartesian category with a few
structural morphisms that implement wire join and delay gate, but
their delay gates do not support arbitrary initialization. Their system can
represent interesting well-defined digital circuits using general loops
without delay gates. The precise relationship between their axiomatic
system and our categorical construction is not clear yet, and it is an
interesting topic to investigate.

\newcommand{\IH}{\mathbb{IH}} 

Zanasi studies the PROP $\IH_R$ of {\em interacting Hopf algebras} over a ring
$R$ in his PhD thesis \cite{DBLP:phd/hal/Zanasi15}. The expressive power of
this PROP is demonstrated by encoding various graphical systems into $\IH_R$
\cite{DBLP:conf/concur/BonchiSZ14,DBLP:conf/popl/BonchiSZ15}. When $R$ is the
polynomial ring $k[x]$ over a field $k$, the PROP $\IH_{k[x]}$ admits delay
gates, and the trace-with-delay operation (which he called {\em z-feedback
operator}) is definable \cite[Definition 7]{DBLP:conf/concur/BonchiSZ14}. His
z-feedback operator is very close to the delayed trace operator, except that
the latter supports arbitrary initial values. 

Recently, Kissinger and Uijlen reformulated the concept of causality in
quantam physics in a class of compact closed categories
\cite{DBLP:conf/lics/KissingerU17}. Starting from a compact closed
category with some extra structure, they refine it to the *-autonomous
category so that morphisms there respect causal constraints.



A category whose morphisms are realized by Mealy-machine like transducers is
constructed in the memoryful GoI by Hoshino et al.
\cite{DBLP:conf/csl/HoshinoMH14}. Their transducers, represented as functions
of type $S\times A\arrow T(S\times B)$, extend deterministic Mealy machines
with the ability to perform computational effects represented by the monad
$T$. The machine type considered in our work does not support these abstract
computational effects. Another technical difference from our work is that the
monoidal structure on their category of transducers is based on finite
coproducts in order to realize the particle-style trace operator for the GoI
interpretation, whereas our work uses finite products.

A common theme in recursively defined computations is that to have
well-defined behaviour, a recursive computation must satisfy a guardedness
condition \cite{DBLP:conf/icfp/AbelP13,DBLP:conf/csl/Mogelberg14}.  Goncharov
and Schr\"oder developed the theory of {\em guarded traced categories} to
formalize this phenomena in \cite{10.1007/978-3-319-89366-2_17}. The key idea
is to restrict Joyal et al.'s trace operator \cite{jsv} to a class of {\em
guarded morphisms}, which are an abstractly given class of morphisms
satisfying the guardedness condition. It is interesting to see the
relationship between guarded trace operator and the delayed trace operator,
and the key in this comparison is the treatment of the initial state, which is
missing in the guarded trace operator.

The idea of using tiles as representations of computation steps is pursued in
the {\em tile models} by Gadducci and Montanari
\cite{DBLP:conf/birthday/GadducciM00}. In their model, each tile $\cell f S A
{S'} B$ represents a state transition from $A$ to $B$, while $S$ and $S'$ are
the trigger of and effect of this transition, respectively. In our work, $S$
and $S'$ denote types of values stored across clock ticks.


Inspired by the semantics of differential $\lambda$-calculus and
differential proof nets by Ehrard and Regnier
\cite{DBLP:journals/tcs/EhrhardR03,DBLP:journals/entcs/EhrhardR05},
Blute, Cockett and Seely categorically formalized the differentiation
operator in analysis. The formalization was first given in the
categories where morphisms denote linear maps
\cite{DBLP:journals/mscs/BluteCS06}. Later, they introduced a new
axiomatization \cite{cartesiandiffcat} based on cartesian monoidal
category where morphisms denote possibly non-linear maps.  This paper
is based on the latter work, and adopts more recent reformulations of
differentiation operators studied in \cite{cruttwell_2017} and
\cite{sdg2014}. 

There have been some recent efforts to connect category theory with machine
learning, particularly backpropagation, using the fact that differentiation
has a chain rule and is therefore compositional, for example \cite{FongST17}.
A notable example is \cite{DBLP:journals/pacmpl/Elliott18}, where Elliot
studies automatic differentiation (AD) in the context of functional
programming. He gives a clean account of an AD algorithm by exploiting the
functorial nature of the differentiation operator, including both a chain
rule and a parallel rule to obtain a Cartesian functor.

\section*{Preliminaries}

We assume familiarity with basic category theory. If $\CC$ is a category, we
write $|\CC|$ to denote its objects, and $\CC(X, Y)$ to denote a homset for
$X, Y \in |\CC|$. We may abbreviate an identity map $\id_X$ to the name of its
object, $X$.

If $\CC$ is a cartesian category, we write $1$ for its terminal object,
$!_X:X\arrow 1$ for the unique maps to $1$, and $\times$ for the product
bifunctor. The tupling of morphisms $f_i:Y\arrow X_i$ for $i\in\{0,1\}$ is denoted by
$\tuple{f_0,f_1}$. Projections are denoted by $\pi^{X_0,X_1}_i:X_0\times
X_1\arrow X_i$ ($i\in\{0,1\}$), and we drop the superscript when it is obvious from
context. The symmetry map on products is $\sigma_{X, Y}: X\times Y \to
Y\times X$.

In general, Cartesian categories need not be strict, but working with
associators etc.~unnecessarily complicates the story. So whenever we mention a
Cartesian category, we will instead technically be using the equivalent
strictified version.

Bold metavariables---$\seq X,\seq s$, etc.---denote sequences of mathematical
objects, indexed by $\Nat$. The $i$th component of a sequence is $\seq X_i$.
By $\chop{\seq X}$ we mean the tail of $\seq X$, namely $\chop\seq X=\seq
X_1,\seq X_2,\cdots$. In addition to Roman-letter subscripts, we use a bullet
$\bul$ as an special index variable, which can be bound by the
sequence-forming bracket notation given next.

Let $e$ be an expression containing some
dotted sequence metavariables $\seq X_\bul,\seq Y_\bul\cdots$.  By
$\seqof{e}$ we mean the infinite sequence obtained by substituting
$0,1,2\cdots$ for $\bul$. For instance,
\begin{align*}
  (i,\seqof{\seq x_\bul+\seq y_\bul}) & ~\text{is}~ (i, \seq x_0+\seq y_0,~\seq x_1+\seq y_1,~\cdots) \\
  \seqof{(i,\seq x_\bul+\seq y_\bul)} & ~\text{is}~ (i,\seq x_0+\seq y_0),~(i,\seq x_1+\seq y_1),~\cdots
\end{align*}
When $e$ contains at least one dotted sequence metavariable, we may omit the
outermost $\seqof-$, so $\seqof{\seq X_\bul\times \seq Y_\bul}$ may be written
as $\seq X_\bul\times \seq Y_\bul$.  This omission is not allowed when $e$
contains no such variable; otherwise we would confuse ordinary expressions
(like $x+y$) and constant infinite sequences (like $[x+y]=x+y,x+y,\cdots$).

A mathematical formula $\phi$ containing dotted sequence metavariables
represents the conjunction $\bigwedge_{i\in\Nat}\phi[i/\bul]$.  For
instance, $\gc{\seq Z}=\gc{\seq X}\times\gc{\seq Y}$ means
$\fa{i\in\Nat}\seq Z_i=\seq X_i\times\seq Y_i$.

\section{Extending Cartesian Categories along Discrete Time}

Before jumping into the depths of categorical abstraction, we take a moment to
think about different kinds of functions on sequences and particularly where
causal functions lie.

One natural way to obtain functions on sequences is to consider the category
$\Set^\NN$, the countable product category of $\Set$. In this category, each
morphism $\seq f:\seq X\arrow\seq Y$ consists of independent components $\seq
f_k:\seq X_k\arrow\seq Y_k$ for all $k \in \NN$, each of which compute a
single entry in the output sequence.

These are certainly functions taking sequences to sequences in a causal
manner, but the fact that each of the components of $\seq f$ are independent
means the $k$th output of $\seq f$ depends only on the $k$th input, not on all
inputs before $k$. Therefore, some causal functions of sequences, such as
computing a running average, are missing from this class.

Another natural idea would be to take all the functions in homsets
$\Set(\prod\seq X, \prod\seq Y)$ for arbitrary $\seq X, \seq Y \in |\Set^\NN|$. 
This class is too big---non-causal functions such as 
${\tt tl}: (x_0, x_1, \ldots) \mapsto (x_1, x_2, \ldots)$ are present
there. Therefore, we must do something a bit more complex to obtain
a class of functions somewhere between these two.



To obtain the class of causal functions, we return to our original idea,
$\Set^\NN$, and add objects in the domain and codomain of each component of
$\seq f$ representing communication channels with its neighbouring components,
like
\begin{equation}
  \label{eq:fk}
  \seq f_k:\seq S_k\times\seq X_k\arrow\seq S_{k+1}\times\seq Y_k.  
\end{equation}
To start this computation, we need to provide an initial state $i:1\arrow\seq
S_0$, and we call the pair $(i,\seq f)$ a {\em stateful morphism sequence}. We
will see causal functions are equivalence classes of these stateful morphism
sequences (theorem~\ref{causal}).

Though these are all functions on sequences, it will often be convenient to
pretend that these sequences are produced one element at a time, synchronized
by some clock signal. Thus, since the function $\seq f_k$ above computes the
$k$th element in the sequence, we may refer to it as producing a value
\emph{at clock tick $k$}. Similarly, we refer to the element of state passed
from $\seq f_k$ to $\seq f_{k+1}$ as being kept \emph{across clock ticks}, and
other such language. In this way, computing functions of sequences can also
be thought of as performing discrete timed computations.


There is a clear distinction between the role of $\seq X_k/\seq Y_k$ and $\seq
S_k/\seq S_{k+1}$---the former objects are the types of {\em values} flowing
through $\seq f_k$ at clock tick $k$, while the latter objects are the types of
{\em states} passed across clock ticks. We organize these two different kinds of
information flow using special two-dimensional categories called {\em double
categories} \cite{Ehresmann1963}. 

Roughly speaking, double categories consist of 0-cells (objects), two types of
1-cells (horizontal and vertical morphisms), and 2-cells (tiles) which go
between pairs of horizontal and vertical 1-cells. These 2-cells are often
drawn like below (left).
\begin{displaymath}
  \xymatrix{
    X \rrh S \rdm A \rmid{rd}{\alpha} & Y \rdh B & &
    \cdot \rrh {\seq S_k} \rdm {\seq X_k} \ar@{}[rd]|-{\seq f_k} & \cdot \rdh {\seq Y_k} \\
    Z \rrm {S'} & W & &
    \cdot \rrm {\seq S_{k+1}} & \cdot 
  }
\end{displaymath}
These tiles can be composed along either common vertical 1-cells
(\emph{horizontal composition}) or common horizontal 1-cells (\emph{vertical
composition}). Having these two distinct types of composition is the essential
and only reason for using a double category in this paper, so that we can use
one composition for composition \emph{within} a clock tick and the other for
composition \emph{across} clock ticks. We will not be using any results of
higher category theory or further higher-dimensional abstractions.

Our double category will therefore have a particularly simple structure, with
2-cells as above (right). We have a dummy 0-cell ($\cdot$), objects from $\CC$
as 1-cells, representing values when oriented vertically and states when
oriented horizontally, and functions $\seq f_k$ on states and values in the
tiles.

\begin{definition}\label{def:double_cat}
  Let $(\CC, \times, 1)$ be a (strict) Cartesian category.  The double
  category $\sq{\CC}$ is defined as follows:
  \begin{itemize}
  \item $\cdot$ is the only object (0-cell)
  \item Horizontal and vertical 1-cells are both given by objects of
    $\CC$, composed with $\times$, and have $1$ as the identity.
  \item A 2-cell $f$ with source horizontal 1-cell $S$, source
    vertical 1-cell $X$, target horizontal 1-cell $S'$, and target
    vertical 1-cell $Y$ is a morphism
    \mbox{$\phi \in \CC(S\times X, S'\times Y)$}.
    \begin{displaymath}
      \xymatrix{
        \rrh{S} \rdm{X} \cdot \ar@{}[rd]|-f & \rdh{Y\ \ =} \cdot \\
        \rrm{S'} \cdot & \cdot
      }
      \xymatrix{
        \rrh{\prv f} \rdm{\dom f} \cdot \ar@{}[rd]|-f & \rdh{\cod f} \cdot \\
        \rrm{\nxt f} \cdot & \cdot
      }
    \end{displaymath}
    As indicated above, we denote the source and target 1-cells of
    $f$---$S, X, S',$ and $Y$---by $\prv f, \dom f, \nxt f,$ and
    $\cod f$, respectively.  We will generally denote a 2-cell by
    $\cell f S X {S'} Y$. We call $\phi$ the \emph{underlying
      morphism} of $f$, while $U$ is the operation taking a 2-cell to
    its underlying morphism, so $Uf = \phi$.
  \item The horizontal composition of 2-cells, say the $f$ above
    before $\cell g T Y {T'} Z$, is
    $\cell{g\hc f}{S\times T} X {S'\times T'}Z$ with underlying
    morphism
    \begin{displaymath}
      U(g \hc f) \teq (S' \times Ug)\circ(\sigma_{S', T}\times Y)\circ(T \times Uf)\circ(\sigma_{S, T}\times X).
    \end{displaymath}
  \item The vertical composition of 2-cells, say the $f$ above before
    $\cell h {S'} V {S''} W$, is
    $\cell{f \vc h} S {X\times V} {S''} {Y\times W}$ with underlying
    morphism
    \begin{displaymath}
      U(f\vc h) \teq (S'' \times\sigma_{Y, W})\circ(Uh\times Y)\circ(S'\times\sigma_{V,Y})\circ(Uf\times V).
    \end{displaymath}
  \end{itemize}
  NB: vertical composition is given in relational composition order
  while horizontal composition is given in functional composition
  order.
\end{definition}

String diagrams for the underlying $\CC$-morphisms of horizontal and vertical
composites may be helpful to digest this definition. The underlying morphism
of the horizontal composition $g\hc f$ is:

\ctikzfig{horcomp}

While for vertical composition, we have $U(f\vc h)$ as below:

\ctikzfig{vertcomp}

A 2-cell $f$ of $\sq\CC$ is determined by its underlying morphism $\phi$ from
$\CC$. To stress this, we often draw $\phi$ inside the tile,
with its inputs and outputs connected to corresponding edges: 
\ctikzfig{2cell} 

Horizontal composition of 2-cells is composition along values like $\seqel X$
or $\seqel Y$, and we think of as occuring within a single clock tick.
Vertical composition is composition along states like $\seqel S$, and occurs
across clock ticks.

\begin{definition}
  For $\phi\in\CC(X,Y)$, the 2-cells \mbox{$\cell{\phi^h} 1 X 1 Y$}
  and \mbox{$\cell{\phi^v} X 1 Y 1$} have
  $U(\phi^h) \teq \phi \teq U(\phi^v) $.
\end{definition}

These operations sending $\CC$-morphisms to 2-cells in $\sq\CC$ are
particularly useful. (Note first that $\id_X^h$ are the identities for
horizontal composition, and similarly $\id_S^v$ are the identities for
vertical composition!) More practically, 2-cells of the form $\phi^h$
modify values only, while 2-cells of the form $\phi^v$ modify states
only, as shown in the following lemma.

\begin{lemma}\label{lem:adjust}
  If $\cell f S X {S'} Y$ is a $\sq\CC$ 2-cell, 
  \mbox{$\phi_1 \in \CC(T, S)$}, \mbox{$\phi_2 \in \CC(S', T')$},
  \mbox{$\psi_1 \in \CC(W, X)$}, and \mbox{$\psi_2 \in \CC(Y, Z)$},
  then the underlying morphism of
  \mbox{$\phi_1^v\vc(\psi_2^h\hc f \hc \psi_1^h)\vc \phi_2^v$} has the
  following string diagram in $\CC$: \ctikzfig{allshims}
\end{lemma}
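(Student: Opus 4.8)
The plan is to unfold the composite one layer at a time, using the explicit underlying-morphism formulas of Definition~\ref{def:double_cat} together with two simplifications available because we work in the strictified Cartesian category: every symmetry map $\sigma$ whose arguments include the monoidal unit $1$ is an identity, and every factor of the form $1\times(-)$ or $(-)\times 1$ is trivial. This is precisely what neutralizes the reordering machinery (the $\sigma$'s and the padding tensors) in the formulas for $\hc$ and $\vc$, since each shim $\phi_1^v,\phi_2^v,\psi_1^h,\psi_2^h$ carries a $1$ in one of its four boundary 1-cells.

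First I would compute the two innermost horizontal composites. Because $\psi_1^h$ has underlying morphism $\psi_1$ and boundary $1$ on both of its horizontal 1-cells, substitution into the $\hc$ formula collapses all symmetries to identities and yields $U(f\hc\psi_1^h)=\phi\circ(S\times\psi_1)$; composing on the other side with $\psi_2^h$ similarly gives $U(\psi_2^h\hc f\hc\psi_1^h)=(S'\times\psi_2)\circ\phi\circ(S\times\psi_1)$. Intuitively the two $\psi$-shims act only on the value (vertical) wire and leave the state (horizontal) wire untouched, matching the reading of $\phi^h$ as ``modifies values only''. Write $M$ for this middle 2-cell and $M_0=UM$.

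Next I would apply the $\vc$ formula twice, pre-composing with $\phi_1^v$ and post-composing with $\phi_2^v$. Each of these shims carries $1$ on both of its vertical 1-cells, so again the symmetries vanish and I obtain $U(\phi_1^v\vc M)=M_0\circ(\phi_1\times W)$ and then $U(\phi_1^v\vc M\vc\phi_2^v)=(\phi_2\times Z)\circ M_0\circ(\phi_1\times W)$; these $\phi$-shims touch only the state wire, consistent with $\phi^v$ ``modifying states only''.

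Finally I would collapse the four pre/post-composed factors into two by functoriality of $\times$ in its interchange form $(\id\times g)\circ(f\times\id)=f\times g$: on the input side $(S\times\psi_1)\circ(\phi_1\times W)=\phi_1\times\psi_1$, and on the output side $(\phi_2\times Z)\circ(S'\times\psi_2)=\phi_2\times\psi_2$. This produces the sandwiched normal form $(\phi_2\times\psi_2)\circ\phi\circ(\phi_1\times\psi_1)\colon T\times W\arrow T'\times Z$, which is exactly the morphism drawn in the string diagram. The only thing that requires care—rather than any genuine difficulty—is tracking which of the many symmetry maps appearing in the two composition formulas are instantiated at the unit $1$; once that bookkeeping is discharged, the remainder is just the single interchange identity applied twice.
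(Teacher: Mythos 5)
Your computation is correct: unfolding the $\hc$ and $\vc$ formulas of Definition~\ref{def:double_cat}, noting that every symmetry instantiated at the unit $1$ and every padding factor $1\times(-)$ is trivial in the strictified category, and then applying the interchange law twice does yield $(\phi_2\times\psi_2)\circ Uf\circ(\phi_1\times\psi_1)$, which is the content of the \emph{allshims} diagram. The paper offers no proof of this lemma, treating it as a routine verification, and your argument is exactly the intended direct unfolding.
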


Note that if $\phi_2$ and $\psi_2$ are identities, the composed 2-cell
above is denoted $\phi_1^v\vc f\hc \psi_1^h$. This compact notation
for precomposition in both dimensions is a powerful notational
advantage of having the $\vc$ and $\hc$ operators take their arguments
in different orders.

\subsection{Stateful Morphism Sequences and Extensional Equivalence}

Each 2-cell of the double category $\sq\CC$ represents an individual component
computing a single output value in a time-extended computation, like that of
\cref{eq:fk}. To represent a whole causal computation, we collect together
countably many of these components into a \emph{stateful morphism sequence}.

\begin{definition}
  \label{def:ssm}
  Let $\seq X,\seq Y\in|\CC|^\Nat$ be sequences of $\CC$-objects.
  A {\em stateful morphism sequence} of type $\seq X\arrow\seq Y$ is a
  pair $(i,\seq s)$ of a sequence $\seq s$ of 2-cells in $\sq\CC$ and
  a $\CC$-morphism $i :1\arrow\prv\seq s_0$ such that
  \begin{displaymath}
    \dom{\seq s_\bul}=\seq X_\bul,\quad
    \cod{\seq s_\bul}=\seq Y_\bul,\quad
    \nxt{\seq s_\bul}=\prv{\seq s_{\bul+1}}.    
  \end{displaymath}
  The \emph{state sequence} of $(i,\seq s)$ is
  \mbox{$\st(i,\seq s) \teq \seqof{\prv(\seq s_\bul)}$}.
\end{definition}

Note the last condition implies $\seq s_\bul\vc \seq s_{\bul+1}$ exists, which
allows each component to pass state to the next.

A stateful morphism sequence can be thought of as an infinite tower of
2-cells, each layer of which is vertically composable with adjacent layers, as
depicted in \cref{fig:sms} left.

\begin{figure}[h!]
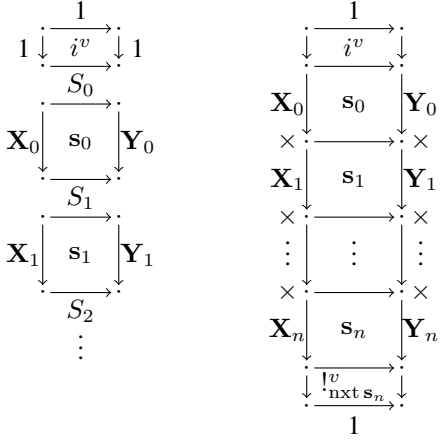

  \begin{center}
    \tikzfig{sms}
    \hspace{1cm}
    \tikzfig{truncation}
  \end{center}
  \caption{The stateful morphism sequence $(i, \seq s)$ and its $n$th truncation}
  \label{fig:sms}
\end{figure}

In this representation, the ``arrow of time'' starts at $\seq X_0$ and
points down. At the zeroth clock tick, the stateful morphism sequence
receives a value at $\seq X_0$, outputs a value at $\seq Y_0$, and
sets a state value of type $S_1$. Then at the first clock tick, the
first layer of the stateful morphism sequence executes, using the state
previously prepared by the zeroth layer.

Since we intend the state maintained by these sequences to be
internal, saying $(i,\seq s) = (j,\seq t)$ if and only if they are
exactly the same sequence of 2-cells is not a suitable notion
of equality. Ideally, if we could form the infinite vertical
composition of 2-cells, the natural definition of equality of two
stateful morphism sequences of type $\seq X\arrow\seq Y$ would be to
compare the underlying $\CC$-morphisms of the infinite composition,
meaning $(i, \seq s) = (j, \seq t)$ if and only if
\begin{displaymath}
  U(i^v;\seq s_0;\seq s_1;\cdots)=U(j^v;\seq t_0;\seq t_1;\cdots):
  \prod_{i\in\Nat}\seq X_i\arrow \prod_{i\in\Nat}\seq Y_i.
\end{displaymath}
However, formalizing this infinite vertical composition is technically
challenging, and $\CC$ may not admit such countable products. We therefore
instead require that all finite initial segments of the sequence match using a
{\em truncation operation}.
\begin{definition}
  The \emph{$n$th truncation} of a stateful morphism sequence
  $(i,\seq s):\seq X\arrow\seq Y$ is the $\CC$-morphism
  \begin{align*}
    \Tc_n(i,\seq s)
    &\teq
      U(i^v \vc \seq s_0 \vc \cdots \vc \seq s_n\vc !_{\nxt \seq s_n}^v)
      :
      \prod_{k=0}^n\seq X_k\arrow \prod_{k=0}^n\seq Y_k.
  \end{align*}
\end{definition}

Graphically, $\Tc_n(i,\seq s)$ is the underlying morphism of the vertical
composite 2-cell depicted in \cref{fig:sms} right.\footnote{In \cref{fig:sms},
the 2-cells on the right have been drawn with common horizontal 1-cells
and vertical 1-cells composed with $\times$ to indicate the 2-cells have been
composed vertically, whereas on the left the 2-cells are separate since
the full infinite vertical composition may not be possible.}

\begin{definition}
  Two stateful morphism sequences
  $(i,\seq s),(j,\seq t):\seq X\arrow\seq Y$ are \emph{extensionally
    equal} iff $\Tc_\bul(i, \seq s) = \Tc_\bul(j, \seq t)$.
\end{definition}
It is easy to verify that extensional equality between stateful morphism
sequences is an equivalence relation.

The state sequences of extensionally equivalent stateful morphisms sequences
can be different, which is good because it matches our intention and bad
because it can be harder to decide whether two computation sequences are
equal. Comparing truncations is always possible, but sometimes technically
difficult. The following lemma has proven a useful method for establishing
extensional equality.

\begin{lemma}[Shim lemma]\label{lem:shim}\footnote{A shim is a little
    piece of material used to align two items, such as a sliver of wood
    between a door frame and surrounding wall studs. In this case, $\seq b$ is
    the shim, and it adjusts the state spaces of the two stateful morphism
    sequences.}  Suppose $(i, \seq s),(j, \seq t):\seq X\arrow\seq Y$ are
    stateful morphism sequences, and $\seq b$ is a sequence of $\CC$-morphisms
    such that $\seq b_\bul:\prv(\seq s_\bul)\arrow \prv(\seq t_\bul)$,
  \begin{displaymath}
    \seq b_0\circ i = j,\text{ and } \seq s_\bul\vc \seq b_{\bul+1}^v = \seq b_\bul^v \vc \seq t_\bul.
  \end{displaymath}
  Then $(i, \seq s)$ and $(j, \seq t)$ are extensionally equivalent.
\end{lemma}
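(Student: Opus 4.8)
The plan is to show $\Tc_n(i,\seq s) = \Tc_n(j,\seq t)$ for every $n$ by exhibiting the shim sequence $\seq b$ as a 2-cell-level isomorphism between the truncated vertical composites, which then disappears when we pass to underlying morphisms because the outermost shim gets absorbed by the terminal map $!$. Concretely, I would first observe that the hypothesis $\seq b_\bul^v \vc \seq t_\bul = \seq s_\bul \vc \seq b_{\bul+1}^v$ is exactly the statement that inserting $\seq b_k^v$ before $\seq t_k$ can be "slid through" the tile to become $\seq b_{k+1}^v$ after $\seq s_k$. So I would prove, by induction on $n$, the intermediate identity
\begin{displaymath}
  i^v \vc \seq s_0 \vc \cdots \vc \seq s_n = j^v \vc \seq t_0 \vc \cdots \vc \seq t_n \vc (\seq b_{n+1})^v
\end{displaymath}
as an equation of 2-cells in $\sq\CC$. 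The base case $n=0$ uses $\seq b_0 \circ i = j$ (which gives $i^v = j^v \vc \seq b_0^v$, since precomposing the state-only 2-cell $i^v$ by $\seq b_0^v$ corresponds to composing the underlying $\CC$-morphisms $\seq b_0 \circ i$) together with one application of the sliding hypothesis at index $0$. The inductive step appends $\seq s_{n+1}$ on the right of both sides, then slides $\seq b_{n+1}^v$ rightward past $\seq s_{n+1}$ using the hypothesis at index $n+1$, turning the right-hand side into $j^v \vc \seq t_0 \vc \cdots \vc \seq t_{n+1} \vc \seq b_{n+2}^v$ as required; this is just associativity of $\vc$ plus one rewrite.

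Next I would postcompose both sides of the index-$n$ identity vertically by $!^v_{\nxt \seq s_n}$ on the left and $!^v_{\nxt \seq t_n}$ on the right, and observe that on the right the composite $\seq b_{n+1}^v \vc (!_{\nxt \seq t_n})^v$ equals $(!_{\nxt \seq s_n})^v$, because $!_{\nxt \seq t_n} \circ \seq b_{n+1} = !_{\nxt \seq s_n}$ by uniqueness of maps into the terminal object $1$ (recall $\nxt \seq s_n = \prv \seq s_{n+1}$ is the source of $\seq b_{n+1}$ and $\nxt \seq t_n = \prv \seq t_{n+1}$ its target). Hence
\begin{displaymath}
  i^v \vc \seq s_0 \vc \cdots \vc \seq s_n \vc (!_{\nxt \seq s_n})^v = j^v \vc \seq t_0 \vc \cdots \vc \seq t_n \vc (!_{\nxt \seq t_n})^v
\end{displaymath}
as 2-cells, and applying the underlying-morphism functor $U$ gives exactly $\Tc_n(i,\seq s) = \Tc_n(j,\seq t)$. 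Since $n$ was arbitrary, $(i,\seq s)$ and $(j,\seq t)$ are extensionally equivalent.

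The only place demanding care — the main obstacle — is bookkeeping the interaction between the two composition orders and the "shim" notation from \cref{lem:adjust}: the identity $\seq b_\bul^v \vc \seq t_\bul = \seq s_\bul \vc \seq b_{\bul+1}^v$ must be read with $\vc$ taken in relational order, so that $\seq b_k^v$ genuinely precedes $\seq t_k$ in time and $\seq b_{k+1}^v$ genuinely follows $\seq s_k$; getting a sign or order wrong here collapses the argument. I would double-check this by unfolding both sides to their underlying $\CC$-morphisms via Definition~\ref{def:double_cat} and confirming they agree with the intended string-diagram picture (the shim $\seq b_{k+1}$ sits on the state wire between layer $k$ and layer $k+1$). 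Everything else is routine: associativity and functoriality of $\hc,\vc$, the fact that $(-)^v$ is a functor on state-only data, and terminality of $1$.
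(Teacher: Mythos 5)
Your strategy is exactly the paper's: prove by induction a shimmed version of the truncation identity, then absorb the trailing shim into the terminal map by terminality of $1$. But as written your intermediate identity places the shim on the wrong side and is ill-typed. The morphism $\seq b_{n+1}$ has source $\prv(\seq s_{n+1})=\nxt(\seq s_n)$ and target $\prv(\seq t_{n+1})=\nxt(\seq t_n)$, so the composite $\seq t_n\vc \seq b_{n+1}^v$ on your right-hand side does not exist: it would need $\prv(\seq b_{n+1}^v)=\nxt(\seq t_n)$, whereas $\prv(\seq b_{n+1}^v)=\nxt(\seq s_n)$, and these differ in general (that the state spaces differ is the whole point of the shim). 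The shim must be appended to the $\seq s$-composite, i.e.\ the identity to prove by induction is
\begin{displaymath}
  i^v \vc \seq s_0 \vc \cdots \vc \seq s_n \vc \seq b_{n+1}^v \;=\; j^v \vc \seq t_0 \vc \cdots \vc \seq t_n ,
\end{displaymath}
which is (up to the trailing $!^v$) the paper's induction target. The same left/right slip appears in your base case: since $\phi^v\vc\psi^v=(\psi\circ\phi)^v$ in relational order, the hypothesis $\seq b_0\circ i=j$ translates to $i^v\vc\seq b_0^v=j^v$, not to $i^v=j^v\vc\seq b_0^v$, whose underlying composite $\seq b_0\circ j$ does not even typecheck. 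This is ironic given that your final paragraph correctly identifies the composition-order bookkeeping as the main hazard and correctly reads the sliding hypothesis ($\seq b_{k+1}^v$ \emph{follows} $\seq s_k$) --- you then violate that reading in the displayed identity.

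The error is a consistent mirror image rather than a missing idea, so the repair is mechanical: the base case becomes $i^v\vc\seq s_0\vc\seq b_1^v=i^v\vc\seq b_0^v\vc\seq t_0=j^v\vc\seq t_0$; the inductive step rewrites $\seq s_{n+1}\vc\seq b_{n+2}^v$ as $\seq b_{n+1}^v\vc\seq t_{n+1}$ and applies the hypothesis to the prefix; and your absorption step $\seq b_{n+1}^v\vc(!_{\nxt\seq t_n})^v=(!_{\nxt\seq t_n}\circ\seq b_{n+1})^v=(!_{\nxt\seq s_n})^v$ --- which is correct, but must be applied on the $\seq s$-side where the shim actually sits --- then yields $\Tc_n(i,\seq s)=\Tc_n(j,\seq t)$ for all $n$, exactly as in the paper. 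One smaller point: $\seq b$ need not be invertible and your argument nowhere uses invertibility, so the phrase ``2-cell-level isomorphism'' in your opening sentence should be dropped.
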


\begin{proof}
  Show by induction that
  $$i^v\vc\seq s_0\vc\cdots\vc\seq s_n\vc \seq b_{n+1}^v\vc !_{\nxt{\seq
        t_n}}^v = j^v\vc\seq t_0\vc\cdots\vc \seq t_n\vc !_{\nxt{\seq
        t_n}}^v$$
\end{proof}

Unrolling and truncation are related operations,
and in fact we can extend unrolling to general stateful morphism sequences.
\begin{definition}
  Let $(i,\seq f):\seq A\arrow\seq B$ be a stateful morphism
  sequence. Its {\em $k$-th unrolling} is the $k$th projection of the
  $k$th truncation:
  $ \Un_k(i,\seq f)\teq \pi_k\circ \Tc_k(i,\seq f):\prod_{n=0}^k\seq
  A_n\arrow \seq B_k $.
\end{definition}
For instance, the recurrently defined functions $\phi_k$ in
\cref{sec:intro} are unrollings of a certain stateful morphism sequence
involving $\phi$ and $i$.

\subsection{Category of Causal Morphisms}

We are ready to construct our category of causal morphisms using stateful
morphism sequences and extensional equality between them.

\begin{definition}\label{def:idcompseq}
  The {\em identity stateful morphism sequence} $\id_{\seq X}$ is
  $(\id_1,\seqof{(\id_{\seq X_\bul})^h})$ for all
  $\seq X \in |\CC|^\Nat$.

  The \emph{composition} $(i,\seq s)\circ(j,\seq t)$ of stateful morphism sequences
 $(i,\seq s):\seq Y\arrow\seq Z$ and $(j,\seq t):\seq X\arrow\seq Y$
  is
  \begin{displaymath}
    (i,\seq s)\circ(j,\seq t)\teq (\langle j, i\rangle, \seqof{\gc{\seq s} \hc \gc{\seq t}}):\seq X\arrow\seq Z.
  \end{displaymath}
\end{definition}%
As usual, we may denote $\id_\seq X$ by $\seq X$.

In our ``tower of 2-cells'' representation, the composition of
stateful morphism sequences is in \cref{fig:comp}.
\begin{figure}[h!]
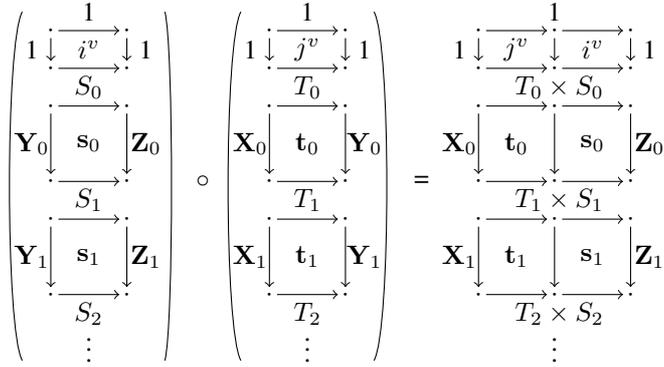

  \ctikzfig{sms_composition}
  \caption{Composition of stateful morphism sequences}
  \label{fig:comp}
\end{figure}
Note the state sequence of the composite is the componentwise product
of the original state sequences.

\begin{lemma}\label{lem:sms_cat}
  Composition of stateful morphism sequences is well-defined on
  extensional equivalence classes. Further, (the extensional
  equivalence class of) $\id_{\seq X}$ is the unit for the composition
  operation.
\end{lemma}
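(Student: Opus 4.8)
The plan is to verify both claims using the Shim lemma (Lemma~\ref{lem:shim}), which reduces extensional equality to exhibiting a compatible family of state-space morphisms. Establishing well-definedness of composition requires showing that if $(i,\seq s)$ is extensionally equivalent to $(i',\seq s')$ and $(j,\seq t)$ is extensionally equivalent to $(j',\seq t')$, then $(i,\seq s)\circ(j,\seq t)$ is extensionally equivalent to $(i',\seq s')\circ(j',\seq t')$. A key subtlety here is that extensional equality was defined via equality of truncations, not via a shim, so the hypothesis does not directly hand us shim sequences $\seq b,\seq c$ between the factors. I would therefore first prove the general fact that $(i,\seq s)$ extensionally equivalent to $(i',\seq s')$ implies $\Tc_\bul$ agree, then argue at the level of truncations: by functoriality of $U$ and the interchange law between $\vc$ and $\hc$ in the double category $\sq\CC$, the truncation of a horizontal composite factors through the truncations (or unrollings) of its components in a way governed only by the underlying $\CC$-morphisms. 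Concretely, I expect $\Tc_n\big((i,\seq s)\circ(j,\seq t)\big)$ to be expressible as a $\CC$-morphism built from $\Tc_n(i,\seq s)$, $\Tc_n(j,\seq t)$, and structural product/symmetry maps, so that equality of the component truncations forces equality of the composite truncation.

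For the unit laws, I would instead use the Shim lemma directly. For right identity, consider $(i,\seq s)\circ\id_{\seq Y} = (\langle \id_1, i\rangle, \seqof{\gc{\seq s}\hc (\id_{\seq Y_\bul})^h})$ as a stateful morphism sequence of type $\seq Y \arrow \seq Z$ that we want to compare with $(i,\seq s)$. The state sequence of the composite is $\prv(\seq s_\bul) \times 1$, which differs from $\prv(\seq s_\bul)$ only by a terminal factor, so the natural shim is $\seq b_\bul = \pi_0 : \prv(\seq s_\bul)\times 1 \to \prv(\seq s_\bul)$ (using strictness, this is essentially an identity-up-to-the-$1$-factor). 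I would then check the two shim conditions: $\seq b_0 \circ \langle \id_1, i\rangle = i$, which is immediate, and $(\gc{\seq s}\hc(\id_{\seq Y})^h)_\bul \vc \seq b_{\bul+1}^v = \seq b_\bul^v \vc \seq s_\bul$, which reduces to the fact that horizontally composing with $(\id_{\seq Y})^h$ does nothing to the underlying morphism except introduce and then project away a trivial state component. The left identity law $\id_{\seq Z}\circ(i,\seq s)$ is symmetric, with shim $\pi_1$.

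The main obstacle, I expect, is the well-definedness argument rather than the unit laws: making precise the claim that $\Tc_n$ of a horizontal composite is a definable function of the $\Tc_n$'s of the factors. This requires carefully unwinding the definitions of $\hc$ and of vertical composition inside $\sq\CC$, and using the interchange law (middle-four exchange) for double categories to commute the vertical truncation composition past the horizontal composition. One has to be careful that the state-threading wires for $(i,\seq s)$ and $(j,\seq t)$ remain disentangled through all $n$ layers; the symmetry maps $\sigma$ in the definition of $U(g\hc f)$ are exactly what keep the two state strands separate, so the bookkeeping is routine but must be done. Once that structural identity is in hand, substituting equal truncations for the factors yields equal truncations for the composite, and an entirely analogous (in fact easier) argument handles the initial state morphisms $\langle j, i\rangle$ versus $\langle j', i'\rangle$.
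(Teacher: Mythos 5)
The paper itself omits the proof of this lemma, so there is no official argument to compare against; judged on its own terms, your proposal is correct and is the natural way to proceed. Your instinct on the one genuinely delicate point is right: extensional equivalence hands you equal truncations, not a shim, so well-definedness of composition cannot be routed through Lemma~\ref{lem:shim} and must be argued at the level of $\Tc_n$. The identity you leave as an expectation can be pinned down exactly and is cleaner than ``some definable function'': using the interchange law of the double category $\sq\CC$ applied $n$ times, together with $\langle j,i\rangle^v=i^v\hc j^v$ and the analogous splitting of the terminal 2-cell at the bottom, one obtains
\begin{displaymath}
  \Tc_n\bigl((i,\seq s)\circ(j,\seq t)\bigr)=\Tc_n(i,\seq s)\circ\Tc_n(j,\seq t),
\end{displaymath}
i.e.\ truncation is functorial; the symmetries in $U(g\hc f)$ disappear because all horizontal boundary 1-cells of the truncated towers are $1$, so $U(g\hc f)=Ug\circ Uf$ there. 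Well-definedness is then immediate, and note that no separate argument for the initial states is needed --- they are already absorbed into the truncations. For the unit laws your shim is right up to a bookkeeping slip: by the definition of $\hc$, the state sequence of $(i,\seq s)\circ\id_{\seq Y}$ is $\seqof{1\times\prv(\seq s_\bul)}$ rather than $\seqof{\prv(\seq s_\bul)\times 1}$, so the shim is the other projection; under the paper's strictness convention both are identities and the two shim conditions are trivial, exactly as you say. (Associativity would also be needed to justify Definition~\ref{def:caus}, but the lemma does not claim it; the same functoriality of $\Tc_n$ delivers it for free.)
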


\begin{definition}\label{def:caus}
  Given a strict Cartesian category $\CC$, its \emph{causal extension}
  is a category $\caus{\CC}$ where
  \begin{itemize}
  \item objects are $|\CC|^\Nat$, that is, $\NN$-indexed families of
    $\CC$-objects,
  \item morphisms are extensional equivalence classes of stateful
    morphism sequences,
  \item identities and composition are the extensions of those in
    Definition~\ref{def:idcompseq} to the extensional equivalence
    classes by Lemma~\ref{lem:sms_cat}.
  \end{itemize}
\end{definition}
We will justify our use of the word ``causal'' by establishing a 
connection to the
existing notion of causal functions in \cref{causal}, but first we 
establish some properties of $\caus\CC$.

The category $\CC^\Nat$ is naturally included into $\caus\CC$ via the
functor $H:\CC^\Nat\arrow\caus\CC$:
\begin{align*}
  H\seq X&=\seq X,
  &
    H\seq f&=(\id_1,\seqof{\seq f_\bul^h}).
\end{align*}
We call the morphisms in $\caus\CC$ of the form $H\seq f$ {\em stateless
morphisms}, since they can be realized by a stateful morphism sequence with
state sequence $\seqof{1}$.\footnote{This looks like a citation, but it means
the constant sequence consisting of the terminal object of $\CC$ in every
position.}

\begin{proposition}
  $\caus{\CC}$ is Cartesian, and $H$ is finite-product preserving.
\end{proposition}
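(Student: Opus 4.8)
The plan is to build the Cartesian structure on $\caus\CC$ directly from the countable-product structure on $\CC^\Nat$ and transport it along $H$, then verify the universal properties by exhibiting explicit stateful morphism sequences and checking them with the Shim lemma. First I would take the terminal object of $\caus\CC$ to be the constant sequence $\seqof{1}$, i.e.\ $H$ applied to the terminal object of $\CC^\Nat$. For any sequence $\seq X$, the unique map $\seq X \arrow \seqof{1}$ is $H$ of the unique map in $\CC^\Nat$, namely $(\id_1, \seqof{(!_{\seq X_\bul})^h})$; uniqueness up to extensional equality follows because any stateful morphism sequence $(i,\seq s):\seq X \arrow \seqof{1}$ has each $\Tc_n(i,\seq s)$ landing in $\prod_{k=0}^n 1 \cong 1$, so all truncations agree.

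Next, for products, I would set $\seq X \times \seq Y \teq \seqof{\seq X_\bul \times \seq Y_\bul}$ (componentwise product), with projections $H\pi_0$ and $H\pi_1$ where $\pi_0,\pi_1$ are the projections in $\CC^\Nat$. Given stateful morphism sequences $(i,\seq s):\seq Z \arrow \seq X$ and $(j,\seq t):\seq Z \arrow \seq Y$, I would define their pairing to have state sequence $\st(i,\seq s) \times \st(j,\seq t)$, initial state $\tuple{i,j}$, and component 2-cells built by running $\seq s_\bul$ and $\seq t_\bul$ "in parallel" on a shared copy of the $\seq Z_\bul$ input — concretely, precompose a diagonal $\seq Z_\bul \arrow \seq Z_\bul \times \seq Z_\bul$ in the value dimension and then take a suitable rearrangement of $\seq s_\bul \hc (\text{something}) $; more cleanly, one can describe the underlying $\CC$-morphism of each component as $\tuple{U\seq s_\bul, U\seq t_\bul}$ up to reassociation of state and value wires. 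I would then verify $H\pi_0 \circ \langle (i,\seq s),(j,\seq t)\rangle = (i,\seq s)$ and symmetrically, using the Shim lemma with the shim that projects the combined state $\st(i,\seq s)\times\st(j,\seq t)$ onto $\st(i,\seq s)$; the two required shim equations amount to the compatibility of $\hc$, $\vc$, and projections, which is a routine string-diagram check. For uniqueness, given any $(k,\seq u):\seq Z \arrow \seq X\times\seq Y$ with $H\pi_0\circ(k,\seq u) = (i,\seq s)$ and $H\pi_1\circ(k,\seq u)=(j,\seq t)$, I would show $(k,\seq u)$ is extensionally equal to the pairing by comparing truncations: since products in $\CC$ are computed pointwise and $\Tc_n$ commutes with postcomposition by the stateless $H\pi_0, H\pi_1$, the truncation $\Tc_n(k,\seq u)$ is determined by its two components, which are forced.

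Finally I would observe that $H$ preserves these finite products essentially by construction: $H(\seq X\times\seq Y) = \seqof{\seq X_\bul\times\seq Y_\bul} = H\seq X\times H\seq Y$ on objects, $H$ sends the $\CC^\Nat$-projections to the chosen projections of $\caus\CC$, and $H$ of a pairing is the pairing of the $H$-images because both are stateless and agree componentwise. The main obstacle, I expect, is not any conceptual difficulty but the bookkeeping in defining the pairing's component 2-cells so that the state wires of $(i,\seq s)$ and $(j,\seq t)$ are threaded independently while the value input $\seq Z_\bul$ is shared: one must insert a diagonal on values, route it through both sub-2-cells, and then permute the resulting product of states into the claimed form $\prv\seq s_\bul \times \prv\seq t_\bul$. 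Getting the symmetries and associators (implicitly, since we work strictified) right so that the Shim-lemma equations hold on the nose is the only place care is needed; everything else reduces to the fact that $\CC$ — hence $\CC^\Nat$, hence each homset-level computation — is Cartesian, together with functoriality of $\Tc_\bul$ under stateless postcomposition.
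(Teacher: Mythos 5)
Your construction coincides with the paper's: the terminal object $\seqof 1$, the componentwise product $\seqof{\seq X_\bul\times\seq Y_\bul}$, and the projections $H\seqof{\pi_n^{\seq X_\bul,\seq Y_\bul}}$ are exactly the structure the paper declares, and your pairing via parallel composition of the two state-threads plus a diagonal on values is what the paper later packages as the cross composition $\cc$. You supply more of the verification (uniqueness via truncations and the Shim lemma) than the paper's proof, which merely exhibits the structure, but the approach is essentially the same and correct.
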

\begin{proof}
  In $\caus\CC$, the final object is $\seqof 1$ and the final map from
  $\seq X$ is $H\seqof{\final{\seq X_\bul}}$.  Products and projection
  are also componentwise: our chosen $\caus\CC$ product
  $\seq X\times\seq Y$ is the sequence
  $[\gc{\seq X}\times\gc{\seq Y}]$ of $\CC$ products, with
  $\pi_n \teq H\seqof{\pi_n^{\seq X_\bul,\seq Y_\bul}}$ for
  $n \in \{0, 1\}$.
\end{proof}

\subsection{Morphisms in $\caus\Set$ and Causal Functions}

We claim that morphisms of $\caus\CC$ represent {\em causal} computations,
whose outputs depend only on past inputs and states. To justify this claim, we
compare $\Set$-theoretic causal functions and morphisms in $\caus{\Set}$. For
this, we need a precise definition of causality for functions on sequences,
which we adapt from \cite{DBLP:journals/corr/HansenKR16}. First, for $\seq x,
\seq y\in A^\Nat$, by $\seq x\equiv_n\seq y$ we mean $\seq x$ and $\seq y$
match in the first $n$ positions, that is, $\seq x_i=\seq y_i$ holds for any
$i\le n$.
\begin{definition}[\cite{DBLP:journals/corr/HansenKR16}]
  Let $A$ and $B$ be sets. A function
  \mbox{$f:A^\Nat\arrow B^\Nat$} is {\em causal} if for any $\seq x,\seq y\in A^\Nat$,
  \begin{displaymath}
    \fa{n\in\Nat}
    \seq x\equiv_n\seq y\implies
    f(\seq x)\equiv_nf(\seq y).
  \end{displaymath}
\end{definition}

The following theorem states that $\caus\Set$ characterises causal
functions on streams.
\begin{theorem}\label{causal}
  The homset $\caus\Set(\seqof A,\seqof B)$ bijectively corresponds to
  the set of causal functions from $A^\Nat$ to $B^\Nat$.
\end{theorem}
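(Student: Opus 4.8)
The plan is to build the bijection by hand, going from stateful morphism sequences to functions, and then check well-definedness, injectivity, and surjectivity.

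Fix sets $A,B$ and write $\seq S_\bul\teq\st(i,\seq s)$ for the state sequence, so a stateful morphism sequence $(i,\seq s):\seqof A\arrow\seqof B$ is exactly a family of maps $\seq s_k\in\Set(\seq S_k\times A,\seq S_{k+1}\times B)$ together with $i:1\arrow\seq S_0$. To $(i,\seq s)$ associate the function
\[
  \Psi(i,\seq s):A^\Nat\arrow B^\Nat,\qquad
  \Psi(i,\seq s)(\seq x)_k\teq\Un_k(i,\seq s)(\seq x_0,\dots,\seq x_k),
\]
the stream produced by feeding $\seq x$ through the machine and reading off outputs. Because the $k$-th component of this stream depends only on $\seq x_0,\dots,\seq x_k$, we get $\seq x\equiv_n\seq y\implies\Psi(i,\seq s)(\seq x)\equiv_n\Psi(i,\seq s)(\seq y)$ for free, so $\Psi(i,\seq s)$ is causal. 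Moreover $\Psi$ is constant on extensional equivalence classes: $\Un_k=\pi_k\circ\Tc_k$ by definition, so $\Tc_\bul(i,\seq s)=\Tc_\bul(j,\seq t)$ gives $\Un_k(i,\seq s)=\Un_k(j,\seq t)$ for all $k$ and hence $\Psi(i,\seq s)=\Psi(j,\seq t)$. Thus $\Psi$ descends to a map $\caus\Set(\seqof A,\seqof B)\arrow\{\text{causal functions }A^\Nat\arrow B^\Nat\}$.

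The one real calculation I need is a compatibility law between truncation and unrolling: for $k\le n$,
\[
  \pi_k\circ\Tc_n(i,\seq s)=\Un_k(i,\seq s)\circ\tuple{\pi_0,\dots,\pi_k}:A^{n+1}\arrow B,
\]
i.e.\ running the machine for $n{+}1$ ticks and reading tick $k$ coincides with running it for just $k{+}1$ ticks---the surplus layers $\seq s_{k+1}\vc\cdots\vc\seq s_n\vc\final{\nxt\seq s_n}^v$ affect only the state and the outputs at ticks beyond $k$. This is a routine unwinding of the vertical-composition formula of \cref{def:double_cat} (induction on $n-k$), and can be read straight off the truncation string diagram on the right of \cref{fig:sms}. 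Granting it, injectivity follows: suppose $\Psi(i,\seq s)=\Psi(j,\seq t)=:f$; then each $\Un_k(i,\seq s)$ is recovered from $f$ alone via $\Un_k(i,\seq s)(a_0,\dots,a_k)=f(\widehat a)_k$ for any $\widehat a\in A^\Nat$ extending $(a_0,\dots,a_k)$ (the right-hand side being independent of $\widehat a$ since $f$ is causal), so by the compatibility law every $\Tc_n(i,\seq s)$ is determined by $f$; hence $\Tc_\bul(i,\seq s)=\Tc_\bul(j,\seq t)$ and $(i,\seq s),(j,\seq t)$ are extensionally equal.

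For surjectivity, given a causal $f:A^\Nat\arrow B^\Nat$ I construct the canonical machine remembering its past inputs: $\seq S_k\teq A^k$ (so $\seq S_0=1$), $i\teq\id_1$, and
\[
  \seq s_k\colon A^k\times A\arrow A^{k+1}\times B,\qquad
  \seq s_k\big((a_0,\dots,a_{k-1}),a_k\big)\teq\big((a_0,\dots,a_k),\,g_k(a_0,\dots,a_k)\big),
\]
where $g_k(a_0,\dots,a_k)\teq f(\widehat a)_k$ for any $\widehat a$ extending $(a_0,\dots,a_k)$---well-defined by causality of $f$, and vacuous when $A=\emptyset$. An easy induction on $k$ shows the state of $(i,\seq s)$ after consuming $\seq x_0,\dots,\seq x_{k-1}$ is precisely $(\seq x_0,\dots,\seq x_{k-1})$, so $\Un_k(i,\seq s)(\seq x_0,\dots,\seq x_k)=g_k(\seq x_0,\dots,\seq x_k)=f(\seq x)_k$, i.e.\ $\Psi(i,\seq s)=f$. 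I expect the truncation--unrolling compatibility law to be the only part needing genuine care; once it is in place everything else is dictated by the definitions.
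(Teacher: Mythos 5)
Your proof is correct and follows essentially the same route as the paper's: the same canonical input-remembering machine witnesses surjectivity, and the same truncation/unrolling computation drives injectivity. The only difference is one of direction---you construct the bijection as a map from machines to functions via $\Un_k$ and prove it well-defined, injective, and surjective, whereas the paper constructs the inverse map $f\mapsto(\id_1,\mathbf f)$ from causal functions to machines and verifies injectivity and surjectivity of that instead; your explicit compatibility law $\pi_k\circ\Tc_n=\Un_k\circ\tuple{\pi_0,\dots,\pi_k}$ is the general form of the inductions the paper carries out for its particular machines.
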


The proof can be found in the appendix.

\subsection{The Category $\causz\CC$ and Deterministic Mealy Machines}

The input, output, and state types for a $\caus\CC$ morphisms can vary over
time. This is a crucial property to capture all causal functions, as seen in
the proof of \cref{causal}. However, the computational models we mentioned in
the introduction, like Mealy machines, are more regular, having fixed input,
output, and state types, and additionally executing the same function at each
time step. Thus it may appear we have overgeneralized. Luckily, we can recover
these regular causal functions in a subcategory of $\caus\CC$:
\begin{definition}
  The subcategory $\causz\CC$ of $\caus\CC$ has:
  \begin{itemize}
    \item objects of the form $\seqof X$ for some $X\in\CC$, and
    \item morphisms the (extensional equivalence classes of) stateful morphism sequences of the form
      $(i,\seqof f)$ for some 2-cell $\cell f S X S Y$.
  \end{itemize}
\end{definition}
It is easy to check that this restricted class of morphisms is closed under
the $\caus\CC$-composition, hence $\causz\CC$ is a well-defined subcategory.
We note the Cartesian structure of $\caus\CC$ restricts to $\causz\CC$.

\begin{proposition}
  The category $\causz\CC$ is Cartesian, and the functor
  $H_0:\CC\arrow\causz\CC$ is finite-product preserving.
  \begin{align*}
    H_0X&=\seqof X & H_0f&=(\id_1,\seqof{f^h}).
  \end{align*}
\end{proposition}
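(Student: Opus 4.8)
The plan is to derive everything from the Cartesian structure already constructed on $\caus\CC$: I will check that its terminal object, binary products, projections, final maps and pairings all stay inside the subcategory $\causz\CC$, so that $\causz\CC$ becomes Cartesian with the restricted structure, and then read off product preservation of $H_0$. Here I use the general fact that if $\DD$ is a subcategory of $\EE$, then a limit in $\EE$ is again a limit in $\DD$ as soon as the limiting object, the limiting cone, and every mediating morphism out of a cone living in $\DD$ all lie in $\DD$. For the terminal object: the $\caus\CC$-terminal object $\seqof 1$ is of the form $\seqof X$, hence a $\causz\CC$-object, and the unique $\caus\CC$-map $\seqof Z\arrow\seqof 1$ is $H\seqof{\final{Z}}=(\id_1,\seqof{(\final{Z})^h})=H_0(\final{Z})$, a regular stateful morphism sequence; so $\seqof 1$ is terminal in $\causz\CC$ as well.

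For binary products, the chosen $\caus\CC$-product $\seqof X\times\seqof Y$ is $\seqof{X\times Y}$, again of the form $\seqof Z$, and its projections $\pi_n=H\seqof{\pi_n^{X,Y}}=(\id_1,\seqof{(\pi_n^{X,Y})^h})=H_0(\pi_n^{X,Y})$ are regular. The one substantive point is that the pairing of two $\causz\CC$-morphisms admits a regular representative. Given $(i,\seqof f):\seqof W\arrow\seqof X$ with underlying morphism $f:S\times W\arrow S\times X$ and $(j,\seqof g):\seqof W\arrow\seqof Y$ with $g:T\times W\arrow T\times Y$, I will take the 2-cell $\cell{h}{S\times T}{W}{S\times T}{X\times Y}$ whose underlying $\CC$-morphism duplicates the $W$-input, runs $f$ on the $(S,W)$-copy and $g$ on the $(T,W)$-copy in parallel, and regroups the outputs into $(S\times T)\times(X\times Y)$; together with the initial state $\tuple{i,j}:1\arrow S\times T$ this yields the regular stateful morphism sequence $(\tuple{i,j},\seqof{h})$. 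To see it represents the pairing, I compute $\pi_0\circ(\tuple{i,j},\seqof{h})$ with the composition formula of Definition~\ref{def:idcompseq}: up to strictification this is the regular sequence with state $S\times T$ and component $((S\times T)\times\pi_0^{X,Y})\circ h$, and the Shim lemma (Lemma~\ref{lem:shim}) applied with shim $\pi_0^{S,T}:S\times T\arrow S$ shows it is extensionally equal to $(i,\seqof f)$; symmetrically for $\pi_1$, with shim $\pi_1^{S,T}$. By uniqueness of pairings in the Cartesian category $\caus\CC$, the pairing $\tuple{(i,\seqof f),(j,\seqof g)}$ is the class of $(\tuple{i,j},\seqof{h})$, which lies in $\causz\CC$. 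Hence $\seqof{X\times Y}$ with the above projections is a product in $\causz\CC$, and $\causz\CC$ is Cartesian with structure inherited from $\caus\CC$.

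Finally, $H_0$ preserves finite products: $H_0 1=\seqof 1$ is the terminal object of $\causz\CC$, and $H_0(X\times Y)=\seqof{X\times Y}=\seqof X\times\seqof Y=H_0 X\times H_0 Y$ with $H_0$ carrying each $\pi_n^{X,Y}$ to the $\causz\CC$-projection $\pi_n$ (both equal $(\id_1,\seqof{(\pi_n^{X,Y})^h})$); since $H_0$ is a functor it then also preserves pairings. The only step with genuine content is the Shim-lemma verification that the parallel 2-cell $h$ really represents the pairing — this is where I expect the work to be, since it requires unwinding the horizontal- and vertical-composition formulas of $\sq\CC$ (the point being that projecting the behaviour of $h$ onto the $S$-component recovers exactly $f$, and onto $T$ recovers $g$); everything else is bookkeeping against the Cartesian structure of $\caus\CC$ established in the previous proposition.
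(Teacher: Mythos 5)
Your proposal is correct and follows the same route the paper takes: the paper simply notes that the Cartesian structure of $\caus\CC$ (terminal object $\seqof 1$, componentwise products, projections $H\seqof{\pi_n}$) restricts to $\causz\CC$, and your only substantive addition --- exhibiting a regular representative for the pairing via the parallel 2-cell $h$ and verifying it with the Shim lemma using the projections $\pi_n^{S,T}$ as shims --- is exactly the content the paper leaves implicit (your $h$ is $(\seqel f\cc\seqel g)\hc\Delta_W^h$, matching the paper's later lemma that $(i,\seq f)\times(j,\seq g)=(\tuple{i,j},\seqof{\seqel f\cc\seqel g})$). No gaps.
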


Morphisms of $\causz\Set$ may be identified as the causal functions that can
be computed by {\em deterministic Mealy machines}. Suppose $(i,\seqof
f):\seqof X\arrow\seqof Y$ is a morphism in $\causz\Set$. The set $S=\cod i$
is the set of states of the Mealy machine, $i:1\arrow S$ is the initial state,
and the function $f:S\times X\arrow S\times Y$ is the deterministic
transition-and-output function computing the next state and output from the
current state and input. The composition of morphisms in $\causz\CC$
corresponds to the {\em series (cascade) composition} of Mealy machines.

One useful operation on stateful morphism sequences is {\em unrolling}.
\begin{definition}
  Let $(i,\seq f):\seq A\arrow\seq B$ be a stateful morphism
  sequence. Its {\em $k$-th unrolling} is the $k$th projection of the
  $k$th truncation:
  $ \Un_k(i,\seq f)\teq \pi_k\circ \Tc_k(i,\seq f):\prod_{n=0}^k\seq
  A_n\arrow \seq B_k $.
\end{definition}
For instance, the recurrently defined functions $\phi_k$ in
\cref{eq:unr} in \cref{sec:intro} are unrollings:
$\phi_k=\Un_k(i,\seqof\phi)$.

Note that the truncation operation $\Tc$ can be extended to
$\caus\CC$-morphisms, as it is well-defined on extensional equivalence classes.

\section{Delayed Trace Operator}

The category $\caus{\CC}$ carries interesting structure that may not be
present in $\CC$---it has a {\em delayed trace operator}.  This is related to
Joyal et al.'s {\em trace operator} \cite{jsv}, which we briefly recall here.
The trace operator is a structure on braided monoidal categories, and is a
collection of functions $tr^S:\CC(S\ox X,S\ox Y)\arrow\CC(X,Y)$. In the
language of string diagrams, this operation is understood to form a feedback
loop at a specified pair of ports:
\begin{displaymath}
  \infer{tr^S(f):X\arrow Y \quad \tikzfig{trace-after}}{
    f:S\ox X\arrow S\ox Y \quad \tikzfig{trace-before}
  }
\end{displaymath}
Interpreted as string diagrams, the equational axioms of the trace operator
capture intuitively equivalent diagrams involving feedback loops.  Two
characteristic axioms are {\em yanking} (left) and {\em dinaturality} (right):
\begin{center}
\tikzfig{trace-yank}
\tikzfig{trace-dinat}
\end{center}

We will show the delayed trace operator, found in $\caus\CC$, satisfies the
trace operator axioms except yanking and dinaturality. In fact, the delayed
trace of the symmetry yields the morphism that acts as a {\em delay gate}.
Therefore the delayed trace (as its name suggests) may be naturally regarded
as an operation that forms a feedback loop {\em and} inserts the delay gate in
the loop path, depicted as follows:

\ctikzfig{simplefeedback2}

The half-round node is the delay gate, and is filled with its initial state
$p$. The delayed trace operator echoes a principle of synchronous circuit
design: ``all feedback loops should contain a register''.

Our first step towards a delayed trace operator on $\caus\CC$ is to introduce
an operation on 2-cells that converts parts of the value types into the state
space of a computation step.
\begin{definition}\label{def:conversion}
  Let $\cell f S {T\times X}{S'}{T'\times Y}$ be a 2-cell in
  $\sq{\CC}$.  The \emph{value-to-state conversion of $f$ at
    $(T, T')$} is another 2-cell, denoted $\valtostate{f}[T][T']$,
  with the same underlying morphism but different source and target
  1-cells:
  $\cell{\valtostate{f}[T][T']}{S\times T}{X}{S'\times T'}{Y}$.

  When the objects $(T, T')$ involved in the conversion are clear from
  context, we drop them from the notation and write $\valtostate{f}$
  for $\valtostate{f}[T][T']$.
\end{definition}

The value-to-state conversion is depicted inside the tile:

\tikzfig{conversion}

The pointwise application of this operation to all the 2-cells in
a stateful morphism sequence is the delayed trace operator. 
\begin{definition}\label{def:dtr}
  Suppose
  $(i,\seq s):\seq T\times\seq X\to \chop{\seq T}\times\seq Y $ is a
  morphism in $\caus\CC$. (Recall
  $\chop{\seq T}=\seqof{\seq T_{\bul + 1}}$.)  The \emph{delayed trace
    of $\seq s$ along $\seq T$ with an initial state
    $p:1\arrow\seq T_0$} is the following morphism in
  $\caus\CC(\seq X, \seq Y)$:
  \begin{displaymath}
    \tr{\seq T} p {(i, \seq s)}
    \teq
    (\langle i, p\rangle,\seqof{\valtostate{\seq s_\bul}[\seq T_\bul][\seq T_{\bul+1}]}).
  \end{displaymath}
\end{definition}

Note this operation is well-defined on extensional equivalence classes of
stateful morphism sequences, and therefore is an operation on $\caus\CC$
morphisms. The delayed trace of $\caus\CC$ already differs from the standard
monoidal trace in two ways: first, the domain and codomain types that are
bound ($\seq T$ and $\chop\seq T$) do not match, and second, the delayed trace
also requires the specification of a global element $p$ called the initial
state. Despite these differences, many of the trace axioms holds for the
delayed trace operator.

\begin{proposition}\label{prop:dtr_props}
  Suppose
  $(i,\seq s): \seq T\times \seq X\arrow \chop{\seq T}\times \seq Y$
  is a morphism in $\caus\CC$. Suppose $(h,\seq r): \seq Y\to\seq Z$,
  $(j,\seq t): \seq W\to\seq X$ and $(f,\seq p): \seq W\to\seq Z$ are
  other arbitrary morphisms in $\caus\CC$. Five standard axioms of 
  monoidal trace, presented in Figure \ref{fig:trace}, hold of delayed trace.
  \begin{figure*}[htbp]
    \begin{displaymath}
      \begin{array}{lrcl}
        \text{Target naturality}
        & (h,\seq r) \circ \tr{\seq T}{p}(i,\seq s) & = & \tr{\seq T}{p}((\id_{\chop\seq T} \times (h,\seq r)) \circ (i,\seq s)) \\
        \text{Source naturality}
        & \tr{\seq T}p(i,\seq s)\circ (j,\seq t) & = & \tr{\seq T}{p}((i,\seq s)\circ (\id_{\seq T} \times (j,\seq t))) \\
        \text{Superposing}
        & \tr{\seq T}{q}(i,\seq s)\times (f,\seq p)  & = & \tr{\seq T}{q}((i,\seq s)\times (f,\seq p)) \\
        \text{Vanishing 1}
        & \tr{\seqof 1}{\id_1}(i,\seq s) & = & (i,\seq s) \\
        \text{Vanishing $\times$}
        & \tr{\seq V}q(\tr{\seq U}p(i,\seq s)) & =  & \tr{\seqof{\seq V_\bul\times\seq U_\bul}}{\langle q,p\rangle}(i,\seq s)
      \end{array}
    \end{displaymath}
    \caption{Equalities Satisfied by Delayed Trace Operator}
    \label{fig:trace}
  \end{figure*}
\end{proposition}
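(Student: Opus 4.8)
\emph{Proof sketch.} The plan exploits that delayed trace, composition, and binary products in $\caus\CC$ are all computed componentwise on stateful morphism sequences — by value-to-state conversion, horizontal composition of $2$-cells, and the componentwise product of $2$-cells respectively — and that each of these pointwise operations preserves underlying $\CC$-morphisms exactly, altering only the labelling and re-association of the state/value boundary $1$-cells. Since delayed trace and composition are already known to respect extensional equivalence, it suffices to pick a convenient representative for each side of an identity and compare the resulting sequences, using the Shim Lemma (Lemma~\ref{lem:shim}) to absorb any bookkeeping mismatch in the state spaces.

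First I would record the elementary facts about value-to-state conversion that carry the argument: (i) $U(\valtostate{f}) = Uf$; (ii) conversion at $(1,1)$ returns $f$ itself, after the strict identification $S\times 1 = S$; (iii) iterating conversions merges the bound objects, $\valtostate{(\valtostate{f}[T][T'])}[V][V'] = \valtostate{f}[V\times T][V'\times T']$ modulo the strict associator on the state; and (iv) conversion commutes with horizontal composition of $2$-cells and with the product of $2$-cells, again modulo canonical symmetry/associativity isomorphisms acting on the state factors — this last point being exactly the formulas for $\hc$ and $\vc$ in Definition~\ref{def:double_cat} together with naturality of $\sigma$ in $\CC$.

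Given these, \emph{Vanishing~1} is immediate: tracing along $\seqof 1$ with $p = \id_1$ fixes every underlying morphism and replaces each state $\seq S_\bul$ by $\seq S_\bul\times 1 = \seq S_\bul$, so the two sides are literally the same sequence. For the other four, a direct computation using (i), (iii), (iv) shows that the two sides always agree on every underlying $\CC$-morphism and differ only in the order in which the traced block and the state of the other morphism are threaded through: for \emph{target naturality} the state sequences are $(\seq S_\bul\times\seq T_\bul)\times\seq R_\bul$ on one side and $(\seq S_\bul\times\seq R_\bul)\times\seq T_\bul$ on the other, with the two initial states related by the corresponding swap of $\seq T_0$ and $\seq R_0$; similarly for \emph{source naturality}, \emph{superposing}, and \emph{vanishing~$\times$} (the last needing only a reassociation). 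In each case I would apply the Shim Lemma with $\seq b_\bul$ the evident isomorphism of $\CC$, built from $\id_{\seq S_\bul}$, $\sigma$, and associators, reconciling the two state sequences.

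The only real work is checking the two hypotheses of the Shim Lemma for these shims. The first, $\seq b_0\circ i = j$, is a one-line computation on global elements (precisely the swap/reassociation above). The second, $\seq s_\bul\vc \seq b_{\bul+1}^v = \seq b_\bul^v\vc \seq t_\bul$, is the crux: applying $U$ and expanding vertical composition via Definition~\ref{def:double_cat}, both sides become the common underlying morphism pre- and post-composed with reorganisations of the state wires, and equality reduces to naturality of $\sigma$ in $\CC$ (the symmetries hard-wired into $\vc$ absorb the symmetry component of $\seq b_\bul$). I expect this cocycle check — one per axiom, but essentially the same each time — to be the main obstacle; everything else is unwinding definitions and invoking strictness.
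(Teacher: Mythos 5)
Your proposal is correct, and it is essentially the intended argument: the paper in fact omits any explicit proof of Proposition~\ref{prop:dtr_props}, but the technique you use --- compare componentwise representatives, observe that the underlying $\CC$-morphisms agree while the state sequences differ by canonical symmetry/associativity isomorphisms, and discharge the mismatch via the Shim Lemma (Lemma~\ref{lem:shim}) with those isomorphisms as shims --- is precisely the method the paper deploys for the adjacent results (delayed dinaturality and the CD axioms in the appendix). Your identification of the state sequences on each side (e.g.\ $(\seq S_\bul\times\seq T_\bul)\times\seq R_\bul$ versus $(\seq S_\bul\times\seq R_\bul)\times\seq T_\bul$ for target naturality) and of the cocycle condition $\seq s_\bul\vc\seq b_{\bul+1}^v=\seq b_\bul^v\vc\seq t_\bul$ as the only nontrivial check are both accurate.
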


The yanking axiom of the trace operator fails for the delayed trace operator.
Consider the symmetry morphism \mbox{$\sigma_{\seq X,\chop\seq X}:\seq
X\times\chop\seq X\arrow\chop\seq X\times\seq X$} in $\caus\CC$. Define its
delayed trace with an initial state $i:1\arrow\seq X_0$ to be 
$$r_{\seq X}(i)\teq\tr{\seq X}{i}(\sigma_{\seq X,\chop\seq X}):\chop\seq X\arrow\seq X$$

To get a better understanding of $r_{\seq X}(i)$, we first draw
the value-to-state conversion in a single 2-cell in this morphism.

\ctikzfig{yankingsquare}

Doing value-to-state conversion along the whole sequence $\sigma_{\seq X,
\chop\seq X}$ and supplying the initial value $i: 1 \to \seq X_0$ yields:

\ctikzfig{yankingsequence}

We can see that the input at clock tick $k$ is output at clock tick $k+1$.
Therefore, instead of the identity, which is what $r_{\seq X}(i)$ would be if
the yanking axiom held, we have a morphism that operates as a {\em delay
gate}.

The dinaturality axiom of the trace operator also fails for the delayed trace
operator. Dinaturality corresponds to sliding circuits from one end of a
feedback loop to the other, but doing so with a delay gate in the loop affects
the gate's initial state. In digital circuit design, this kind of operation is
called {\em retiming} \cite{Leiserson1991}, and there initial states of
registers is a delicate issue. The delayed trace operator satisfies the
following modified dinaturality property:
\begin{theorem}\label{thm:dinaturality}
  Suppose
  $(i,\seq s):\seq T\times\seq X\arrow\chop{\seq U}\times\seq Y$
  and $(j,\seq g):\seq U\arrow \seq T$ are morphisms in
  $\caus\CC$. For any $u:1\arrow \seq U_0$,
  \begin{align*}
    &\tr{\seq U}{u}(((j,\seq g)\times\seq X)\circ(i,\seq s))\\
    &=\tr{\seq T}{u'}((i,\seq s)\circ((j',\chop\seq g)\times\seq Y))
  \end{align*}
  where $\tuple{j', u'} = U\seq g_0\circ\langle j,u\rangle$.
\end{theorem}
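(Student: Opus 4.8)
The plan is to choose canonical representatives of the two sides and prove them extensionally equivalent via the Shim lemma (\cref{lem:shim}), with a shim built from the underlying morphisms of the components of $(j,\seq g)$, conjugated by an interchange of the two state spaces.

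First I would unfold both sides into explicit stateful morphism sequences, using that Cartesian product with an identity, composition in $\caus\CC$ (\cref{def:idcompseq}), and value-to-state conversion (\cref{def:conversion}) are all computed componentwise on $2$-cells. Writing $P_\bul\teq\prv\seq s_\bul$ and $Q_\bul\teq\prv\seq g_\bul$, the left-hand side $\tr{\seq U}{u}(((j,\seq g)\times\seq X)\circ(i,\seq s))$ has state sequence $\seqof{Q_\bul\times P_\bul\times\seq U_\bul}$, initial state $\tuple{j,i,u}$, and $k$-th $2$-cell the value-to-state conversion along $(\seq U_k,\seq U_{k+1})$ of the horizontal composite that first runs the $2$-cell underlying $\seq g_k$ and then $\seq s_k$; while the right-hand side $\tr{\seq T}{u'}((i,\seq s)\circ((j',\chop\seq g)\times\seq Y))$ has state sequence $\seqof{P_\bul\times Q_{\bul+1}\times\seq T_\bul}$, initial state $\tuple{i,j',u'}$, and $k$-th $2$-cell the value-to-state conversion along $(\seq T_k,\seq T_{k+1})$ of the horizontal composite that first runs $\seq s_k$ and then $\seq g_{k+1}$. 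Here \cref{lem:adjust} and the $(-)^h$/$(-)^v$ calculus keep the various interchange symmetries $\sigma$ produced by horizontal composition in $\sq\CC$ and by $(-)\times\seq X$ under control. Note the two state sequences genuinely differ, which is why, unlike the axioms of \cref{prop:dtr_props}, one cannot equate representatives directly and must go through extensional equivalence.

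Next I would take $\seq b_\bul : Q_\bul\times P_\bul\times\seq U_\bul\arrow P_\bul\times Q_{\bul+1}\times\seq T_\bul$ to be ``swap $Q_\bul$ past $P_\bul$, then apply $P_\bul\times U\seq g_\bul$'' (strict reassociations suppressed), and verify the two hypotheses of \cref{lem:shim}. The initial condition $\seq b_0\circ\tuple{j,i,u}=\tuple{i,j',u'}$ reduces, after the swap, to $\tuple{j',u'}=U\seq g_0\circ\tuple{j,u}$, which is precisely how $j'$ and $u'$ were defined. For the step condition $\seq s^{L}_\bul\vc\seq b_{\bul+1}^v=\seq b_\bul^v\vc\seq s^{R}_\bul$ (writing $\seq s^{L}$, $\seq s^{R}$ for the two $2$-cell sequences just described), I would expand both underlying $\CC$-morphisms via \cref{def:double_cat}: each feeds a state triple together with an input through $\seq g_\bul$, then $\seq s_\bul$, then $\seq g_{\bul+1}$, the only difference being that on the left $\seq g_{\bul+1}$ is supplied by the shim $\seq b_{\bul+1}$ and $\seq g_\bul$ by the traced $2$-cell, whereas on the right these roles are exchanged; so once the interchange symmetries are matched up the two underlying morphisms coincide. \cref{lem:shim} then gives that the two sequences are extensionally equivalent, hence equal in $\caus\CC(\seq X,\seq Y)$.

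I expect the step condition to be the only real work: morally it is just the statement that a box can be slid around the delay gate provided the delay's initial value is passed through that box, but making it precise requires carefully accounting for the copies of $\sigma$ introduced by the horizontal composition in $\sq\CC$, by the Cartesian product $(-)\times\seq X$, and by value-to-state conversion, and checking that they cancel against the swap built into $\seq b_\bul$. A more pedestrian alternative would be to compare the truncations $\Tc_n$ of the two sides by induction on $n$; the Shim lemma is precisely the tool that performs this induction once and for all, so I would use it instead.
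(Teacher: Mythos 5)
Your proof is correct, but it takes a genuinely different route from the paper's. The paper does not invoke the Shim Lemma here: it proves the statement by a bespoke induction on truncations, establishing that the partial vertical composites $U(r_0\vc\cdots\vc r_n)$ and $U(t_0\vc\cdots\vc t_n)$ of the two sides agree only \emph{up to a correction at the open state end} (a symmetry on one side, an application of $U(\seq g_n)$ on the other), and then observing that this discrepancy is projected away when one passes to the $n$th truncation. You instead package the entire induction into one application of \cref{lem:shim}, with the explicit shim $\seq b_\bul = (P_\bul\times U\seq g_\bul)\circ(\sigma_{Q_\bul,P_\bul}\times\seq U_\bul)$. I checked your two hypotheses: the initial condition does reduce exactly to the defining equation $\tuple{j',u'}=U\seq g_0\circ\tuple{j,u}$, and the step condition holds — unwinding both sides on a state--input tuple $(q,p,u,x)$, each computes $(p',\gamma_{\bul+1}(q'',u'),y)$ where $(q'',t)=U\seq g_\bul(q,u)$ and $(p',u',y)=U\seq s_\bul(p,t,x)$, so the interchange symmetries do cancel as you predict. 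Note that the Shim Lemma imposes a \emph{stronger} invariant than the paper's (an exact $2$-cell equality mediated by $\seq b$, rather than equality after projecting away the state), and your step condition is precisely the "auxiliary equality" the paper establishes by hand; since the Shim Lemma is itself proved by the same induction on partial composites, the two arguments share their computational core. What your version buys is uniformity — the same lemma the paper later uses for the CD axioms also handles dinaturality — and a cleaner statement of the invariant; what the paper's version buys is that it never needs to name the shim, at the cost of a correction term that must be argued away at the end. One caveat: like the paper, you have to read the composition symbols in the theorem statement in the order that makes the types work out ($(j,\seq g)\times\seq X$ feeding into $(i,\seq s)$ on the left, $(i,\seq s)$ feeding into $(j',\chop\seq g)\times\seq Y$ on the right); your unfolding of the two representatives shows you have done so correctly.
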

A special case of this modified dinaturality is an abstract version of circuit
retiming, which allows us to commute properly initialized delay gates and
stateless morphisms.
\begin{corollary}
  For any $\seq f:\seq X\arrow\seq Y$ in $\CC^\Nat$, and initial state
  $i:1\arrow\seq X_0$, we have
  $H\seq f\circ r_{\seq X}(i)=r_{\seq Y}(\seq f_0\circ i)\circ H(\chop \seq f)$.
\end{corollary}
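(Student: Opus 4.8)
The plan is to recognize the claimed identity as an instance of the modified dinaturality law (Theorem~\ref{thm:dinaturality}), flanked by two applications each of the naturality equalities of Proposition~\ref{prop:dtr_props} and ordinary naturality of the symmetry. Recall that $r_{\seq X}(i)=\tr{\seq X}{i}(\sigma_{\seq X,\chop\seq X})$, so both sides of the equation are delayed traces of composites built from a symmetry, and the only real content is that sliding $H\seq f$ around the feedback loop shifts the stored initial state from $i$ to $\seq f_0\circ i$.

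First I would push $H\seq f$ into the trace using target naturality, giving
\[
  H\seq f\circ r_{\seq X}(i)=\tr{\seq X}{i}\!\left((\id_{\chop\seq X}\times H\seq f)\circ\sigma_{\seq X,\chop\seq X}\right);
\]
since $\caus\CC$ is symmetric monoidal, naturality of the symmetry rewrites the traced morphism as $\sigma_{\seq Y,\chop\seq X}\circ(H\seq f\times\id_{\chop\seq X})$. This is now exactly the left-hand side of Theorem~\ref{thm:dinaturality} taken with traced morphism $\sigma_{\seq Y,\chop\seq X}\colon\seq Y\times\chop\seq X\to\chop\seq X\times\seq Y$, with the morphism slid around the loop being the stateless morphism $H\seq f=(\id_1,\seqof{\seq f_\bul^h})\colon\seq X\to\seq Y$, and with initial state $i$. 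The key computation is the shifted datum $\langle j',u'\rangle=U\seq g_0\circ\langle j,u\rangle$ appearing in that theorem: since the state sequence of $H\seq f$ is trivial, $j=\id_1$ and $U\seq g_0=\seq f_0$, so $j'=\id_1$, $u'=\seq f_0\circ i$, and $(j',\chop\seq g)=H(\chop\seq f)$. Theorem~\ref{thm:dinaturality} then yields
\[
  H\seq f\circ r_{\seq X}(i)=\tr{\seq Y}{\seq f_0\circ i}\!\left((H(\chop\seq f)\times\id_{\seq Y})\circ\sigma_{\seq Y,\chop\seq X}\right).
\]

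To conclude, I would undo the first two moves on the far side: naturality of the symmetry turns $(H(\chop\seq f)\times\id_{\seq Y})\circ\sigma_{\seq Y,\chop\seq X}$ into $\sigma_{\seq Y,\chop\seq Y}\circ(\id_{\seq Y}\times H(\chop\seq f))$, and then source naturality pulls $H(\chop\seq f)$ out of the trace, leaving $\tr{\seq Y}{\seq f_0\circ i}(\sigma_{\seq Y,\chop\seq Y})\circ H(\chop\seq f)=r_{\seq Y}(\seq f_0\circ i)\circ H(\chop\seq f)$, which is the right-hand side. The main obstacle is not any individual step but the bookkeeping: keeping the $\chop$-shifts straight, matching the source/target objects of the various symmetries, and --- the one genuinely fiddly point --- verifying that the instance of Theorem~\ref{thm:dinaturality} really does produce $\sigma_{\seq Y,\chop\seq X}$ precomposed by $H\seq f\times\id_{\chop\seq X}$ on the one end and postcomposed by $H(\chop\seq f)\times\id_{\seq Y}$ on the other, together with the recomputation of the initial state. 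All of this is routine once the right instance of the dinaturality law is identified; no idea beyond Theorem~\ref{thm:dinaturality} is required.
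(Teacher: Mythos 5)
Your proof is correct and takes essentially the same route as the paper: the corollary is stated there as an immediate special case of Theorem~\ref{thm:dinaturality} with no further argument, and your instantiation (traced morphism the symmetry $\sigma_{\seq Y,\chop\seq X}$, slid morphism the stateless $H\seq f$, with target/source naturality and naturality of the symmetry as bookkeeping) is exactly the verification the paper leaves implicit. Your computation of the shifted initial state $u'=\seq f_0\circ i$ from the trivial state component of $H\seq f$ is also the right reading of the theorem's $\tuple{j',u'}=U\seq g_0\circ\langle j,u\rangle$.
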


The following representation result says that every morphism in $\caus\CC$ can
be obtained as the delayed trace of a stateless morphism.
\begin{theorem}\label{thm:singleregsuffices}
  For any morphism $(i,\seq s)$ in $\caus\CC$, the following equality
  holds:
  \begin{displaymath}
    (i,\seq s)=\tr {\st(i,\seq s)} i (H\seqof{U(\seq s_\bul)})
  \end{displaymath}
\end{theorem}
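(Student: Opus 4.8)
The claim is that every stateful morphism sequence $(i,\seq s):\seq X\to\seq Y$ equals the delayed trace along its own state sequence $\st(i,\seq s)=\seqof{\prv(\seq s_\bul)}$, applied to the stateless morphism $H\seqof{U(\seq s_\bul)}$ with initial state $i$. Let me think about what each side unpacks to. On the left, $(i,\seq s)$ is just the given sequence. On the right, write $\seq S_\bul \teq \prv(\seq s_\bul)$, so $\nxt(\seq s_\bul)=\seq S_{\bul+1}$ and $\st(i,\seq s)=\seq S_\bul=(\seq S_0,\seq S_1,\ldots)$, with $\chop{\seq S} = \seq S_{\bul+1}$. The stateless morphism $H\seqof{U(\seq s_\bul)}$ has underlying 2-cells $(U(\seq s_\bul))^h : 1 \to \seq S_\bul\times\seq X_\bul \to \seq S_{\bul+1}\times\seq Y_\bul \to 1$ — that is, it carries no state but treats $\seq S_\bul\times\seq X_\bul$ as its whole value type at tick $\bul$. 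This is a morphism of type $\seq S\times\seq X \to \chop{\seq S}\times\seq Y$ in $\caus\CC$, exactly the shape required to apply $\tr{\seq S}{i}(-)$. By \cref{def:dtr}, the delayed trace is $(\langle \id_1, i\rangle, \seqof{\valtostate{(U(\seq s_\bul))^h}[\seq S_\bul][\seq S_{\bul+1}]})$, whose initial state is $\langle\id_1,i\rangle : 1\to 1\times\seq S_0$, and whose $\bul$-th 2-cell is the value-to-state conversion of the stateless cell.

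The key observation is that $\valtostate{(U(\seq s_\bul))^h}[\seq S_\bul][\seq S_{\bul+1}]$ has the same underlying morphism $U(\seq s_\bul)$ as $\seq s_\bul$ (both value-to-state conversion and the $(-)^h$ embedding preserve underlying morphisms by \cref{def:conversion} and the definition of $\phi^h$), and moreover its source/target 1-cells are $\cell{\cdot}{1\times\seq S_\bul}{\seq X_\bul}{1\times\seq S_{\bul+1}}{\seq Y_\bul}$, which up to the strictification identification $1\times\seq S_\bul = \seq S_\bul$ agree with those of $\seq s_\bul$. So the right-hand side is, as a stateful morphism sequence, essentially $(i,\seq s)$ but with every state object $\seq S_\bul$ replaced by $1\times\seq S_\bul$ and the initial state $i$ replaced by $\langle\id_1, i\rangle$. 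The natural way to prove extensional equality is therefore the Shim lemma (\cref{lem:shim}): take the shim $\seq b_\bul : \prv(\seq s_\bul) \to \prv(\text{RHS cell}_\bul)$ to be the canonical isomorphism $\seq S_\bul \xrightarrow{\sim} 1\times\seq S_\bul$ (i.e. $\langle \final{\seq S_\bul}, \id_{\seq S_\bul}\rangle$), and verify the two Shim lemma hypotheses: $\seq b_0\circ i = \langle\id_1,i\rangle$ (immediate, since $\final{\seq S_0}\circ i = \id_1$), and the commuting square $\seq s_\bul \vc \seq b_{\bul+1}^v = \seq b_\bul^v \vc (\text{RHS cell}_\bul)$, which amounts to checking that pre- and post-composing $\seq s_\bul$ with the coherence isos $1\times(-)\cong(-)$ on the state boundaries yields the converted cell — this is a diagram chase in $\CC$ using the definition of vertical composition and the fact that all three cells share the underlying morphism $U(\seq s_\bul)$.

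The main obstacle is purely bookkeeping: making the strictification conventions explicit enough that "$1\times\seq S_\bul$ is the same as $\seq S_\bul$" is legitimate, and confirming that the value-to-state conversion of a $(-)^h$ cell along $(\seq S_\bul,\seq S_{\bul+1})$ does land in $\causz$-shape with the right boundaries — one must be careful that in $H\seqof{U(\seq s_\bul)}$ the value type at tick $\bul$ is $\seq S_\bul\times\seq X_\bul$ (not $\seq X_\bul$ alone), so that the conversion at $(\seq S_\bul,\seq S_{\bul+1})$ is well-typed and strips exactly the $\seq S$-part back into the state boundary. Once the types are pinned down, the Shim lemma discharges the rest, and in fact if one is willing to work strictly up to the coherence isomorphisms the shim can be taken to be the identity, collapsing the argument to the observation that the two sequences are literally equal. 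I would present it with the explicit iso shim to be safe, noting the identity-shim shortcut as a remark.
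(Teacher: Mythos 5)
Your proposal is correct and matches the intended argument: the paper omits an explicit proof precisely because, after unfolding \cref{def:dtr} and \cref{def:conversion}, the $\bul$-th cell of the right-hand side is a 2-cell with underlying morphism $U(\seq s_\bul)$ and boundaries $1\times\seq S_\bul$, $1\times\seq S_{\bul+1}$, which under the paper's strictification convention is literally $\seq s_\bul$, with initial state $\langle\id_1,i\rangle=i$. Your explicit-shim variant via \cref{lem:shim} is a safe way to package the same observation, and your typing checks (that $H\seqof{U(\seq s_\bul)}$ has type $\st(i,\seq s)\times\seq X\to\chop{\st(i,\seq s)}\times\seq Y$ so the delayed trace applies) are exactly the bookkeeping the theorem requires.
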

This theorem is our formalization of folklore knowledge that every synchronous
digital circuit can be written as a single combinational (stateless) circuit
plus a feedback loop with a register.

\subsection{Delayed Trace in $\causz\CC$}

The category $\causz\CC$ is also closed under the delayed trace
operator. Since $\seq X = \chop \seq X$ in $\causz\CC$, delayed 
dinaturality is even closer to true dinaturality.
\begin{corollary}\label{cor:regular_dinaturality}
  Suppose
  $(i,\seqof s):\seqof T\times\seqof X\arrow\seqof U\times\seqof Y$ is
  a morphism in $\causz\CC$, and $(j,\seqof g):\seqof U\arrow\seqof T$
  is another morphism in $\causz\CC$. For any initial state $u:1\arrow U$,
  \begin{align*}
    &\tr{\seqof U}{u}(((j,\seqof g)\times\id_{\seqof X})\circ(i,\seqof s))\\
    &=\tr{\seqof T}{u'}((i,\seqof s)\circ((j',\seqof g)\times\id_{\seqof Y}))
  \end{align*}
  where $\tuple{j', u'} = g\circ\langle j,u\rangle$.
\end{corollary}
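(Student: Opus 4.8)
The plan is to obtain \cref{cor:regular_dinaturality} as a direct instance of \cref{thm:dinaturality}, specialising every sequence that occurs there to a constant sequence. Concretely, I would apply \cref{thm:dinaturality} with $\seq T := \seqof T$, $\seq X := \seqof X$, $\seq U := \seqof U$, $\seq Y := \seqof Y$, with $(i,\seq s) := (i,\seqof s)$ and $(j,\seq g) := (j,\seqof g)$ (viewed as $\caus\CC$-morphisms via the inclusion $\causz\CC\hookrightarrow\caus\CC$), and with $u:1\arrow\seq U_0$ taken to be the given $u:1\arrow U$, noting $\seqof U_0 = U$. The theorem's "$\times\,\seq X$" and "$\times\,\seq Y$" are the identities $\times\,\id_{\seqof X}$ and $\times\,\id_{\seqof Y}$ that appear in the corollary.

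The first step is to check the typing hypothesis of \cref{thm:dinaturality}: the codomain of $(i,\seqof s)$ must have the shape $\chop{\seq U'}\times\seq Y$ for some $\seq U'$. Since the tail of a constant sequence equals that sequence, $\chop{\seqof U}=\seqof U$, so $\seq U':=\seqof U$ works and the codomain $\seqof U\times\seqof Y$ is literally $\chop{\seqof U}\times\seqof Y$; likewise $(j,\seqof g):\seqof U\arrow\seqof T$ has the required type. Then I would simply read off the conclusion: on the left $\tr{\seq U'}{u}$ is $\tr{\seqof U}{u}$; on the right $\chop\seq g=\chop{\seqof g}=\seqof g$, so $(j',\chop\seq g)$ is $(j',\seqof g)$ and $\tr{\seq T}{u'}$ is $\tr{\seqof T}{u'}$. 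The one small calculation is to simplify $\tuple{j',u'}=U\seqof g_0\circ\langle j,u\rangle$: here $\seqof g_0$ is the single 2-cell $g$ occupying every slot of the constant sequence, and under the standing identification of a $\causz\CC$-datum with its underlying $\CC$-morphism (already used in the statement of the corollary), $U\seqof g_0 = g$, giving $\tuple{j',u'}=g\circ\langle j,u\rangle$ exactly as stated. Finally, each expression on either side---the product $(j,\seqof g)\times\id_{\seqof X}$, the composites, and the delayed traces---lands again inside $\causz\CC$, since $\causz\CC$ is closed under finite products, composition, and the delayed trace operator (recorded at the start of this subsection); as $\causz\CC$ is a subcategory of $\caus\CC$, the equality proved in $\caus\CC$ by \cref{thm:dinaturality} is an equality of $\causz\CC$-morphisms.

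I do not expect a real obstacle here: the mathematical content is entirely in \cref{thm:dinaturality}, and the work is purely bookkeeping. The only thing one must be careful about is that constancy of all the sequences collapses the shift $\chop{(-)}$ to the identity and reduces $U\seqof g_0$ to $g$---which is precisely why, as remarked before the statement, ``delayed dinaturality is even closer to true dinaturality'' in $\causz\CC$.
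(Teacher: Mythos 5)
Your proposal is correct and is exactly the argument the paper intends: the corollary is obtained by instantiating \cref{thm:dinaturality} at constant sequences, using $\chop{\seqof U}=\seqof U$, $\chop{\seqof g}=\seqof g$, and $U\seqof g_0=g$ to collapse the theorem's statement to the one given. The bookkeeping you carry out (typing check, closure of $\causz\CC$ under the relevant operations) is all that is needed.
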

\begin{corollary}
  For any $f:X\arrow Y$ in $\CC$, and initial state $i:1\arrow X$, we
  have $H\seqof f\circ r_{[X]}(i)=r_{[Y]}(f\circ i)\circ H(\seqof f)$.
\end{corollary}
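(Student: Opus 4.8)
The plan is to obtain this as the constant-sequence instance of the retiming corollary of Theorem~\ref{thm:dinaturality}, which asserts that $H\seq f\circ r_{\seq X}(i)=r_{\seq Y}(\seq f_0\circ i)\circ H(\chop\seq f)$ holds for every $\seq f:\seq X\arrow\seq Y$ in $\CC^\Nat$ and every $i:1\arrow\seq X_0$. We instantiate this identity at the constant sequences $\seq X=\seqof X$, $\seq Y=\seqof Y$ and $\seq f=\seqof f$. The statement then reduces to three purely notational observations: $\seqof X_0=X$ and $\seqof f_0=f$; the tail of a constant sequence is that same constant sequence, so $\chop\seqof f=\seqof f$ and hence $H(\chop\seq f)=H\seqof f$; and $H\seq f=H\seqof f$, which moreover agrees with $H_0f$. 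Substituting these, the general identity becomes exactly $H\seqof f\circ r_{\seqof X}(i)=r_{\seqof Y}(f\circ i)\circ H\seqof f$.

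It remains only to verify that this is legitimately an equation inside the subcategory $\causz\CC$. The delayed trace on $\causz\CC$ is the restriction of that on $\caus\CC$, so $r_{\seqof X}(i)$ is the same morphism as in the unrestricted setting: it is the delayed trace of the symmetry $\sigma_{\seqof X,\seqof X}=H\seqof{\sigma_{X,X}}$, and pointwise value-to-state conversion of this stateless morphism produces the constant sequence of 2-cells $\seqof{\valtostate{(\sigma_{X,X})^h}}$. Hence $r_{\seqof X}(i)$ is represented by a stateful morphism sequence of the regular form $(p,\seqof g)$ and lies in $\causz\CC$; likewise $H\seqof f=H_0f$ is stateless and lies in $\causz\CC$; and $\caus\CC$-composites of $\causz\CC$-morphisms stay in $\causz\CC$. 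Since $\causz\CC\hookrightarrow\caus\CC$ is the inclusion of a subcategory, the identity---valid in $\caus\CC$---is valid verbatim in $\causz\CC$.

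I expect no genuine obstacle: the whole content is the bookkeeping identity $\chop\seqof f=\seqof f$ together with the observation that the morphisms in play are regular. For a proof that does not cite the general corollary, two options are available. One can use target- and source-naturality of the delayed trace (Proposition~\ref{prop:dtr_props}) to rewrite both $H\seqof f\circ r_{\seqof X}(i)$ and $r_{\seqof Y}(f\circ i)\circ H\seqof f$ as delayed traces of stateless morphisms, and then apply the near-dinaturality of Corollary~\ref{cor:regular_dinaturality} with the morphism slid around the loop taken to be $H_0f$. Alternatively, one can expand both sides into stateful morphism sequences---whose state sequences are essentially the constant sequences $\seqof X$ and $\seqof Y$ respectively---and apply the shim lemma (Lemma~\ref{lem:shim}) with a shim built from $f$, which reduces the claim to two short identities in $\CC$.
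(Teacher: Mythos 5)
Your proposal is correct and matches the paper's (implicit) intended derivation: the paper offers no separate proof, presenting the statement as the immediate specialization of the general retiming corollary of Theorem~\ref{thm:dinaturality} to constant sequences, which is exactly your main argument ($\chop\seqof f=\seqof f$, $\seqof f_0=f$, plus the observation that all morphisms involved are regular and hence live in $\causz\CC$). Nothing further is needed.
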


\subsection{Diagrammatic reasoning about $\causz\CC$ morphisms}

Here we informally introduce a diagrammatic syntax for morphisms in
$\causz\CC$. Theorem~\ref{thm:singleregsuffices} indicates that we can
generate all $\causz\CC$ morphisms with the following grammar:
$$\varphi ::= H_0f | \varphi_1\circ\varphi_2 | \varphi_1\times\varphi_2 | \tr{\seq S}{i}(\varphi)$$
\noindent where $f$ is a $\CC$-morphism. We generate circuit
diagrams with a parallel 2-dimensional grammar:
\ctikzfig{circuitgrammar}
where the box labeled $f$ has $m$ inputs and $n$ outputs when \mbox{$f: \prod_{k = 1}^m A_k
\arrow \prod_{k=1}^n B_k$}. As is typical in string diagrams,
$H_0\id_X$ is depicted by a wire and $H_0\sigma_{X,Y}$ by a wire crossing.
Additionally, we depict $H_0!_A$ and $H_0\tuple{\id_A,\id_A}$ with a discarder
and copier: \tikzfig{copydiscard}.

The evident interpretation in $\causz\CC$ of these diagrams induces an
equivalence on such diagrams. For instance, as a special case of
corollary~\ref{cor:regular_dinaturality}, sliding a stateless node along a
loop is possible by changing the value in the delay gate:
\ctikzfig{slideandretime}

As an example of diagrammatic reasoning, we show that this simple delayed
dinaturality plus superposing allows us to obtained delayed dinaturality for
stateful circuits (\cref{thm:dinaturality}).
\ctikzfig{superslide}

More formal treatment of this diagrammatic equational system can be done
through the construction of the free cartesian category with the delayed trace
operator. We reserve this formal axiomatization for future work, and move on
to the study of the differentiablity of the causal computations realized by
$\caus\CC$.




\section{Cartesian Differential Structure}\label{sec:seqcdc}

In this section, we investigate differentiation in $\caus\CC$. Our primary
tool is the theory of Cartesian differential categories, introduced by Blute,
Cockett, and Seely in~\cite{cartesiandiffcat}. We begin by recalling background.

\begin{definition}[\cite{cartesiandiffcat}]
  A \emph{left additive category} is a Cartesian category such that
  every object has a designated commutative monoid structure, which we
  write $+_X: X\times X \to X$ and $0_X: 1 \to X$.  These commutative
  monoids must be compatible with the Cartesian structure of the
  category by satisfying:
  \begin{align*}
    0_{X\times Y} &= 0_X \times 0_Y \\
    +_{X\times Y} &= (+_X \times +_Y)\circ(X\times \sigma_{Y,X}\times Y)
  \end{align*}
\end{definition}

The vector space structure on Euclidean spaces is a classic example of
left additive structure.

\begin{example}[\cite{cartesiandiffcat}]\label{ex:classical_leftadd}
  The category $\Euci$ whose objects are $\RR^n$ for $n \geq 0$
  and morphisms are smooth functions is a left additive category,
  where $+_{\RR^n}$ is the sum of vectors in $\RR^n$ and $0_{\RR^n}$
  is the zero vector in $\RR^n$.
\end{example}

To obtain left additive structure for $\caus\CC$, it suffices to take
sequences of the corresponding pieces of left additive structure for
$\CC$, much like how the Cartesian structure of $\CC$ lifted.

\begin{lemma}\label{lem:leftadd_lifts}
  If $\CC$ is a left additive category, so is $\caus\CC$.
\end{lemma}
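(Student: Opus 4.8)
The plan is to lift the left additive structure of $\CC$ to $\caus\CC$ componentwise, exactly paralleling the proof that $\caus\CC$ is Cartesian. Given a left additive category $\CC$, I would equip each object $\seq X \in |\CC|^\Nat$ of $\caus\CC$ with the commutative monoid structure whose addition is $+_{\seq X} \teq H\seqof{+_{\seq X_\bul}} : \seq X \times \seq X \arrow \seq X$ and whose zero is $0_{\seq X} \teq H\seqof{0_{\seq X_\bul}} : \seqof 1 \arrow \seq X$, using the stateless-morphism functor $H : \CC^\Nat \arrow \caus\CC$. Here I am implicitly using that the $\caus\CC$-product $\seq X \times \seq X$ is $\seqof{\seq X_\bul \times \seq X_\bul}$ (the componentwise $\CC$-product, as established in the proposition that $\caus\CC$ is Cartesian) and that $\seqof 1$ is the terminal object of $\caus\CC$, so these morphisms typecheck.

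The verification then splits into two routine parts. First, $(\seq X, +_{\seq X}, 0_{\seq X})$ is a commutative monoid in $\caus\CC$: since $H$ is finite-product preserving, it sends the commutative-monoid diagrams (associativity, unit, commutativity) for $(\seq X_\bul, +_{\seq X_\bul}, 0_{\seq X_\bul})$ in $\CC^\Nat$ — which hold componentwise because each $\seq X_\bul$ is a commutative monoid in $\CC$ — to the corresponding diagrams in $\caus\CC$; equivalently, one checks that $\CC^\Nat$ is left additive componentwise and that product-preserving functors transport monoid objects. Second, the two compatibility equations with the Cartesian structure, $0_{\seq X \times \seq Y} = 0_{\seq X} \times 0_{\seq Y}$ and $+_{\seq X \times \seq Y} = (+_{\seq X} \times +_{\seq Y}) \circ (\seq X \times \sigma_{\seq Y, \seq X} \times \seq Y)$, again follow by applying $H$ to the componentwise instances in $\CC$, using that $H$ preserves products, projections, and the symmetry $\sigma$ (which in $\caus\CC$ is $H\seqof{\sigma_{\seq X_\bul, \seq Y_\bul}}$).

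I do not expect a serious obstacle here; the only thing requiring a little care is bookkeeping that the chosen $\caus\CC$-products and symmetries really are the $H$-images of the componentwise ones, so that "$H$ applied to a componentwise identity" is literally the desired $\caus\CC$ identity rather than merely an isomorphic variant — but this is exactly what the earlier Cartesian-structure proposition records, and under the strictification convention already adopted in the paper there is nothing to reconcile. So the mild bulk of the argument is simply noting that $H \colon \CC^\Nat \to \caus\CC$ is a strict finite-product-preserving functor and that left additive structure is preserved by such functors, which reduces everything to the pointwise left additive structure of $\CC$ that is assumed.
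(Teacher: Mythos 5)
Your proposal is correct and matches the paper's proof: the appendix defines $+_{\seq X}$ and $0_{\seq X}$ as exactly the stateless (state space $1$) componentwise 2-cell sequences, i.e.\ the $H$-images of the pointwise monoid structure, and leaves the verification implicit. Your additional bookkeeping via the product-preserving functor $H$ is just a more explicit rendering of the same argument.
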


Next, we introduce some helpful families of morphisms present in every
Cartesian left additive category that are useful for condensing later
definitions.

\begin{definition}\label{def:helper_morphs}
  Let $\CC$ be a Cartesian left additive category.  For every object
  $X$ from $\CC$ [or pair of objects $(X, Y)$], let
  \begin{itemize}
  \item $\delta_{X,Y} \teq X \times \sigma_{Y, X} \times Y$
  \item $\alpha_X \teq \delta_{X, X} \circ(X \times X\times \Delta_X)$
  \item
    $\beta_X \teq X\times \final{X} \times X\times X$
  \item
    $\gamma_{X,Y} \teq (X\times \Delta_X\times Y \times Y) \circ
    \delta_{X,Y}$
  \item $\zeta_X \teq X\times 0_{X\times X} \times X$
  \end{itemize}
\end{definition}

Now we are ready to describe the central object of our study this
section, Cartesian differential categories.

\begin{definition}\label{def:cdc}
  A \emph{Cartesian differential category} is a left additive category
  $\CC$ with a \emph{Cartesian differential operator}
  \mbox{$D: \CC(X, Y) \to \CC(X\times X, Y)$}, satisfying:
  \begin{enumerate}[CD1.]
  \item $Ds = s\times {\final{\dom (s)}}$ for
    $s \in \{X, \sigma_{X,Y}, \final{X}, \Delta_X, +_X, 0_X\}$
  \item $Df\circ(0_X \times X) = 0_Y \circ \final{X}$
  \item $Df \circ (+_X\times X) = +_Y\circ(Df\times Df)\circ \alpha_X$
  \item $D(g\circ f) = Dg \circ (Df \times f) \circ (X \times \Delta_X)$
  \item $D(f\times h) = (Df \times Dh)\circ\delta_{X,V}$
  \item $DDf\circ\zeta_X = Df$
  \item $DDf\circ\delta_{X,X} = DDf$
  \end{enumerate}
  for all $f: X \to Y$, $g: Y \to Z$, and $h: V \to W$.
\end{definition}

This definition of a Cartesian differential category is not exactly
that of \cite{cartesiandiffcat}, but it is mostly
straightfoward to check that they are equivalent. The biggest changes
are in axioms CD6 and CD7, for which we have taken alternate forms
given in~\cite[Proposition 4.2]{sdg2014}.

\begin{example}[\cite{cartesiandiffcat}] \label{ex:classic_diff}
  $\Euci$ is a Cartesian differential category. The differential operator $D$
  sends a smooth function $f: \RR^n \to \RR^m$ to $Df: (x_1, x_2) \mapsto
  Jf|_{x_2}\times x_1$, where $Jf|_{x_2}$ is the Jacobian matrix of $f$ evaluated
  at $x_2$.
\end{example}

In light of the standard example, we can describe the ideas behind the CD
axioms. CD1 says that the basic morphisms provided by the structure of the
Cartesian left additive category are linear (in the sense that
$Js|_{x_2}\times x_1 = s(x_1)$), while CD2 and CD3 express the fact that
$Jf|_{x_2}\times x_1$ is linear (in the sense of linear algebra) in its $x_1$
argument. CD4 is the chain rule, while CD5 says the derivative of a parallel
composition is the parallel composition of derivatives. CD6 and CD7 have to do
with partial derivatives: CD7 is the symmetry of partial derivatives, and CD6
is trickier to describe exactly, but is related to the linearity of partial
derivatives.

Many of the CD axioms mention the parallel composition of morphisms with
$\times$. When we state these in $\caus\CC$, it will be helpful to have an
operation for forming parallel compositions. This motivates us
to define the following operation on 2-cells.

\begin{definition}\label{def:cc}
  Let $\cell f S X {S'} Y$ and $\cell k T Z {T'} W$ be arbitrary
  2-cells from $\sq\CC$. The \emph{cross composition} of $f$ and $k$
  is another 2-cell
  $\cell{f\cc k}{S\times T}{X\times Z}{S'\times T'}{Y\times W}$
  defined by
  $$f\cc k = (\valtostate{(T\times Y)^h}[T][T]\vc k)\hc(f\vc \valtostate{(S'\times Z)^h}[S'][S']).$$
\end{definition}

It may be easier to understand $\cc$ composition by its underlying
morphism:

\ctikzfig{crosscomp}

The idea of this operation is to execute two 2-cells in parallel,
without their states or values interacting with each other. We are
purposefully avoiding using $\times$ for $\cc$ so as not to imply
there is some kind of Cartesian structure on the double category
$\sq\CC$.

To avoid using too many grouping symbols when disambiguating 2-cell
expressions involving $\vc$, $\hc$, and $\cc$ we will say $\cc$ binds
tightest, then $\vc$, and last $\hc$, so $f\vc g \cc h \hc k$ means
$(f\vc (g \cc h)) \hc k$.

As desired, this operation implements Cartesian product in $\caus\CC$.
\begin{lemma}
  $(i, \seq f)\times (j, \seq g) = (\tuple{i, j}, \seqof{\seqel f \cc \seqel g})$
  for all $\caus\CC$ morphisms $(i, \seq f)$ and $(j, \seq g)$.
\end{lemma}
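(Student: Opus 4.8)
The plan is to establish the identity by the universal property of the product $\seq B\times\seq D$ in $\caus\CC$, rather than by computing the tupling $\langle-,-\rangle$ of $\caus\CC$ directly (which the construction leaves implicit). Recall from the proof that $\caus\CC$ is Cartesian that the chosen product of $\seq B$ and $\seq D$ is the sequence $[\gc{\seq B}\times\gc{\seq D}]$ with projections $\pi_k = H\seqof{\pi_k^{\seq B_\bul,\seq D_\bul}}$ for $k\in\{0,1\}$; hence $(i,\seq f)\times(j,\seq g)$ is, by definition, the unique morphism $m'$ of $\caus\CC$ with $\pi_0\circ m' = (i,\seq f)\circ\pi_0$ and $\pi_1\circ m' = (j,\seq g)\circ\pi_1$. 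Writing $m\teq(\tuple{i,j},\seqof{\seqel f\cc\seqel g})$ for the right-hand side of the claim, it therefore suffices to verify these two equations with $m$ in place of $m'$, and by the evident symmetry it is enough to treat the $\pi_0$ case.

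First I would unfold both composites using Definition~\ref{def:idcompseq}. Putting $\seq S\teq\st(i,\seq f)$ and $\seq S'\teq\st(j,\seq g)$, one gets $\pi_0\circ m = (\tuple{\tuple{i,j},\id_1},\seq u)$ where $\seq u_\bul = (\pi_0^{\seq B_\bul,\seq D_\bul})^h\hc(\seqel f\cc\seqel g)$, and $(i,\seq f)\circ\pi_0^{\seq A,\seq C} = (\tuple{\id_1,i},\seq w)$ where $\seq w_\bul = \seqel f\hc(\pi_0^{\seq A_\bul,\seq C_\bul})^h$. Using strictness to absorb the trailing units, $\prv(\seq u_\bul) = \seq S_\bul\times\seq S'_\bul$ while $\prv(\seq w_\bul) = \seq S_\bul$, so the two stateful morphism sequences have genuinely different state sequences (the left-hand one drags along a copy of the $(j,\seq g)$-state that never reaches the output), and they cannot be compared via truncations directly. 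Instead I would apply the Shim lemma (Lemma~\ref{lem:shim}) with shim $\seq b_\bul\teq\pi_0^{\seq S_\bul,\seq S'_\bul}:\seq S_\bul\times\seq S'_\bul\arrow\seq S_\bul$.

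It then remains to check the two hypotheses of Lemma~\ref{lem:shim}. The equation $\seq b_0\circ\tuple{\tuple{i,j},\id_1} = \tuple{\id_1,i}$ reduces, after discarding strict units, to $\pi_0\circ\tuple{i,j} = i$. For the commutation $\seq u_\bul\vc\seq b_{\bul+1}^v = \seq b_\bul^v\vc\seq w_\bul$ I would pass to underlying $\CC$-morphisms (legitimate, since a $2$-cell is determined by its underlying morphism) and expand the formulas for $\hc$, $\cc$, and vertical composition from Definitions~\ref{def:double_cat} and~\ref{def:cc}; cancelling every symmetry that involves the strict unit $1$, both sides collapse to the single $\CC$-morphism $(\seq S_\bul\times\seq S'_\bul)\times(\seq A_\bul\times\seq C_\bul)\arrow\seq S_{\bul+1}\times\seq B_\bul$ that projects onto the $\seq S_\bul$- and $\seq A_\bul$-factors and then applies $U\seqel f$. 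The $\pi_1$ equation goes through identically with shim $\pi_1^{\seq S_\bul,\seq S'_\bul}$, and uniqueness of the mediating morphism into the product then yields $m = (i,\seq f)\times(j,\seq g)$.

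The only genuine bookkeeping is this underlying-morphism computation for the second Shim hypothesis: keeping track of the reassociating and symmetry maps generated by nesting an $\hc$ inside a $\cc$ and then composing vertically with a purely state-modifying $2$-cell. Because we work in the strictified Cartesian category, all associators and unitors disappear and every remaining symmetry touching $1$ is an identity, so this is a routine if slightly tedious diagram chase rather than a real obstacle; the string diagram drawn after Definition~\ref{def:cc} for $U(f\cc k)$ already exhibits the relevant pattern.
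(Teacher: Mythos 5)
The paper states this lemma without any proof, in the main text or the appendix, so there is nothing to compare against line by line; judged on its own terms, your argument is correct and complete. Reducing the claim to the universal property of the product --- verifying $\pi_k\circ m = $ (factor)$\circ\pi_k$ for $k\in\{0,1\}$ and invoking uniqueness of the mediating morphism --- is a sound strategy, and it neatly sidesteps the fact that the paper never writes down an explicit formula for tupling in $\caus\CC$ (its Cartesianness proof only exhibits the terminal object, the componentwise product of objects, and the projections). Your bookkeeping is right at the two places where it could go wrong: the composition convention $(i,\seq s)\circ(j,\seq t)=(\tuple{j,i},\seqof{\gc{\seq s}\hc\gc{\seq t}})$ puts the inner morphism's initial state first, which is why $\pi_0\circ m$ carries $\tuple{\tuple{i,j},\id_1}$ while $(i,\seq f)\circ\pi_0$ carries $\tuple{\id_1,i}$; and the mismatch of state sequences ($\seq S_\bul\times\seq S'_\bul$ versus $\seq S_\bul$) genuinely forces a Shim-lemma argument rather than a direct truncation comparison, with $\seq b_\bul=\pi_0^{\seq S_\bul,\seq S'_\bul}$ being the evident choice. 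Both Shim hypotheses check out as you describe (both sides of the commutation square send $(s,s',a,c)$ to $U\seqel f(s,a)$ once the unit-object symmetries are discarded). The only presentational quibble is the phrase ``by the evident symmetry'': the $\pi_1$ case is not literally obtained by a symmetry of the $\pi_0$ case, since $\cc$ is not symmetric on the nose, but as you note it is the same computation with the shim $\pi_1^{\seq S_\bul,\seq S'_\bul}$, so nothing is at stake.
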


We can now start defining the Cartesian differential operator on $\caus\CC$.
For the remainder of this section we assume $\CC$ is a Cartesian differential
category and let $D$ be its differential operator. We start by defining our
differential operator within a time step, by giving some operations on
2-cells.

\begin{definition}\label{def:squD}
  We define two endofunctions on 2-cells from $\sq{\CC}$.  The first,
  $\squD_0$, takes the 2-cell $\cell f S X {S'} Y$ to the 2-cell
  $\cell{\squD_0 f}{S\times S}{X\times X}{S'}{Y}$ with
  $U\squD_0 f \triangleq DUf \circ \delta_{S, X}$.

  The second, $\squD$, takes $f$ to
  $\cell{\squD f}{S\times S}{X\times X}{S'\times S'}{Y}$ with
  $$\squD f \teq (S\times \Delta_S)^v; (\squD_0 f \cc (\final{Y}^h *f)) * (X\times \Delta_X)^h.$$
\end{definition}

The string diagrams for the underlying morphisms of $\squD_0$ and
$\squD$ may be easier to understand. For $\squD_0$,

\ctikzfig{simplediff}

\noindent while for $\squD f$,

\tikzfig{fulldiff} = \tikzfig{squD}

The Cartesian differential operator on $\caus\CC$ is based on $\squD$,
and so to prove that it is a differential operator, we need some
properties of $\squD$.

\begin{proposition}\label{prop:squD_properties}
  Let $\cell f S X {S'} Y$, $\cell g T Y {T'} Z$,
  $\cell h {S'} Z {S''} W$, and $\cell k T Z {T'} W$ be arbitrary
  2-cells. The following are properties of $\squD$:
  \begin{enumerate}
  \item If $\varphi \in \CC(X, Y)$, then
    \mbox{$\squD(\varphi^h) = (D\varphi)^h$} and
    \mbox{$\squD(\varphi^v) = ((D\varphi\times \varphi)\circ(X\times
      \Delta_X))^v$}.
  \item
    $(0_S\times S)^v\vc \squD f \hc (0_X \times X)^h = (0_Y \circ
    !_Y)^h \hc f\vc (0_{S'} \times S')^v$
  \item $(+_S\times S)^v\vc \squD f \hc (+_X \times X)^h \\
      = \alpha_S^v\vc(+_Y^h \hc \squD f \cc \squD f \hc \alpha_X^h)\vc \beta_{S'}^v \vc (+_{S'}\times S')^v$
  \item $\squD(f\vc h) = (\squD f\vc \squD h) \hc \delta_{X,Z}^h$
  \item
    $\squD(g\hc f)\vc \gamma_{S', T'}^v = \gamma_{S, T}^v\vc (\squD g
    \hc (\squD f \cc f)\hc (X\times \Delta_X)^h)$
  \item
    $\squD(f \cc k)\vc \delta_{S', T'}^v = \delta_{S, T}^v\vc \squD
    f \cc \squD k \hc \delta_{X, Z}^h$
  \item
    $\zeta_S^v\vc \squD \squD f \hc \zeta_X^h = \squD f\vc
    \zeta_{S'}^v$
  \item
    $\delta_{S,S}^v\vc \squD \squD f \hc \delta_{X, X}^h = \squD \squD
    f\vc \delta_{S', S'}^h$
  \end{enumerate}
\end{proposition}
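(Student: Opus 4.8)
Every one of the eight statements is an equation between a parallel pair of 2-cells of $\sq\CC$. One first checks that the declared source and target 1-cells agree on the two sides; this is routine, since in each case both sides are built from the same products of $S,S',X,Y,\dots$. A 2-cell is determined by its underlying $\CC$-morphism, so it then suffices to prove the corresponding equations in $\CC$. The plan is to unfold all the combinators: $U(g\hc f)$, $U(f\vc h)$ and $U(f\cc k)$ are the explicit morphisms of Definitions~\ref{def:double_cat} and~\ref{def:cc}; value-to-state conversion does not change $U$; $U\squD_0 f = DUf\circ\delta_{S,X}$; and, expanding Definition~\ref{def:squD}, $U\squD f$ is the morphism that runs $DUf$ on the ``tangent'' copies of state and value while running $Uf$ in parallel on the ``base-point'' copy to produce the base-point next state, the diagonals $\Delta_S,\Delta_X$ copying the base point where both branches need it. Having recorded this closed form for $U\squD f$ (and, applying it twice, for $U\squD\squD f$) as a preliminary lemma, each item becomes a diagram chase in $\CC$ using the appropriate axiom among CD1--CD7; the string diagrams for $\squD_0$ and $\squD$ accompanying Definition~\ref{def:squD} make the bookkeeping tractable.

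Items~1 and~2 follow directly. For item~1, $\varphi^h$ has trivial state, so the base-point branch and the $\Delta_S$ disappear and $U\squD(\varphi^h)=D\varphi$; dually $\varphi^v$ has trivial value, so the $\Delta_X$ disappears and $U\squD(\varphi^v)=(D\varphi\times\varphi)\circ(X\times\Delta_X)$. Item~2 is CD2 applied to $DUf$ once the additive units are plugged into the tangent state and tangent value wires of $U\squD f$; the base-point branch then contributes exactly the term $f$ appearing on the right-hand side, paired with $0_{S'}$. Item~3 is CD3 applied to $DUf$ together with a small bookkeeping argument reflecting the fact that the base-point branch of $\squD f$ is run once, not summed: feeding $+_S,+_X$ produces on the state output a (derivative part) plus (single base part) pattern, and the shims $\alpha_S,\alpha_X,\beta_{S'}$ together with the trailing $+_{S'}$ reshuffle and absorb precisely the base-point copies so that what remains matches the $\alpha_X$ of CD3.

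Items~4--6 are the chain-rule-type statements. Item~4 concerns vertical composition, which threads state but keeps values parallel, so $D(U(f\vc h))$ unfolds by CD4 for the state threading ``$Uf$ then $Uh$'', CD5 for the $\times V$ and $\times Y$ paddings, and CD1 for the symmetries that reindex state past values; the shim $\delta_{X,Z}^h$ on the right just rebrackets the tangent/base split of $X\times Z$ into $X\times X$ beside $Z\times Z$. Item~6 (cross composition) is analogous but with no value interaction, so it reduces to CD5, with $\delta_{S,T}^v$ and $\delta_{X,Z}^h$ doing the rebracketing. Item~5, horizontal composition, is where I expect the real work: $U(g\hc f)$ genuinely composes $Uf$ into $Ug$ along the value wire while threading the two states past one another with symmetries, so $D(U(g\hc f))$ must be unfolded with the full chain rule CD4 (plus CD5 and CD1), after which one must verify that the shims $\gamma_{S,T}^v,\gamma_{S',T'}^v$, the extra factor $\squD f\cc f$ (the derivative of $f$ alongside a fresh base-point copy of $f$) and the $(X\times\Delta_X)^h$ reassemble precisely the threaded differentiated composite. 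I would do this by expanding both sides into $\CC$ string diagrams and pushing the base-point copies through the symmetries until the two diagrams coincide; matching the $\gamma$-shims against the $X\times\Delta_X$ of CD4 is the fiddly step, and this is the main obstacle in the proof.

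Finally, items~7 and~8 are the partial-derivative axioms and follow from CD6 and CD7 applied to $DDUf$ inside $\squD\squD f$. Using the closed form for $U\squD\squD f$ one sees it is assembled from $DDUf$, $DUf$, $Uf$ and structure maps; the vertical and horizontal shims $\zeta_S^v,\zeta_X^h$ (respectively $\delta_{S,S}^v,\delta_{X,X}^h$) act as the reindexing $\zeta$ (respectively $\delta$) precisely on the tangent-of-tangent and base slots where CD6 (respectively CD7) applies, and act as identities on the base-point-of-$f$ branch introduced by the outer $\squD$. Thus item~7 is CD6 plus this slot-matching and item~8 is CD7 plus the same. The recurring difficulty throughout is purely combinatorial---tracking which copy of each object is ``tangent'' and which is ``base point'' as copies pass through symmetries and diagonals---and it is worst in item~5.
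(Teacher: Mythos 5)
Your proposal is correct and follows essentially the same route as the paper: check that the boundary 1-cells agree, reduce each identity to an equation between underlying $\CC$-morphisms, and verify those by string-diagram manipulation using the indicated CD axioms (CD2 for item~2, CD3 for item~3, the chain/parallel rules for items~4--6, and CD6/CD7 for items~7--8, where the paper additionally invokes CD2 in item~7). The only difference is one of completeness, not of method: the paper actually carries out the diagram chases you describe as the remaining combinatorial work.
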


The method to prove these properties is conceptually simple: use the
definitions of the operations on 2-cells (and properties of left additive
categories and CD axioms) to check that both sides of each equation have the
same boundary 1-cells and the same underlying $\CC$-morphism. Practically, the
underlying morphisms are complex, so this turns into an intense string diagram
exercise, which can be found in the appendix.

An important consequence of Proposition~\ref{prop:squD_properties}(4)
is the following extension to finite sequences of vertically composed
2-cells.

\begin{lemma}\label{lem:squD_vertical_composition}
  Let $(f_k)_{k=0}^n$ be a finite sequence of vertically composable
  2-cells. Then
  $\squD(f_0 \vc\cdots\vc f_n) = (\squD f_0 \vc\cdots\vc\squD f_n)\hc
  z^h$, where $z$ is the unzipping isomorphism in $\CC$ of type
  \begin{displaymath}
    \textstyle
    \prod_{k=0}^n(\dom f_k\times \dom f_k)
    \arrow
    (\prod_{k=0}^n\dom f_k)\times(\prod_{k=0}^n\dom f_k).
  \end{displaymath}
\end{lemma}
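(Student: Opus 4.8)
The plan is to prove this by induction on $n$, with Proposition~\ref{prop:squD_properties}(4) doing all the work in the inductive step. For the base case $n=0$ there is nothing to prove: $\squD(f_0)=\squD f_0$, and the unzipping isomorphism for a one-factor product is the identity on $\dom f_0\times\dom f_0$, so $\squD f_0=\squD f_0\hc(\id_{\dom f_0\times\dom f_0})^h$, as required.

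For the inductive step, I would first regroup $f_0\vc\cdots\vc f_{n+1}=(f_0\vc\cdots\vc f_n)\vc f_{n+1}$ (vertical composition of $2$-cells is associative, so this is harmless) and apply Proposition~\ref{prop:squD_properties}(4) to the outermost $\vc$, obtaining
\begin{displaymath}
  \squD(f_0\vc\cdots\vc f_{n+1})=\bigl(\squD(f_0\vc\cdots\vc f_n)\vc\squD f_{n+1}\bigr)\hc\delta_{P,\dom f_{n+1}}^h,
\end{displaymath}
where $P=\prod_{k=0}^n\dom f_k$. Substituting the induction hypothesis $\squD(f_0\vc\cdots\vc f_n)=(\squD f_0\vc\cdots\vc\squD f_n)\hc z_n^h$ (with $z_n$ the unzipping isomorphism of the first $n+1$ factors), I would then slide the value-shim $z_n^h$ out through the vertical composition with $\squD f_{n+1}$: since $\vc$ only tensors the value objects and otherwise leaves value wires alone, $\bigl((\squD f_0\vc\cdots\vc\squD f_n)\hc z_n^h\bigr)\vc\squD f_{n+1}$ equals $\bigl((\squD f_0\vc\cdots\vc\squD f_n)\vc\squD f_{n+1}\bigr)\hc(z_n\times\id)^h$, where $\id$ is the identity on $\dom f_{n+1}\times\dom f_{n+1}$. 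Finally I would fuse the two consecutive $\hc$-shims using $(g\hc\varphi^h)\hc\psi^h=g\hc(\varphi\circ\psi)^h$ and check that $\delta_{P,\dom f_{n+1}}\circ(z_n\times\id)$ is precisely the unzipping isomorphism $z_{n+1}$ for all $n+2$ factors; this closes the induction.

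The two ingredients invoked above — that a value-only shim $\varphi^h$ commutes with $\vc$ in the stated way, and that horizontally pre-composed shims fuse as $(g\hc\varphi^h)\hc\psi^h=g\hc(\varphi\circ\psi)^h$ — both follow directly from Definition~\ref{def:double_cat}: a shim has trivial state component, so the horizontal and vertical composites degenerate exactly as in Lemma~\ref{lem:adjust}. The coherence identity $\delta_{P,\dom f_{n+1}}\circ(z_n\times\id)=z_{n+1}$ for the unzipping isomorphisms is then just naturality of the symmetries together with strictness of $\CC$. So the only real obstacle is bookkeeping: keeping the two nested products $\prod_k(\dom f_k\times\dom f_k)$ and $(\prod_k\dom f_k)\times(\prod_k\dom f_k)$ straight, and confirming that the reassociation contributed by Proposition~\ref{prop:squD_properties}(4) glues onto the inductive $z_n$ to give $z_{n+1}$. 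I expect that once the interchange statement for value-shims is isolated as a one-line sublemma, the remainder is a short formal calculation.
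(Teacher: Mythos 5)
Your proposal is correct and matches the paper's intended argument: the paper presents this lemma as a direct consequence of Proposition~\ref{prop:squD_properties}(4), and the induction you describe (apply property (4) to the outermost $\vc$, slide the value-shim past the last factor, fuse the two shims) is exactly how that consequence is obtained. One bookkeeping remark: since $\hc$ is written in functional-composition order, fusing the shims produces $(z_n\times\id)\circ\delta_{P,\dom f_{n+1}}$ rather than $\delta_{P,\dom f_{n+1}}\circ(z_n\times\id)$, and only the former typechecks --- relatedly, the $z$ occurring in the composite $\hc\, z^h$ must have type $(\prod_{k}\dom f_k)\times(\prod_{k}\dom f_k)\arrow\prod_{k}(\dom f_k\times\dom f_k)$, i.e.\ the inverse of the type displayed in the lemma statement --- but this is a sign-of-composition slip inherited from the statement itself and does not affect the structure of your argument.
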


We can now state the operator we seek on $\caus\CC$.

\begin{definition}
  The componentwise application of $\squD$ to 2-cells in a $\caus\CC$
  morphism,
  $\seqD: (i, \seq s) \mapsto (U\squD (i^v), \seqof{\squD {\seq
      s}_\bul})$, is a well-defined operation on $\caus\CC$ morphisms
  of type
  $$\seqD: \caus\CC(\seq X, \seq Y) \to \caus\CC(\seq X\times\seq X, \seq Y).$$
\end{definition}

A key contribution of this work is the fact that this operation is
actually a Cartesian differential operator.

\begin{theorem}\label{pp:cartdiff}
  $\seqD$ is a Cartesian differential operator.
\end{theorem}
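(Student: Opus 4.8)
The plan is to verify the seven axioms CD1--CD7 of \cref{def:cdc} for $\seqD$ on $\caus\CC$, reducing each to the corresponding ``within one clock tick'' identity of \cref{prop:squD_properties} together with the Shim lemma (\cref{lem:shim}). First I would record the bookkeeping that makes the reductions uniform. By \cref{lem:leftadd_lifts}, $\caus\CC$ is Cartesian left additive with componentwise structure, so each of the morphisms $\seq X,\sigma,\final{\seq X},\Delta,+,0$ named in CD1 is \emph{stateless}, i.e.\ the $H$-image of the corresponding sequence of basic $\CC$-morphisms. Composition in $\caus\CC$ is componentwise horizontal composition $\hc$ of $2$-cells, the Cartesian product is componentwise cross composition $\cc$ (the lemma accompanying \cref{def:cc}), and $\seqD(H\seq f)=H\seqof{D\seq f_\bul}$, which drops out of \cref{prop:squD_properties}(1) applied to each component and to the trivial initial $2$-cell $\id_1^v$. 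Well-definedness of $\seqD$ on extensional equivalence classes is already granted.

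Given this, CD1 is immediate: for a basic $s$ we get $\seqD s=H\seqof{D s_\bul}=H\seqof{s_\bul\times\final{\dom(s_\bul)}}=s\times\final{\dom(s)}$ by CD1 in $\CC$ and product-preservation of $H$. The remaining axioms all fit one template. Expanding both sides into representing stateful morphism sequences using the facts above, the componentwise $2$-cell identity one needs is exactly a clause of \cref{prop:squD_properties}: CD2 from clause (2), CD3 from clause (3) (also using $\times=\cc$ and the additive structure), CD4 from clause (5), CD5 from clause (6), CD6 from clause (7), and CD7 from clause (8), the last two involving $\seqD\seqD(i,\seq s)$, whose $2$-cells are $\squD\squD\seq s_\bul$. (Clause (4), about vertical composition, is not used here; it is the ingredient for the BPTT comparison via \cref{lem:squD_vertical_composition}.) The catch is that each such clause carries extra state-adjusting helper morphisms from \cref{def:helper_morphs} ($0$-maps, $\alpha,\beta,\gamma,\delta,\zeta$) on the horizontal (state) $1$-cells, which have no analogue in the CD axioms for $\caus\CC$ because there the state is internal. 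To discard them I would invoke the Shim lemma (possibly after swapping the two sides, which is harmless since extensional equality is symmetric) with shim sequence $\seq b$ built from those helpers: the lemma's compatibility square then becomes literally the relevant clause of \cref{prop:squD_properties}, once one rewrites $\nxt{\seq s_\bul}=\prv{\seq s_{\bul+1}}$ (the state equation of a stateful morphism sequence) to identify the back-adjustment at step $\bul$ with the front-adjustment at step $\bul+1$. The one genuinely new ingredient is then the Shim-lemma base case $\seq b_0\circ i=j$, which unwinds to a finite $\CC$-computation relating the initial-state component $U\squD((-)^v)$ of $\seqD$ (given by \cref{prop:squD_properties}(1)) to the initial states produced by the definitions of composition and product of stateful morphism sequences, using the matching CD axiom in $\CC$.

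I expect CD4, the chain rule, to be the main obstacle, and would carry it out in full. Writing $g=(i,\seq s)$ and $f=(j,\seq t)$, the left side $D(g\circ f)$ has $2$-cells $\squD(\seq s_\bul\hc\seq t_\bul)$, while the right side $Dg\circ(Df\times f)\circ(\seq X\times\Delta_{\seq X})$ has $2$-cells $\squD\seq s_\bul\hc(\squD\seq t_\bul\cc\seq t_\bul)\hc(\seq X_\bul\times\Delta_{\seq X_\bul})^h$, using that $Df\times f$ is the componentwise $\cc$ of the components of $Df$ and $f$ and that the gluing map is stateless. These two families agree after inserting the state-reshuffling morphism $\gamma$ on the state $1$-cells --- this is \cref{prop:squD_properties}(5) --- and because its adjustment at the output states $(\nxt{\seq s_\bul},\nxt{\seq t_\bul})$ is literally its adjustment at the next input states $(\prv{\seq s_{\bul+1}},\prv{\seq t_{\bul+1}})$, the sequence whose $\bul$-th entry is $\gamma$ at $(\prv{\seq s_\bul},\prv{\seq t_\bul})$ is a legitimate shim. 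What remains is the base case: since $(i,\seq s)\circ(j,\seq t)$ has initial state $\langle j,i\rangle$, one must check that $\gamma$ sends $U\squD(\langle j,i\rangle^v)$ to the nested tuple of initial states on the right; by \cref{prop:squD_properties}(1) this reduces to $D\langle j,i\rangle=\langle Dj,Di\rangle$ up to reassociation, which follows from CD1, CD4 and CD5 in $\CC$. Invoking the Shim lemma then closes CD4. The analogous base cases for CD6 and CD7 are where $\squD$ is iterated twice and the shims $\zeta,\delta$ appear, so I would spell those out as well (using that $\squD(i^v)$ is itself the $v$-lift of a $\CC$-morphism, so \cref{prop:squD_properties}(1) applies a second time); CD2, CD3 and CD5 are routine instances of the template, and I would relegate their shim bookkeeping to the appendix.
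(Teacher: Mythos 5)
Your proposal follows essentially the same route as the paper: each CD axiom is reduced to the matching clause of Proposition~\ref{prop:squD_properties} (with exactly the clause-to-axiom correspondence you list) via the Shim Lemma, with the helper morphisms of Definition~\ref{def:helper_morphs} serving as the shims and the base case checked on initial states, and your CD4 walkthrough matches the paper's almost verbatim. The only detail you underplay is CD3, which in the paper is not a plain instance of the template but needs an extra delayed-dinaturality step plus an idempotent-shim argument to absorb the residual $\alpha\circ\beta$ on the state wires after the first Shim Lemma application.
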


The strategy for this proof is to use the properties of $\squD$ from
Proposition~\ref{prop:squD_properties}, which were selected to be used with
the Shim Lemma to obtain the CD axioms. For example, in this context, CD4 (the
chain rule) states: 
$$\seqD ((j, \seq g)\circ(i, \seq f)) =
\seqD(j, \seq g)\circ(\seqD(i, \seq f)\times (i, \seq f))\circ(\seq X
\times \Delta_{\seq X}).$$

The key step in proving this is invoking the Shim Lemma with $\seqel b = \gamma_{\seqel S, \seqel T}$.
We have two conditions to check for this invocation:
\mbox{$\gamma_{\seq S_0, \seq T_0}\circ\tuple{0_{\seq S_0}, 0_{\seq T_0}, i, j} = 
\tuple{0_{\seq S_0}, i, i, 0_{\seq T_0}, j}$}
and
\begin{align*}
\squD(&\seqel g\hc \seqel f)\vc \gamma_{\nextseqel S,\nextseqel T}^v \\
&= \gamma_{\seqel S,\seqel T}^v\vc (\squD \seqel g \hc (\squD \seqel f \cc \seqel f)\hc (\seqel X\times \Delta_{\seqel X})^h),
\end{align*}
\noindent the latter of which is a case of Proposition~\ref{prop:squD_properties}(5). 

We can now prove CD4 for $\seqD$:
\begin{align*}
\seqD(&(j, \seq g) \circ (i, \seq f)) = (\tuple{0_{\seq S_0}, 0_{\seq T_0}, i, j}, \seqof{\squD(\seqel g\hc \seqel f)}) \\
&= (\tuple{0_{\seq S_0}, i, i, 0_{\seq T_0}, j}, 
\seqof{\squD \seqel g \hc (\squD \seqel f \cc \seqel f)\hc (\seqel X\times \Delta_{\seqel X})^h}) \\
&= \seqD(j, \seq g)\circ (\seqD(i, \seq f) \times (i, \seq f)) \circ (\seq X\times \Delta_{\seq X})
\end{align*}   
\noindent where the second line is the Shim Lemma step. 
The other axioms are similar and can be found in the appendix.

The following result demonstrates that our differential operator matches (up
to isomorphism) the unroll-and-differentiate procedure used in backpropagation
through time.


\begin{theorem}\label{th:unrol}
  For any morphism $(i,\seq f):\seq A\arrow\seq B$ in $\caus\CC$,
  \begin{displaymath}
    \Un_k(\seqD(i,\seq f))=D(\Un_k(i,\seq f))\circ z:
    \prod_{n=0}^k(\seq A_n\times\seq A_n)\arrow\seq B_k,
  \end{displaymath}
  where $z$ is the unzipping isomorphism from lemma~\ref{lem:squD_vertical_composition}.
\end{theorem}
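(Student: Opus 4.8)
The plan is to compute both sides of the claimed identity by tracing through the definitions of $\Un_k$, $\Tc_k$, $\seqD$, and $\squD$, and then reconcile them using Lemma~\ref{lem:squD_vertical_composition} (the $\squD$-vs-vertical-composition formula) together with the CD axioms. Recall $\Un_k(i,\seq f) = \pi_k\circ\Tc_k(i,\seq f)$ where $\Tc_k(i,\seq f) = U(i^v\vc\seq f_0\vc\cdots\vc\seq f_k\vc\final{\nxt\seq f_k}^v)$. So the left-hand side $\Un_k(\seqD(i,\seq f))$ unfolds to $\pi_k\circ U(\seqD(i,\seq f)$'s $k$th truncation$)$, which by the definition of $\seqD$ is $\pi_k\circ U((\squD i^v)\vc \squD\seq f_0\vc\cdots\vc\squD\seq f_k\vc\final{\cdot}^v)$, where I must be slightly careful about what the initial-state 2-cell $\squD(i^v)$ contributes and what the final discard becomes. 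The right-hand side is $D(\pi_k\circ\Tc_k(i,\seq f))\circ z$.

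First I would reduce to a statement purely about vertical composites in $\sq\CC$, i.e. forget the projection and discard momentarily: the heart of the matter is that $\squD$ applied to a long vertical composite $i^v\vc\seq f_0\vc\cdots\vc\seq f_k$ agrees, after the unzipping reindexing $z$, with $\squD_0$ or $D$ applied to the underlying morphism of that composite. This is almost exactly Lemma~\ref{lem:squD_vertical_composition}, which states $\squD(f_0\vc\cdots\vc f_n) = (\squD f_0\vc\cdots\vc\squD f_n)\hc z^h$. So the left-hand vertical composite $\squD i^v\vc\squD\seq f_0\vc\cdots\vc\squD\seq f_k$ equals $\squD(i^v\vc\seq f_0\vc\cdots\vc\seq f_k)$ precomposed with $(z^{-1})^h$ — one must check $\squD(i^v)$ genuinely is the $\squD$ of the initial-state 2-cell and that the bookkeeping of how $z$ threads through the extra $i^v$ layer and the trailing discard is consistent. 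Then I would pass from $\squD$ of the whole composite to $U\squD$ of the whole composite, and invoke Proposition~\ref{prop:squD_properties}(1) (or directly the definition of $\squD$ via $\squD_0$ and $DUf\circ\delta$) to see that $U\squD$ of a $1\to\text{-to-state}$-style 2-cell relates to $D$ of its underlying morphism; after composing with the discard $\final{}^v$ and projecting with $\pi_k$, the state outputs are thrown away and what survives is precisely $D(\Un_k(i,\seq f))$.

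The cleanest route is probably to prove the intermediate claim $\seqD$ ``commutes with truncation up to $z$'', i.e. $\Tc_k(\seqD(i,\seq f)) = D(\Tc_k(i,\seq f))\circ(\text{some reindexing})$ composed appropriately with the additive/discard shims, and then obtain the theorem by applying $\pi_k$. To establish that intermediate claim I would combine Lemma~\ref{lem:squD_vertical_composition} with CD-axiom bookkeeping: $D$ of a vertical-composite underlying morphism is governed by CD4 (chain rule) and CD5 (parallel rule), while $\squD$ of the composite is governed by Proposition~\ref{prop:squD_properties}(4), and these are designed to be the ``same'' statement modulo the unzipping iso. The discard 2-cell at the bottom contributes a $\final{}$ whose derivative is handled by CD1 ($D\final{X} = \final{X}\times\final{X}$), and the initial-state layer $i^v$ has $\squD(i^v) = ((Di\times i)\circ(1\times\Delta_1))^v$ by Proposition~\ref{prop:squD_properties}(1), which after the top-level tupling $U\squD(i^v)$ in the definition of $\seqD$ produces exactly the doubled-input initial state $\tuple{Di, i}$ that $D(\Tc_k(i,\seq f))$ expects to see fed into its first argument.

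\textbf{The main obstacle} I anticipate is the index/reassociation bookkeeping: matching the shape $\prod_{n=0}^k(\seq A_n\times\seq A_n)$ on the left against $(\prod_{n=0}^k\seq A_n)\times(\prod_{n=0}^k\seq A_n)$ — the domain of $D(\Un_k(i,\seq f))$ — via the unzipping iso $z$, while simultaneously tracking how the state spaces (which are products of the per-step states, and which $\squD$ doubles) get discarded by the trailing $\final{}^v$ and projected away by $\pi_k$. In particular one has to verify that the $z$ appearing here is literally the same unzipping iso as in Lemma~\ref{lem:squD_vertical_composition} (on the value types $\seq A_n$), not a different reshuffle involving the state types, and that the state-type unzipping is exactly what the discard annihilates. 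Everything else — the chain-rule matching, the additive-shim manipulations — is routine given Proposition~\ref{prop:squD_properties} and the CD axioms; so I expect the proof to be short in the paper, likely just citing Lemma~\ref{lem:squD_vertical_composition} and chasing $\pi_k$ through, with the reindexing verification being the one place care is needed.
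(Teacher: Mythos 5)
Your proposal is correct and follows what is evidently the paper's intended argument (the paper states Theorem~\ref{th:unrol} without an explicit proof, but the statement itself points to Lemma~\ref{lem:squD_vertical_composition} as the key ingredient): reduce $\Un_k$ to the $k$th truncation, observe that the truncation is a finite vertical composite whose outer layers $i^v$ and $\final{}^v$ make it a 2-cell of the form $\phi^h$ with $\phi=\Tc_k(i,\seq f)$, and then combine Lemma~\ref{lem:squD_vertical_composition} with Proposition~\ref{prop:squD_properties}(1) and the linearity of $\pi_k$ (CD1) to conclude. The bookkeeping concerns you flag (the extra unit factors from $i^v$ and the discard absorbed by strictness and terminality, and the identification of $z$ with the value-type unzipping) are exactly the points that need checking, and they resolve as you indicate.
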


\section{Differentiation of causal morphisms}

For our applications, we note that $\seqD$ restricts to $\causz\CC$.

\begin{corollary}\label{cor:cartdiff_causz}
  The operation $\seqD$ restricted to $\causz\CC$ is a Cartesian
  differential operator on $\causz\CC$.
\end{corollary}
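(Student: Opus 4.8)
The plan is to derive the corollary purely by restriction along the inclusion $I\colon\causz\CC\hookrightarrow\caus\CC$. This $I$ is the identity on objects of the form $\seqof X$; it sends a $\causz\CC$-morphism---an extensional equivalence class possessing some representative $(i,\seqof f)$ with $i\colon1\to S$ and $\cell f S X S Y$ a 2-cell---to the same class regarded in $\caus\CC$; and by the very construction of $\causz\CC$ it preserves identities and composition. Since equality of $\causz\CC$-morphisms is by definition equality of the underlying $\caus\CC$-morphisms, $I$ is faithful. By Lemma~\ref{lem:leftadd_lifts} and Theorem~\ref{pp:cartdiff}, $\caus\CC$ is a Cartesian differential category with differential operator $\seqD$; hence it suffices to establish two closure facts: (i) the Cartesian left additive structure of $\caus\CC$ restricts to $\causz\CC$, so that $I$ is a strict Cartesian left additive functor, and (ii) $\seqD$ carries $\causz\CC$-morphisms to $\causz\CC$-morphisms, so that its restriction is a well-defined operation $\seqD\colon\causz\CC(\seqof A,\seqof B)\to\causz\CC(\seqof A\times\seqof A,\seqof B)$.

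For (i): the Cartesian structure of $\caus\CC$ was already observed to restrict to $\causz\CC$, and the left additive structure supplied by Lemma~\ref{lem:leftadd_lifts} is componentwise, so on an object $\seqof X$ it is $+_{\seqof X}=H_0(+_X)$ and $0_{\seqof X}=H_0(0_X)$, both of which are $\causz\CC$-morphisms; the two left-additive compatibility equations then hold in $\causz\CC$ because they hold in $\caus\CC$ and mention only these restricted maps. So $\causz\CC$ is a Cartesian left additive category and $I$ preserves this structure strictly.

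For (ii): let a $\causz\CC$-morphism be given by a representative $(i,\seqof f)$ with $\cell f S X S Y$ and $i\colon1\to S$. By the definition of $\seqD$, and since every component of the constant sequence $\seqof f$ equals $f$, we have $\seqD(i,\seqof f)=(U\squD(i^v),\seqof{\squD f})$. Now Definition~\ref{def:squD} takes a 2-cell $\cell f S X {S'} Y$ to a 2-cell of type $\cell{\squD f}{S\times S}{X\times X}{S'\times S'}{Y}$; when $S'=S$ this means $\squD f$ has equal source and target horizontal $1$-cells $S\times S$, and $U\squD(i^v)\colon1\to S\times S$ has exactly the type of an initial state for it. Thus $\seqD(i,\seqof f)$ is again a stateful morphism sequence of the shape required by $\causz\CC$, and since $\seqD$ is already well defined on extensional equivalence classes in $\caus\CC$, so is its restriction.

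Finally, assembling: each of CD1--CD7 is a universally quantified equation between morphisms assembled from $\seqD$, composition, $\times$, and the basic maps (identities, symmetries, terminal maps, diagonals, $+$, $0$, projections). Instantiating such an equation at objects and morphisms of $\causz\CC$ and applying $I$ turns it into the corresponding equation in $\caus\CC$---because $I$ preserves all these operations, and $\seqD$ on $\causz\CC$ is literally the restriction of the one on $\caus\CC$---and that $\caus\CC$-equation holds by Theorem~\ref{pp:cartdiff}; faithfulness of $I$ then transports the equation back to $\causz\CC$. Hence $\seqD$ satisfies CD1--CD7 on $\causz\CC$, which is the claim. The only step carrying real content is the closure check (ii), and its crux is the bare observation, visible from Definition~\ref{def:squD}, that $\squD$ preserves coincidence of the source and target state types; I therefore do not anticipate a genuine obstacle, the remainder being routine bookkeeping about restriction along a faithful, structure-preserving subcategory inclusion.
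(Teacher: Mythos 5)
Your proposal is correct and follows essentially the same route as the paper, which treats the corollary as immediate from the observation that $\seqD$ restricts to $\causz\CC$: your step (ii) --- that $\squD$ sends a 2-cell with $\prv f = \nxt f = S$ to one with source and target horizontal 1-cells both equal to $S\times S$, so constant stateful morphism sequences are preserved --- is exactly the content of that remark, and the CD axioms then transfer along the faithful, structure-preserving inclusion as you describe. The additional bookkeeping about the left additive structure restricting componentwise is correct and consistent with Lemma~\ref{lem:leftadd_lifts}.
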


Using this differential operator in $\causz\CC$, we can find the derivative of
a stateful function as another stateful function. From the definition of
$\squD$ on 2-cells, we know:

\ctikzfig{diff2cell}

Translating this fact along the correspondence between circuit diagrams and
morphisms in $\causz\CC$, we obtain the following diagram as the derivative of
our simple stateful function. (The red dashed boxes do not have any
mathematical meaning; they are only there so we can describe how the device on
the right works.)

\ctikzfig{diffcircuit}

Again, the idea of a derivative in a Cartesian differential category is to
take a base point $\seq x$ as its lower argument and a small change as
its upper argument $\seq{\Delta x}$ and return an approximation for the
difference between the outputs of the function at $\seq x$ and the function
at $\seq x + \seq{\Delta x}$.

Here is how the device obtained above accomplishes this. The red trapezoidal
region is a copy of the original device which maintains the current state of
the function in the delay gate initialized with $i$. It uses this state itself
to maintain this invariant, and supplies a copy to the derivative of the
combinational part, $D\phi$. Therefore, the bottom two arguments received by
the $D\phi$ subdevice are the state and value inputs $\phi$ would receive.

In the upper delay gate (initialized to $0$, also boxed in red), the device
accumulates its best approximation for the \textbf{difference between states}
between the original device executed at $\seq x$ and at $\seq x + \seq{\Delta
x}$, using the current state and input values, the approximate state change
supplied from the upper delay gate, and the value change supplied at the upper
input (above the red trapezoid). Meanwhile, the output wire to the left
reports the best approximation for the difference in outputs to the
environment.

Though it may seem we have taken a slightly special case by assuming the
$\phi$ device is stateless (being an underlying morphism from a 2-cell),
theorem~\ref{thm:singleregsuffices} ensures \emph{all} $\causz\CC$ morphisms
can be written in this form. So in fact this is a fully abstract circuit
diagram for derivatives in $\causz\CC$.

Taking $\CC = \Euci$, this string diagram specializes to the derivative
of a recurrent neural network. Theorem~\ref{th:unrol} guarantees this
derivative matches precisely what we expect from the unroll-and-differentiate
procedure used in backpropagation through time. However, the extra structure
we have discovered for this procedure, namely that $\caus{\Euci}$ is a
Cartesian differential category, give us many useful properties.
For example, the derivative of

\ctikzfig{composedsimple}

\noindent is

\ctikzfig{composedsimple_deriv}

\noindent We leave this exercise to the reader.

\section{Conclusion and future directions}

We have shown how to treat differentiation of stateful functions via two
pieces of categorical machinery. First, we described the $\caus-$
construction, taking a category and augmenting it with morphisms representing
causal functions of sequences. The special subcategory $\causz-$ consists of
constant stateful morphism sequences, which perform the same computation at
each clock tick, much like a Mealy machine.

Second, we showed that this causal construction also admits an abstract form
of differentiation. Our key technical results showed that if a category $\CC$
is a Cartesian differential category, so is $\caus\CC$. In particular, this
allows us to give a finite representation of the derivative of a causal
function. In addition to being much more compact than the well-known
unroll-then-differentiate approach, the structure of Cartesian differential
categories ensure this differentiation operation has many useful properties of
derivatives of undergraduate calculus, including a chain rule.

We believe that experimentation in machine learning will use differentiation
and gradients in many new and interesting contexts. We also believe the
abstract nature of Cartesian differential categories will prove very valuable
for organizing the theory behind this growing field.

Though we would like to say our abstract treatment of differentiation can be
used directly by machine learning practitioners, it appears this is not the
case yet. The derivative of a morphism in a Cartesian differential category is
not the same as having an explicit Jacobian or gradient. A gradient can be
recovered from this morphism by applying it to all the basis vectors, but when
there are millions of parameters in a machine learning model, this idea is
computationally disastrous. We think that by adding some structure to
Cartesian differential categories, such as a designated closed subcategory, we
could give a theoretical treatment allowing for more explicit representation
of Jacobians.

Another issue with our work is that often machine learning
practitioners often use functions which are not smooth, not
differentiable, and even sometimes partial! While this wrinkle is easy
enough to overcome in practice so long as it is encountered
sufficiently rarely, to theoreticians it can be more of a
challenge. Enhancing this work with differential restriction
categories might be a good way forward.

An interesting observation that points to potential further
applications in machine learning is the following. We know that $\CC$
being Cartesian differential category implies $\causz\CC$ is as
well. Therefore, $\causz{\causz\CC}$ is \textbf{also} a Cartesian
differential category.  Morphisms in $\CC$ process individual inputs
and morphisms in $\causz\CC$ process sequences of inputs, so morphisms
in $\causz{\causz\CC}$ process sequences of sequences of
inputs. Similarly, while $\causz\CC$ adds delay gates whose values can
change as elements of its input sequence are processed,
$\causz{\causz\CC}$ will add \emph{meta-delay gates} whose values
change after a single sequence in its input sequence of sequences has
been processed. This behavior of meta-delay gates seems a lot like
parameter updating after processing an example in the training of
neural networks. Further iterating this construction to
$\causz{\causz{\causz\CC}}$ may be a good way to model a
hyperparameter tuning process.

Somewhat removed from potential machine learning applications, we are
also curious about the further development of the theory of delayed
traces. In particular, it seems there are quite a few interesting
delayed traces besides the one we described for
$\caus\CC$.

\section*{Acknowledgments}

We are very grateful to Bart Jacobs, Fabio Zanasi, Nian-Ze Lee and Masahito
Hasegawa for many useful discussions. Thanks to JS Lemay for the pointer to
\cite{sdg2014}. Thanks also to the developers of TikZiT (Aleks Kissinger,
Alexander Merry, Chris Heunen, K.~Johan Paulsson), which which many of these
diagrams were made.

The authors are supported by ERATO HASUO Metamathematics for Systems
Design Project (No. JPMJER1603), JST.

\bibliographystyle{plain} \bibliography{main}

\begin{thebibliography}{10}

\bibitem{DBLP:conf/icfp/AbelP13}
Andreas Abel and Brigitte Pientka.
\newblock Wellfounded recursion with copatterns: a unified approach to
  termination and productivity.
\newblock In Greg Morrisett and Tarmo Uustalu, editors, {\em {ACM} {SIGPLAN}
  International Conference on Functional Programming, ICFP'13, Boston, MA,
  {USA} - September 25 - 27, 2013}, pages 185--196. {ACM}, 2013.

\bibitem{Basold2014}
Henning Basold, Marcello Bonsangue, Helle~Hvid Hansen, and Jan Rutten.
\newblock {\em (Co)Algebraic Characterizations of Signal Flow Graphs}, pages
  124--145.
\newblock Springer International Publishing, Cham, 2014.

\bibitem{DBLP:journals/mscs/BluteCS06}
Richard Blute, J.~Robin~B. Cockett, and R.~A.~G. Seely.
\newblock Differential categories.
\newblock {\em Mathematical Structures in Computer Science}, 16(6):1049--1083,
  2006.

\bibitem{DBLP:conf/concur/BonchiSZ14}
Filippo Bonchi, Pawe{\l} Soboci{\'{n}}ski, and Fabio Zanasi.
\newblock A categorical semantics of signal flow graphs.
\newblock In Paolo Baldan and Daniele Gorla, editors, {\em {CONCUR} 2014 -
  Concurrency Theory - 25th International Conference, {CONCUR} 2014, Rome,
  Italy, September 2-5, 2014. Proceedings}, volume 8704 of {\em Lecture Notes
  in Computer Science}, pages 435--450. Springer, 2014.

\bibitem{DBLP:conf/popl/BonchiSZ15}
Filippo Bonchi, Pawe{l} Soboci{\'{n}}ski, and Fabio Zanasi.
\newblock Full abstraction for signal flow graphs.
\newblock In Sriram~K. Rajamani and David Walker, editors, {\em Proceedings of
  the 42nd Annual {ACM} {SIGPLAN-SIGACT} Symposium on Principles of Programming
  Languages, {POPL} 2015, Mumbai, India, January 15-17, 2015}, pages 515--526.
  {ACM}, 2015.

\bibitem{sdg2014}
J.~R.~B. Cockett and G.~S.~H. Cruttwell.
\newblock Differential structure, tangent structure, and sdg.
\newblock {\em Applied Categorical Structures}, 22(2):331–417, Apr 2014.

\bibitem{1451723}
R.~E. Crochiere and A.~V. Oppenheim.
\newblock Analysis of linear digital networks.
\newblock {\em Proceedings of the IEEE}, 63(4):581--595, April 1975.

\bibitem{cruttwell_2017}
G.~S.~H. CRUTTWELL.
\newblock Cartesian differential categories revisited.
\newblock {\em Mathematical Structures in Computer Science}, 27(1):70--91,
  2017.

\bibitem{Ehresmann1963}
Charles Ehresmann.
\newblock Catégories structurées.
\newblock {\em Annales scientifiques de l'École Normale Supérieure},
  80(4):349--426, 1963.

\bibitem{DBLP:journals/tcs/EhrhardR03}
Thomas Ehrhard and Laurent Regnier.
\newblock The differential lambda-calculus.
\newblock {\em Theor. Comput. Sci.}, 309(1-3):1--41, 2003.

\bibitem{DBLP:journals/entcs/EhrhardR05}
Thomas Ehrhard and Laurent Regnier.
\newblock Differential interaction nets.
\newblock {\em Electr. Notes Theor. Comput. Sci.}, 123:35--74, 2005.

\bibitem{DBLP:journals/pacmpl/Elliott18}
Conal Elliott.
\newblock The simple essence of automatic differentiation.
\newblock {\em {PACMPL}}, 2({ICFP}):70:1--70:29, 2018.

\bibitem{FongST17}
B.~Fong, D.~Spivak, and R.~Tuy{\'e}ras.
\newblock Backprop as functor: A compositional perspective on supervised
  learning.
\newblock See \url{arxiv.org/abs/1711.10455}, 2017.

\bibitem{DBLP:conf/birthday/GadducciM00}
Fabio Gadducci and Ugo Montanari.
\newblock The tile model.
\newblock In Gordon~D. Plotkin, Colin Stirling, and Mads Tofte, editors, {\em
  Proof, Language, and Interaction, Essays in Honour of Robin Milner}, pages
  133--166. The {MIT} Press, 2000.

\bibitem{Ghica:2016:CSD:3077629.3077642}
Dan~R. Ghica and Achim Jung.
\newblock Categorical semantics of digital circuits.
\newblock In {\em Proceedings of the 16th Conference on Formal Methods in
  Computer-Aided Design}, FMCAD '16, pages 41--48, Austin, TX, 2016. FMCAD Inc.

\bibitem{DBLP:conf/csl/GhicaJL17}
Dan~R. Ghica, Achim Jung, and Aliaume Lopez.
\newblock Diagrammatic semantics for digital circuits.
\newblock In Valentin Goranko and Mads Dam, editors, {\em 26th {EACSL} Annual
  Conference on Computer Science Logic, {CSL} 2017, August 20-24, 2017,
  Stockholm, Sweden}, volume~82 of {\em LIPIcs}, pages 24:1--24:16. Schloss
  Dagstuhl - Leibniz-Zentrum fuer Informatik, 2017.

\bibitem{10.1007/978-3-319-89366-2_17}
Sergey Goncharov and Lutz Schr{\"o}der.
\newblock Guarded traced categories.
\newblock In Christel Baier and Ugo Dal~Lago, editors, {\em Foundations of
  Software Science and Computation Structures}, pages 313--330, Cham, 2018.
  Springer International Publishing.

\bibitem{Goodfellow16}
Ian Goodfellow, Yoshua Bengio, and Aaron Courville.
\newblock {\em Deep Learning}.
\newblock MIT Press, 2016.
\newblock \url{http://www.deeplearningbook.org}.

\bibitem{DBLP:journals/corr/HansenKR16}
Helle~Hvid Hansen, Clemens Kupke, and Jan Rutten.
\newblock Stream differential equations: Specification formats and solution
  methods.
\newblock {\em Logical Methods in Computer Science}, 13(1), 2017.

\bibitem{DBLP:conf/csl/2014}
Thomas~A. Henzinger and Dale Miller, editors.
\newblock {\em Joint Meeting of the Twenty-Third {EACSL} Annual Conference on
  Computer Science Logic {(CSL)} and the Twenty-Ninth Annual {ACM/IEEE}
  Symposium on Logic in Computer Science (LICS), {CSL-LICS} '14, Vienna,
  Austria, July 14 - 18, 2014}. {ACM}, 2014.

\bibitem{DBLP:conf/csl/HoshinoMH14}
Naohiko Hoshino, Koko Muroya, and Ichiro Hasuo.
\newblock Memoryful geometry of interaction: from coalgebraic components to
  algebraic effects.
\newblock In Henzinger and Miller \cite{DBLP:conf/csl/2014}, pages 52:1--52:10.

\bibitem{jsv}
Andre Joyal, Ross Street, and Dominic Verity.
\newblock Traced monoidal categories.
\newblock {\em Math. Proc. Camb. Phil. Soc.}, 119:447--468, 1996.

\bibitem{DBLP:conf/lics/KissingerU17}
Aleks Kissinger and Sander Uijlen.
\newblock A categorical semantics for causal structure.
\newblock In {\em 32nd Annual {ACM/IEEE} Symposium on Logic in Computer
  Science, {LICS} 2017, Reykjavik, Iceland, June 20-23, 2017}, pages 1--12.
  {IEEE} Computer Society, 2017.

\bibitem{Leiserson1991}
Charles~E. Leiserson and James~B. Saxe.
\newblock Retiming synchronous circuitry.
\newblock {\em Algorithmica}, 6(1):5--35, Jun 1991.

\bibitem{DBLP:conf/lics/Milius10}
Stefan Milius.
\newblock A sound and complete calculus for finite stream circuits.
\newblock In {\em Proceedings of the 25th Annual {IEEE} Symposium on Logic in
  Computer Science, {LICS} 2010, 11-14 July 2010, Edinburgh, United Kingdom},
  pages 421--430. {IEEE} Computer Society, 2010.

\bibitem{DBLP:conf/csl/Mogelberg14}
Rasmus~Ejlers M{\o}gelberg.
\newblock A type theory for productive coprogramming via guarded recursion.
\newblock In Henzinger and Miller \cite{DBLP:conf/csl/2014}, pages 71:1--71:10.

\bibitem{Parhi2013}
Keshab~K. Parhi and Yanni Chen.
\newblock {\em Signal Flow Graphs and Data Flow Graphs}, pages 1277--1302.
\newblock Springer New York, New York, NY, 2013.

\bibitem{cartesiandiffcat}
Robin~Cockett Richard F.~Blute and Robert~A.G. Seely.
\newblock Cartesian differential categories.
\newblock {\em Theory and Applications of Categories}, 22(23):622--672, 2009.

\bibitem{DBLP:journals/mscs/Rutten05}
Jan J. M.~M. Rutten.
\newblock A coinductive calculus of streams.
\newblock {\em Mathematical Structures in Computer Science}, 15(1):93--147,
  2005.

\bibitem{DBLP:journals/lmcs/Rutten08}
Jan J. M.~M. Rutten.
\newblock Rational streams coalgebraically.
\newblock {\em Logical Methods in Computer Science}, 4(3), 2008.

\bibitem{bptt}
P.~J. Werbos.
\newblock Backpropagation through time: what it does and how to do it.
\newblock {\em Proceedings of the IEEE}, 78(10):1550--1560, Oct 1990.

\bibitem{DBLP:phd/hal/Zanasi15}
Fabio Zanasi.
\newblock {\em Interacting Hopf Algebras- the Theory of Linear Systems.
  (Interacting Hopf Algebras - la th{\'{e}}orie des syst{\`{e}}mes
  lin{\'{e}}aires)}.
\newblock PhD thesis, {\'{E}}cole normale sup{\'{e}}rieure de Lyon, France,
  2015.

\end{thebibliography}

\onecolumn
\appendix

\section{Appendix}

\subsection{Proof of \cref{causal}}

\newcommand{\tmmathbf}[1]{\ensuremath{\boldsymbol{#1}}}
\newcommand{\tmop}[1]{\ensuremath{\operatorname{#1}}}

\paragraph{Injectivity}

Without loss of generality, we assume $A$ is non-empty and take
$\bot \in A$.  We define helper functions
$(-)_{\leqslant k} : A^{\Nat} \rightarrow A^{k + 1}$ and
$(-)^{k +} : A^{k + 1} \rightarrow A^{\Nat}$ by
\[ a_{\leqslant k} = (a_0, \ldots, a_k), \quad (a_0, \ldots, a_k)^{k
    +} = a_0, \ldots, a_k, [\bot] . \]
\begin{lemma}
  For any causal function $f : A^{\Nat} \rightarrow B^{\Nat}$, we
  have $(f ((a_{\leqslant k})^{k +}))_k = (f (a))_k$.
\end{lemma}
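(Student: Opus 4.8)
The plan is to observe that the sequence $(a_{\leqslant k})^{k+}$ agrees with $a$ on exactly the positions that causality constrains, and then invoke causality of $f$ directly.

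First I would unwind the two helper operations. By definition $(a_{\leqslant k})^{k+} = a_0, \ldots, a_k, [\bot]$, so its $i$-th entry equals $a_i$ for every $i \le k$. In the notation of the preliminaries this is precisely the statement $(a_{\leqslant k})^{k+} \equiv_k a$. The only subtlety here is the indexing convention: $\seq x \equiv_n \seq y$ means agreement at all indices $i \le n$ (so in the ``first $n+1$ positions''), and that is exactly what truncating at $k$ and padding with $\bot$ gives us.

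Then I would apply the definition of causality of $f$ with $n = k$, $\seq x = (a_{\leqslant k})^{k+}$, and $\seq y = a$. Since $\seq x \equiv_k \seq y$, causality yields $f\big((a_{\leqslant k})^{k+}\big) \equiv_k f(a)$, i.e. $\big(f((a_{\leqslant k})^{k+})\big)_i = (f(a))_i$ for all $i \le k$. Reading off the component $i = k$ gives the claim.

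There is essentially no obstacle: the argument is a one-line unwinding of the definitions of $(-)_{\leqslant k}$, $(-)^{k+}$, $\equiv_k$, and causality. The single thing worth stating carefully is the off-by-one between ``first $k+1$ entries'' and $\equiv_k$, which is why I would make the middle step explicit in terms of $\equiv_k$ rather than appealing informally to ``the images agree up to position $k$''.
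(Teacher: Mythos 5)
Your argument is exactly the paper's: note $(a_{\leqslant k})^{k+} \equiv_k a$, apply causality to get $f((a_{\leqslant k})^{k+}) \equiv_k f(a)$, and read off the $k$th component. The proof is correct and matches the paper's own proof step for step, including the careful handling of the $\equiv_k$ indexing convention.
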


\begin{proof}
  Notice that $(a_{\leqslant k})^{k +} \equiv_k a$. From causality, we
  have $f ((a_{\leqslant k})^{k +}) \equiv_k f (a)$, and hence
  $(f ((a_{\leqslant k})^{k +}))_k = (f (a))_k$.
\end{proof}

Let $f : A^{\Nat} \rightarrow B^{\Nat}$ be a causal function. We
define $s (f) = (\tmop{id}_1, \tmmathbf{f})$ by
\[ \tmmathbf{f}_k : A^k \times A \rightarrow A^{k + 1} \times B \]
\[ \tmmathbf{f}_k (x, a) = ((x, a), (f ((x, a)^{k +}))_k) \]
\begin{lemma}
  For any $k \in \Nat$ and $a \in A^{\Nat}$, we have
  $U (i ; \tmmathbf{f}_0 ; \ldots ; \tmmathbf{f}_k) (a_{\leqslant k})
  = (a_{\leqslant k}, f (a)_{\leqslant k}) .$
\end{lemma}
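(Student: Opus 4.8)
The plan is to induct on $k$, reading the underlying morphism $U(i \vc \tmmathbf{f}_0 \vc \cdots \vc \tmmathbf{f}_k)$ (with $i = \tmop{id}_1$, so that $i^v$ contributes nothing) as the state-passing computation it literally encodes by Definition~\ref{def:double_cat}: starting from the empty state in $A^0$, feed $a_0,\dots,a_k$ through the layers $\tmmathbf{f}_0,\dots,\tmmathbf{f}_k$ one at a time, threading the state output of layer $j$ into layer $j{+}1$, and collecting the $B$-valued outputs in clock-tick order. Concretely, I would first record the elementary fact (from the definition of vertical composition and associativity of $\vc$) that the underlying morphism of a composite $g \vc \tmmathbf{f}_k$ sends $(v_0,\dots,v_k)$ to $(\sigma', w_0,\dots,w_{k-1},w_k)$, where $(\sigma, w_0,\dots,w_{k-1}) = Ug(v_0,\dots,v_{k-1})$ and $(\sigma', w_k) = \tmmathbf{f}_k(\sigma, v_k)$.

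For the base case $k=0$: unfolding gives $U(i \vc \tmmathbf{f}_0)(a_0) = \tmmathbf{f}_0(a_0) = (a_0, (f((a_0)^{0+}))_0)$; since $(a_0)^{0+} \equiv_0 a$, the preceding lemma (equivalently, causality of $f$) identifies the last component with $(f(a))_0$, so we get $(a_{\leqslant 0}, f(a)_{\leqslant 0})$. For the inductive step, I would apply the hypothesis at $k-1$, observing that the first component of the tuple it asserts is exactly the terminal state of the first $k$ layers; so the hypothesis simultaneously supplies the output $f(a)_{\leqslant k-1}$ and the threaded state $a_{\leqslant k-1}$. Feeding this state together with input $a_k$ into $\tmmathbf{f}_k$ yields $\tmmathbf{f}_k(a_{\leqslant k-1}, a_k) = (a_{\leqslant k}, (f((a_{\leqslant k})^{k+}))_k)$, and since $(a_{\leqslant k})^{k+} \equiv_k a$, the preceding lemma rewrites the last component as $(f(a))_k$. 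Concatenating with the earlier outputs gives $(a_{\leqslant k}, f(a)_{\leqslant k})$, which closes the induction.

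The one point that needs care — more bookkeeping than genuine difficulty — is the elementary fact in the first paragraph: translating the formal definition of $U(f \vc h)$ in Definition~\ref{def:double_cat}, with its symmetry maps and the strict identifications $1 \times X = X$, into the informal ``run $f$, thread its state into $h$'' description, and in particular verifying that the $B$-value produced at clock tick $j$ ends up in coordinate $j$ of the output tuple (and that the accumulated state ends up in front). Once that is pinned down, the rest is a mechanical unfolding plus two appeals to the preceding lemma.
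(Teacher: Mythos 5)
Your proposal is correct and follows essentially the same route as the paper: induction on $k$, unfolding the vertical composite as a state-threading ``let'' computation, and invoking the preceding lemma to replace $(f((a_{\leqslant k})^{k+}))_k$ by $(f(a))_k$ at each stage. The ``elementary fact'' you isolate about how $U(g \vc \tmmathbf{f}_k)$ threads state and appends the new output is exactly the decomposition the paper uses (silently, in let-notation) in its inductive step, so there is no substantive difference.
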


\begin{proof}
  When $k = 0$,
  \[ \tmop{Tc}_0 (\tmop{id}_1, \tmmathbf{f}) (a_{\leqslant 0}) = (a_0,
    f ((a_0)^{0 +})_0) = (a_0, f (a)_{\leqslant 0}) . \] Suppose that
  $U (i ; \tmmathbf{f}_0 ; \ldots ; \tmmathbf{f}_k) (a_{\leqslant k})
  = (a_{\leqslant k}, f (a)_{\leqslant k})$ holds. Then
  \begin{eqnarray*}
    & & U (i ; \tmmathbf{f}_0 ; \ldots ; \tmmathbf{f}_k ; \tmmathbf{f}_{k + 1})
        (a_{\leqslant k + 1})\\
    & = & \tmop{let} (s', b) = U (i ; \tmmathbf{f}_0 ;
          \ldots ; \tmmathbf{f}_k) (a_{\leqslant k}) \tmop{in}\\
    &  & \tmop{let} (s'', b') = U\tmmathbf{f}_{k + 1} (s', a_{k + 1})
         \tmop{in} (s'', (b, b'))\\
    & = & \tmop{let} (s'', b') = U\tmmathbf{f}_{k + 1} (a_{\leqslant k}, a_{k
          + 1}) \tmop{in} (s'', (f (a)_{\leqslant k}, b'))\\
    & = & (a_{\leqslant k + 1}, (f (a)_{\leqslant k}, f ((a_{\leqslant k +
          1})^+)_{k + 1}))\\
    & = & (a_{\leqslant k + 1}, (f (a)_{\leqslant k}, f (a)_{k + 1}))\\
    & = & (a_{\leqslant k + 1}, f (a)_{\leqslant k + 1}) .
  \end{eqnarray*}
  
\end{proof}

\begin{corollary}
  $\tmop{Tc}_k (\tmop{id}_1, \tmmathbf{f}) (a_{\leqslant k}) = f
  (a)_{\leqslant k}$.
\end{corollary}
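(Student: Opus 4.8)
The plan is to read this off directly from the immediately preceding lemma together with the definition of the truncation operator $\tmop{Tc}$. Recall that $\tmop{Tc}_k(\tmop{id}_1, \tmmathbf f)$ is by definition the underlying $\Set$-morphism of the vertical composite $\tmop{id}_1^v \vc \tmmathbf f_0 \vc \cdots \vc \tmmathbf f_k \vc \final{A^{k+1}}^v$, a function $A^{k+1}\to B^{k+1}$ (here $A^{k+1} = \nxt \tmmathbf f_k$ is the type of the final state, and there are $k+1$ output slots of type $B$). The composite $\tmop{id}_1^v \vc \tmmathbf f_0 \vc \cdots \vc \tmmathbf f_k$ on its own is exactly the $2$-cell whose underlying morphism $U(\tmop{id}_1; \tmmathbf f_0; \ldots; \tmmathbf f_k): A^{k+1}\to A^{k+1}\times B^{k+1}$ is evaluated by the lemma, so the only thing still to analyse is the effect of vertically post-composing with the trailing terminal $2$-cell $\final{A^{k+1}}^v$.

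First I would unfold the vertical-composition formula of Definition~\ref{def:double_cat} applied to $g \vc \final{A^{k+1}}^v$, where $g$ abbreviates $\tmop{id}_1^v \vc \tmmathbf f_0 \vc \cdots \vc \tmmathbf f_k$. The trailing layer has underlying morphism $U(\final{A^{k+1}}^v) = \final{A^{k+1}}: A^{k+1}\to 1$, so in the (strictified) Cartesian category every symmetry and every $1$-factor appearing in that formula collapses, and one is left with $U(g \vc \final{A^{k+1}}^v) = \pi_1 \circ U(g)$, that is, post-composition with the projection that discards the state component. Hence $\tmop{Tc}_k(\tmop{id}_1, \tmmathbf f) = \pi_1 \circ U(\tmop{id}_1; \tmmathbf f_0; \ldots; \tmmathbf f_k)$.

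Finally, substituting the value supplied by the lemma gives $\tmop{Tc}_k(\tmop{id}_1, \tmmathbf f)(a_{\leqslant k}) = \pi_1(a_{\leqslant k}, f(a)_{\leqslant k}) = f(a)_{\leqslant k}$, which is the assertion. There is no genuine obstacle here; the only point that needs a moment of care is the bookkeeping in the middle step, namely confirming that vertically post-composing with $\final{A^{k+1}}^v$ really does amount to throwing away the state inside the underlying morphism while leaving the $B$-valued outputs untouched, so that this trailing terminal layer does not disturb what the lemma has already computed.
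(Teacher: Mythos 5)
Your proof is correct and matches the paper's intent: the paper states this corollary without proof as an immediate consequence of the preceding lemma, and your argument simply spells out the one piece of bookkeeping involved, namely that the trailing $\final{\nxt \tmmathbf f_k}^v$ layer in the definition of $\Tc_k$ discards the state component of the pair computed by the lemma. The unfolding of the vertical-composition formula to get $\pi_1 \circ U(\tmop{id}_1;\tmmathbf f_0;\ldots;\tmmathbf f_k)$ is accurate.
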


Suppose that $s (f)$ and $s (g)$ are extentionally equivalent. That
is,
$\tmop{Tc}_{\bullet} (\tmop{id}_1, \tmmathbf{f}) = \tmop{Tc}_{\bullet}
(\tmop{id}_1, \tmmathbf{g})$. Then for any $a \in A^{\Nat}$, we have
\[ f (a)_{\leqslant \bullet} = \tmop{Tc}_{\bullet} (\tmop{id}_1,
  \tmmathbf{f}) (a_{\leqslant \bullet}) = \tmop{Tc}_k (\tmop{id}_1,
  \tmmathbf{g}) (a_{\leqslant \bullet}) = g (a)_{\leqslant \bullet}
  . \] Therefore $f = g$.

\paragraph{Surjectivity}

Let $(i, \tmmathbf{f}) : [A] \rightarrow [B]$ be a stateful morphism
sequence.  We define a function
$g : A^{\Nat} \rightarrow B^{\Nat}$ by
\[ g (a)_k = (\tmop{Tr}_k (i, \tmmathbf{f}) (a_{\leqslant k}))_k . \]
We show that $s (g)$ and $(i, \tmmathbf{f})$ are extentionally
equivalent.  From the definition,
$s (g) = (\tmop{id}_1, \tmmathbf{g})$ where
\[ \tmmathbf{g}_k : A^k \times A \rightarrow A^{k + 1} \times B, \]
\begin{eqnarray*}
  \tmmathbf{g}_k (x, a) & = & ((x, a), (g ((x, a)^{k +}))_k)\\
                        & = & ((x, a), (\tmop{Tr}_k (i, \tmmathbf{f}) ((x, a)^{k +})_{\leqslant
                              k})_k)\\
                        & = & ((x, a), (\tmop{Tr}_k (i, \tmmathbf{f}) (x, a))_k) .
\end{eqnarray*}
We show $(\tmop{id}_1, \tmmathbf{g})$ and $(i, \tmmathbf{f})$ are
extensionally equivalent. For this, we inductively show that for any
$x \in A^{k + 1}$, we have
\[ U (\tmop{id} ; \tmmathbf{g}_0 ; \ldots ; \tmmathbf{g}_k) (x) = (x,
  \tmop{Tr}_k (i, \tmmathbf{f}) (x)) . \] Note that this immediately
entails
$\tmop{Tr}_k (\tmop{id} ; \tmmathbf{g}) = \tmop{Tr}_k (i,
\tmmathbf{f})$. Let $x \in A^k$ and $a_k \in A$.

When $k = 0$, it is obvious.

When $k > 0$, we show
\begin{eqnarray*}
  &  & U (\tmop{id}_1 ; \tmmathbf{g}_0 ; \ldots ; \tmmathbf{g}_k) (x, a_k)\\
  & = & \tmop{let} (s, y) = U (\tmop{id}_1 ; \tmmathbf{g}_0 ; \ldots ;
        \tmmathbf{g}_{k - 1}) (x) \tmop{in}\tmop{let} (s', b) =\tmmathbf{g}_k (s, a_k) \tmop{in} (s', (y, b))\\
  & = & \tmop{let} (s, y) = (x, \tmop{Tr}_{k - 1} (i, \tmmathbf{f}) (x))
        \tmop{in} \tmop{let} (s', b) =\tmmathbf{g}_k (s, a_k) \tmop{in} (s', (y, b))\\
  & = & \tmop{let} y = \tmop{Tr}_{k - 1} (i, \tmmathbf{f}) (x) \tmop{in} \tmop{let} (s', b) =\tmmathbf{g}_k (x, a_k) \tmop{in} (s', (y, b))\\
  & = & \tmop{let} y = \tmop{Tr}_{k - 1} (i, \tmmathbf{f}) (x) \tmop{in} \tmop{let} (s', b) = ((x, a_k), (\tmop{Tr}_k (i, \tmmathbf{f}) (x,
       a_k))_k) \tmop{in} (s', (y, b))\\
  & = & \tmop{let} y = \tmop{Tr}_{k - 1} (i, \tmmathbf{f}) (x) \tmop{in} \tmop{let} b = (\tmop{Tr}_k (i, \tmmathbf{f}) (x, a_k))_k \tmop{in}
       ((x, a_k), (y, b))\\
  & = & ((x, a_k), \tmop{Tr}_k (i, \tmmathbf{f}) (x, a_k)) .
\end{eqnarray*}

\subsection{Proof of \cref{thm:dinaturality}}

Let $r_i = \valtostate{(\id_{Y_i}^h \times g_i)\hc s_i}$.  By
induction, obtain the equality of this:
\begin{equation}\label{eqn:post_dinaturality}
  U(r_0\vc \cdots\vc r_n) \circ (\id_{S_{n+1}\times\prod^n Y_i}\times\sigma_{T_{n+1},U_{n+1}})
\end{equation}
and this:
\begin{equation}\label{eqn:pre_dinaturality}
  U(t_0\vc\cdots\vc t_n) \circ (\id_{\prod^n Y_i}\times U(g_n)\times\id_{T_{n+1}}).
\end{equation}
Then observe the $n$th truncation of $(r_i)$ is
(\ref{eqn:post_dinaturality}) followed by $\pi_{\prod^n Y_i}$ and
similarly, the $n$th truncation of $(t_i)$ is
(\ref{eqn:pre_dinaturality}) followed by the same projection.
Therefore, they are observationally equivalent, though not equal as
sequences.

\subsection{Proofs from section~\ref{sec:seqcdc}}

\subsubsection{Lemma~\ref{lem:leftadd_lifts}}

We already know $\CC$ being Cartesian implies $\caus\CC$ is
  Cartesian.  A commutative monoid structure on $\seq X$ is given by
  componentwise monoid structure:
  $+_{\seq X} \teq \seqof{\cell{+_{\seq X_\bul}}{1}{\seq X_\bul \times
      \seq X_\bul}{1}{\seq X_\bul}}$ and
  $0_{\seq X}\teq \seqof{\cell{0_{\seq X_\bul}}{1}{1}{1}{\seq
      X_\bul}}$.

\subsubsection{Properties of $\squD$}

It is straightforward to check that all the properties we claim are
well-typed, in the sense that the source and target 1-cells of the 2-cells on
each side match. It remains to check that the underlying morphisms for the
2-cells in each claimed property match. This is not an effortless task; we
have two kinds of composition and will be making heavy use of the axioms of
Cartesian differential operators. We find it easiest to do this reasoning with
string diagrams, so this is what we will present here.

When the string diagrams below are particularly complicated, we may draw a
dotted red box around a region or include some red text. Anything found in red
has no mathematical meaning---it is only there to help break down a complex
diagram or foreshadow a major substitution.

Throughout this proof, let $Uf = \phi$, $Ug = \psi$, $Uh = \xi$, and $Uk = \kappa$.

\textbf{Property 1:} If $\phi \in \CC(X, Y)$, then \mbox{$\squD(\phi^h) = (D\phi)^h$}
and \mbox{$\squD(\phi^v) = ((D\phi\times \phi)\circ(X\times \Delta_X))^v$}.

When $\phi$ is oriented horizontally, it has no state. Omitting the wires corresponding to state
in the diagram for the underlying morphism of $\squD$, we find

\ctikzfig{squD/prop1hor}

\noindent which matches the underlying morphism of $(D\phi)^h$. On the other hand, when
$\phi$ is oriented vertically, it has no values. Omitting the value wires in the same diagram,
we find

\ctikzfig{squD/prop1vert}

\noindent which again matches.

\textbf{Property 2:} $(0_S\times S)^v\vc \squD f \hc (0_X \times X)^h = (0_Y \circ !_Y)^h \hc f\vc (0_{S'} \times S')^v$

From Lemma~\ref{lem:adjust} and the diagram for the underlying morphism of $\squD f$,

\ctikzfig{squD/prop2}

\noindent where the middle equality is by the CD2 axiom for $D$.

\textbf{Property 3:} $(+_S\times S)^v\vc \squD f \hc (+_X \times X)^h 
      = \alpha_S^v\vc(+_Y^h \hc \squD f \cc \squD f \hc \alpha_X^h)\vc \beta_{S'}^v \vc (+_{S'}\times S')^v$

This property requires the use of CD3 for $D$.

\ctikzfig{squD/prop3partb}


\textbf{Property 4:} $\squD(f\vc h) = (\squD f\vc \squD h) \hc \delta_{X,Z}^h$.

Using the CD axioms for $D$, we compute the differential of the underlying morphism of $f\vc h$:

\ctikzfig{squD/prop4preamble}

We use that result to construct a string diagram for $U\squD(f\vc h)$:

\ctikzfig{squD/prop4}

This is the string diagram of $U((\squD f\vc \squD h) \hc \delta_{X,Z}^h)$ by Lemma~\ref{lem:adjust}.

\textbf{Property 5:} $\squD(g\hc f)\vc \gamma_{S', T'}^v = \gamma_{S, T}^v\vc (\squD g \hc (\squD f \cc f)\hc (X\times \Delta_X)^h)$.

Using the CD axioms for $D$, we compute the differential of the underlying morphism of $g\hc f$:

\ctikzfig{squD/prop5preamble}

We use that result to construct a string diagram for $U(\squD(g\hc f)\vc \gamma_{S', T'}^v)$:

\ctikzfig{squD/prop5row1}
\ctikzfig{squD/prop5row2}
\ctikzfig{squD/prop5row3}

This is the string diagram of $U(\gamma_{S, T}^v\vc (\squD g \hc (\squD f \cc f))\hc (X\times \Delta_X)^h)$ by Lemma~\ref{lem:adjust}.

\textbf{Property 6:} $\squD(f \cc k)\vc \delta_{S', T'}^v = \delta_{S, T}^v\vc \squD f \cc \squD k \hc \delta_{X, Z}^h$.

Using the CD axioms for $D$, we compute the differential of the underlying morphism of $f\cc k$:

\ctikzfig{squD/prop6preamble}

We use that result to construct a string diagram for $U(\squD(f \times k)\vc \delta_{S', T'}^v)$:

\ctikzfig{squD/prop6}

\textbf{Property 7:} $\zeta_S^v\vc \squD \squD f \hc \zeta_X^h = \squD f\vc \zeta_{S'}^v$.

\textbf{Property 8:} $\delta_{S,S}^v\vc \squD \squD f \hc \delta_{X, X}^h = \squD \squD f\vc \delta_{S', S'}^h$.

For these two properties, we need the underlying morphism for $\squD\squD f$.
This can be found in the diagram below. We do not show all the steps,
hopefully the methodology is clear enough for the impossibly interested reader
to check the claim. For property 7, you need to use properties CD2 and CD6 for
$D$. For property 8, you need to use CD7 for $D$.

\ctikzfig{squD/doublediff}


\subsubsection{Well-definedness of $\seqD$}

\begin{lemma}
  $\seqD$ is well-defined.
\end{lemma}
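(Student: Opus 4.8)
The plan is to establish two things. First, for a single stateful morphism sequence $(i,\seq s):\seq X\arrow\seq Y$, I would check that $\seqD(i,\seq s)=(U\squD(i^v),\seqof{\squD\seq s_\bul})$ is genuinely a stateful morphism sequence of type $\seq X\times\seq X\arrow\seq Y$. This is pure bookkeeping: from the definition of $\squD$ on $\sq\CC$ 2-cells one reads off $\prv(\squD\seq s_\bul)=\prv(\seq s_\bul)\times\prv(\seq s_\bul)$, $\dom(\squD\seq s_\bul)=\seq X_\bul\times\seq X_\bul$, $\nxt(\squD\seq s_\bul)=\nxt(\seq s_\bul)\times\nxt(\seq s_\bul)$, and $\cod(\squD\seq s_\bul)=\seq Y_\bul$, so from $\nxt(\seq s_\bul)=\prv(\seq s_{\bul+1})$ we get $\nxt(\squD\seq s_\bul)=\prv(\squD\seq s_{\bul+1})$ and consecutive components still compose vertically; and by \cref{prop:squD_properties}(1) the morphism $U\squD(i^v)$ is a legal $\CC$-arrow $1\arrow\prv(\seq s_0)\times\prv(\seq s_0)=\prv(\squD\seq s_0)$. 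Second, and this is where the work is, I would show $\seqD$ respects extensional equivalence, hence descends to a map $\caus\CC(\seq X,\seq Y)\to\caus\CC(\seq X\times\seq X,\seq Y)$.

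For this second part, the crucial step is to express the $n$-th truncation of $\seqD(i,\seq s)$ through $\Tc_n(i,\seq s)$ alone. Unwinding the definitions, $\Tc_n(\seqD(i,\seq s))=U\big((U\squD(i^v))^v\vc\squD\seq s_0\vc\cdots\vc\squD\seq s_n\vc !_{\nxt(\squD\seq s_n)}^v\big)$. Since $\squD(i^v)$ has both vertical $1$-cells equal to $1$, it already lies in the image of $(-)^v$, so $(U\squD(i^v))^v=\squD(i^v)$; and since $\nxt(\squD\seq s_n)=\nxt(\seq s_n)\times\nxt(\seq s_n)$, combining \cref{prop:squD_properties}(1) with axiom CD1 gives $!_{\nxt(\squD\seq s_n)}^v=\squD(!_{\nxt(\seq s_n)}^v)$. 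Hence the truncation equals $U\big(\squD(i^v)\vc\squD\seq s_0\vc\cdots\vc\squD\seq s_n\vc\squD(!_{\nxt(\seq s_n)}^v)\big)$, and \cref{lem:squD_vertical_composition} rewrites this as $U\big(\squD\big(i^v\vc\seq s_0\vc\cdots\vc\seq s_n\vc !_{\nxt(\seq s_n)}^v\big)\hc z^h\big)$, where $z$ is the unzipping isomorphism attached to the domain objects $1,\seq X_0,\dots,\seq X_n,1$; discarding the inert copies of $1$, this $z$ depends only on $\seq X_0,\dots,\seq X_n$. The bracketed composite $i^v\vc\seq s_0\vc\cdots\vc\seq s_n\vc !_{\nxt(\seq s_n)}^v$ has both horizontal $1$-cells equal to $1$, hence lies in the image of $(-)^h$ with underlying morphism precisely $\Tc_n(i,\seq s)$, and \cref{prop:squD_properties}(1) again turns $\squD$ of it into $(D\,\Tc_n(i,\seq s))^h$. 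Taking underlying morphisms throughout, $\Tc_n(\seqD(i,\seq s))=D(\Tc_n(i,\seq s))\circ z$.

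The conclusion is now immediate: if $(i,\seq s)$ and $(j,\seq t)$ are extensionally equivalent then $\Tc_\bul(i,\seq s)=\Tc_\bul(j,\seq t)$, so $\Tc_n(\seqD(i,\seq s))=D(\Tc_n(i,\seq s))\circ z=D(\Tc_n(j,\seq t))\circ z=\Tc_n(\seqD(j,\seq t))$ for all $n$, whence $\seqD(i,\seq s)$ and $\seqD(j,\seq t)$ are extensionally equivalent and $\seqD$ is well defined on $\caus\CC$-morphisms. \textbf{The main obstacle} I anticipate is organizational rather than conceptual: one must keep the terminal-object bookkeeping under control (which is precisely why one works in the strictified $\CC$, so that $1\times 1=1$ and wires of type $1$ contribute trivially) and invoke \cref{lem:squD_vertical_composition} with exactly the right reshuffling isomorphism. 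I would also stress that the Shim Lemma is \emph{not} usable here: extensionally equivalent $(i,\seq s)$ and $(j,\seq t)$ may have entirely incomparable state sequences, so there is no candidate shim $\seq b$, and the truncation computation above is the appropriate replacement. Finally, although this argument is morally the truncation-level version of \cref{th:unrol}, it must not cite that theorem, which already presupposes the very well-definedness being proved.
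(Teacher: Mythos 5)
Your proof follows essentially the same route as the paper's: both rewrite the $n$th truncation of $\seqD(i,\seq s)$ --- after recognizing the initial state and the final discarder as $\squD$-images --- via Lemma~\ref{lem:squD_vertical_composition} into $\squD$ applied to the truncation composite of $(i,\seq s)$ composed with an unzipper, which depends only on $\Tc_n(i,\seq s)$. Your write-up is if anything slightly more explicit, reducing all the way to $D(\Tc_n(i,\seq s))\circ z$ via Proposition~\ref{prop:squD_properties}(1), and you correctly attribute the discarder step to property (1) plus CD1, where the paper's citation of property (2) appears to be a typo.
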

\begin{proof}
  Let $(i, \seq s)$ and $(j, \seq t)$ be extensionally equivalent
  sequences.  We must show $(\squD i, \seqof{\squD {\seq s}_\bul})$
  and $(\squD j, \seqof{\squD {\seq t}_\bul})$ are extensionally
  equivalent.
  \begin{eqnarray*}
    \squD i \vc \squD\seq s_0 \vc\ldots\vc \squD\seq s_n \vc d_{(\nxt s_n)^2} 
    &=& \squD i \vc \squD\seq s_0 \vc\ldots\vc \squD\seq s_n \vc \squD d_{\nxt s_n} \\
    &=& \squD(i \vc\seq s_0 \vc\ldots\vc\seq s_n \vc d_{\nxt s_n}) \hc z^{-1} \\
    &=& \squD(j \vc\seq t_0 \vc\ldots\vc\seq t_n \vc d_{\nxt t_n}) \hc z^{-1} \\
  \end{eqnarray*}
  \noindent where the first line is by
  Proposition~\ref{prop:squD_properties}(2), the second by
  Lemma~\ref{lem:squD_vertical_composition}, and the last by the fact
  that $(i, \seq s)$ and $(j, \seq t)$ are extensionally equivalent.
  Here it is crucial that extensionally equivalent sequences have the
  same source values, so that the $z$ obtained from
  Lemma~\ref{lem:squD_vertical_composition} when $\varphi = \seq s$
  matches the $z$ obtained when $\varphi = \seq t$.  The proof then
  finishes by reversing the first two steps.
\end{proof}

\subsubsection{$\seqD$ is a Cartesian differential operator for $\caus\CC$ (Proposition~\ref{pp:cartdiff})}\ \\

We show that $\seqD$ satisfies the seven axioms given in Definition~\ref{def:cdc}. 
Most of the hard work has already been done in 
Proposition~\ref{prop:squD_properties}; we use the properties
of $\squD$ established there in concert with the Shim Lemma to 
obtain most of the CD axioms.

Let $f = (i, \seq f): \seq X \to \seq Y$,
$g = (j, \seq g): \seq Y \to \seq Z$ and
$h = (\ell, \seq h): \seq Z \to \seq W$.  
Let $\seq S$, $\seq T$, and $\seq U$ be the state sequences of $f$, $g$, and $h$, respectively.

\begin{enumerate}[CD1.]
  \item $\seqD s = s\times {\final{\dom (s)}}$ for 
    $s \in \{\seq X, \sigma_{\seq X,\seq Y}, \final{\seq X}, \Delta_{\seq X}, +_{\seq X}, 0_{\seq X}\}$.

    These $s$ are of the form $(\id_1, \seqof{(\seqel s)^h})$ 
    where each $\seqel s$ is a $\CC$-morphism from $\{X, \sigma_{X,Y}, \final{X}, \Delta_X, +_X, 0_X\}$.
    Therefore, 
    \begin{align*}
    \seqD s &= \seqD(\id_1, \seqof{(\seqel s)^h}) = (\id_1, \seqof{\squD(\seqel s)^h}) = (\id_1, \seqof{(D\seqel s)^h}) \\
    & = (\tuple{\id_1,\id_1}, \seqof{(\seqel s\times\final{\dom{\seqel s}})^h})
    = (\id_1, \seqof{(\seqel s)^h}) \times (\id_1, \seqof{(\final{\dom{\seqel s}})^h}) = s \times \final{\dom s}
    \end{align*}

    Where the last step in the first line is by Proposition~\ref{prop:squD_properties}(1),
    and the first step in the second line is by CD1 for $D$.\\

  \item $\seqD (i, \seq f)\circ(0_{\seq X} \times \seq X) = 0_{\seq Y} \circ \final{\seq X}$.

    By definition of $\seqD$, we know $\seqD (i, \seq f)\circ(0_{\seq X} \times \seq X) = (\tuple{0_{\seq S_0}, i}, \seqof{\squD \seqel f \hc (0_{\seqel X}\times\seqel X)^h})$. By Proposition~\ref{prop:squD_properties}(2),
    \begin{displaymath}
      (0_{\seq S_0}\times \seq S_0) \circ i = \tuple{0_{\seq S_0}, i}
      \text{ and }
      (0_{\seqel Y} \circ \final{\seqel Y})^h \hc \seqel f \vc (0_{\nextseqel S} \times \nextseqel S)^v = (0_{\seqel S}\times \seqel S)^v\vc \squD \seqel f \hc (0_{\seqel X} \times {\seqel X})^h,
    \end{displaymath}

    \noindent so the shim lemma tell us

    \begin{align*}
      \seqD (i, \seq f)\circ(0_{\seq X} \times \seq X) 
      &= (\tuple{0_{\seq S_0}, i}, \seqof{\squD \seqel f \hc (0_{\seqel X}\times\seqel X)^h}) & \text{(definitions)}\\
      &= (i, \seqof{(0_{\seqel Y} \circ \final{\seqel Y})^h \hc \seqel f}) & \text{(shim lemma)}\\
      &= 0_{\seq Y} \circ \final{\seq Y} \circ (i, \seq f) & \text{(definitions)}\\
      &= 0_{\seq Y} \circ \final{\seq X} & \text{(finality)}
    \end{align*} 

    \noindent as desired.
    \\

  \item $\seqD (i, \seq f) \circ (+_{\seq X}\times \seq X) = +_{\seq Y}\circ (\seqD (i, \seq f) \times \seqD (i, \seq f)) \circ \alpha_{\seq X}^h$.

    Since $(+_{\seq S_0}\times {\seq S_0})\circ\tuple{0_{\seq S_0}, 0_{\seq S_0}, i} = \tuple{0_{\seq S_0}, i}$,
    and, by Proposition~\ref{prop:squD_properties}(3),

    \begin{displaymath}
      \alpha_{\seqel S}^v\vc (+_{\seqel Y}^h \hc \squD \seqel f \cc \squD \seqel f \hc \alpha_{\seqel X}^h)\vc \beta_{\nextseqel S}^v \vc (+_{\nextseqel S}\times {\nextseqel S})^v = 
      (+_{\seqel S}\times {\seqel S})^v\vc \squD \seqel f \hc (+_{\seqel X} \times {\seqel X})^h,
    \end{displaymath}

    \noindent we can invoke the shim lemma again.
    \begin{align*}
      \seqD (i, \seq f) \circ (+_{\seq X}\times \seq X) &= (\tuple{0_{\seq S_0}, i}, \seqof{\squD \seqel f \hc (+_{\seqel X}\times \seqel X)^h}) & \text{(definitions)}\\
      &= (\tuple{0_{\seq S_0}, 0_{\seq S_0}, i}, 
       [\alpha_{\seqel S}^v\vc (+_{\seqel Y}^h\hc \squD\seqel f \cc \squD\seqel f \hc \alpha_{\seqel X}^h) \vc \beta_{\nextseqel {S}}^v]) & \text{(shim lemma)} \\
      &= (\beta_{\seq S_0}\circ \tuple{0_{{\seq S}_0}, i , 0_{{\seq S}_0}, i}, 
       [\alpha_{\seqel S}^v\vc (+_{\seqel Y}^h\hc \squD\seqel f \cc \squD\seqel f \hc \alpha_{\seqel X}^h) \vc \beta_{\nextseqel {S}}^v]) & \\
      &= (\tuple{0_{{\seq S}_0}, i , 0_{{\seq S}_0}, i}, 
       [\beta_{\seqel {S}}^v\vc\alpha_{\seqel S}^v\vc (+_{\seqel Y}^h\hc \squD\seqel f \times \squD\seqel f \hc \alpha_{\seqel X}^h)]) & \text{(delayed dinaturality)} \\
      &= (\tuple{0_{{\seq S}_0}, i , 0_{{\seq S}_0}, i},
       [(\alpha_{\seqel S}\circ\beta_{\seqel {S}})^v\vc (+_{\seqel Y}^h\hc \squD\seqel f \times \squD\seqel f \hc \alpha_{\seqel X}^h)]) & \\
      &= (\tuple{0_{{\seq S}_0}, i , 0_{{\seq S}_0}, i},
       [+_{\seqel Y}^h\hc \squD\seqel f \times \squD\seqel f \hc \alpha_{\seqel X}^h]) &
    \end{align*}
 
    The last step here is a bit delicate, and involves a special case of the Shim Lemma. 
    If $\seq b$ is a sequence of idempotent maps (i.e.~$\seqel b\circ\seqel b = \seqel b$) such
    that $\seq b_0 \circ i = i$, then $(i, [\seqel f]) = (i, [\seqel b^v; \seqel f])$. Here
    $\alpha_{\seqel S}\circ\beta_{\seqel {S}}$ is such a sequence of idempotent maps.
    \\
  
  \item $\seqD((j, \seq g) \circ (i, \seq f)) = \seqD (j, \seq g) \circ (\seqD (i, \seq f) \times (i, \seq f)) \circ ({\seq X} \times \Delta_{\seq X})$.

    Since $\gamma_{\seq S_0, \seq T_0}\circ\tuple{0_{\seq S_0}, 0_{\seq T_0}, i, j} = 
    \tuple{0_{\seq S_0}, i, i, 0_{\seq T_0}, j}$, and, by Proposition~\ref{prop:squD_properties}(5),
    \[
    \squD(\seqel g\hc \seqel f)\vc \gamma_{\nextseqel S,\nextseqel T}^v = \gamma_{\seqel S,\seqel T}^v\vc (\squD \seqel g \hc (\squD \seqel f \cc \seqel f)\hc (\seqel X\times \Delta_{\seqel X})^h)
    \]
    \noindent we use the shim lemma again.
    \begin{align*}
      \seqD((j, \seq g) \circ (i, \seq f)) 
      &= (\tuple{0_{\seq S_0}, 0_{\seq T_0}, i, j},
      \seqof{\squD(\seqel g\hc \seqel f)}) & \text{(definitions)}\\
      &= (\tuple{0_{\seq S_0}, i, i, 0_{\seq T_0}, j}, 
      \seqof{\squD \seqel g \hc (\squD \seqel f \cc \seqel f)\hc (\seqel X\times \Delta_{\seqel X})^h}) & \text{(shim lemma)} \\
      &= (\tuple{0_{\seq T_0}, j}, \seqof{\squD \seqel g})\circ (\tuple{0_{\seq S_0}, i, i}, \seqof{\squD \seqel f \cc \seqel f})\circ (\seq X\times \Delta_{\seq X}) & \text{(definitions)} \\
      &= \seqD(j, \seq g)\circ (\seqD(i, \seq f) \times (i, \seq f)) \circ (\seq X\times \Delta_{\seq X}) & \text{(definitions)}
    \end{align*}

  \item $\seqD((i, \seq f)\times (\ell ,\seq h)) = (\seqD(i ,\seq f) \times \seqD(\ell ,\seq h))\circ\delta_{\seq X,\seq Z}$.

    Our invocation of the shim lemma this time uses the fact 
    $\delta_{\seq S_0, \seq U_0}\circ \tuple{0_{\seq S_0}, 0_{\seq U_0}, i, \ell} 
    = \tuple{0_{\seq S_0}, i, 0_{\seq U_0}, \ell}$ and Proposition~\ref{prop:squD_properties}(6)
    \[
    \squD(\seqel f \cc \seqel h)\vc \delta_{\nextseqel S, \nextseqel U}^v = \delta_{\seqel S, \seqel U}^v\vc \squD \seqel f \cc \squD \seqel k \hc \delta_{\seqel X, \seqel Z}^h.
    \]
    We can obtain this axiom now as
    \begin{align*}
      \seqD((i, \seq f)\times (\ell, \seq h))
      &= (\tuple{0_{\seq S_0}, i, 0_{\seq U_0}, \ell},
      \seqof{\squD(\seqel f \cc \seqel h)})  &\text{(definitions)}\\
      &= (\tuple{0_{\seq S_0}, 0_{\seq U_0}, i, \ell},
      \seqof{\squD \seqel f \cc \squD \seqel k \hc \delta_{\seqel X, \seqel Z}^h}) &\text{(shim lemma)} \\
      &= (\seqD(i, \seq f)\times\seqD(\ell,\seq h))\circ\delta_{\seqel X, \seqel Z} &\text{(definitions)}
    \end{align*}
  
  \item $\seqD \seqD (i, \seq f)\circ\zeta_{\seq X} = \seqD (i, \seq f)$.

    We invoke the shim lemma this time using the facts that 
    $\zeta_{\seq S_0}\circ \tuple{0_{\seq S_0}, i} = \tuple{0_{\seq S_0}, 0_{\seq S_0}, 0_{\seq S_0}, i}$
    and Proposition~\ref{prop:squD_properties}(7)
    \[
    \squD \seqel f\vc \zeta_{\nextseqel S}^v = \zeta_{\seqel S}^v\vc \squD \squD \seqel f \hc \zeta_{\seqel X}^h.
    \]
    We obtain the axiom directly now,
    \begin{align*}
      \seqD \seqD (i, \seq f)\circ\zeta_{\seq X}
      &= (\tuple{0_{\seq S_0}, 0_{\seq S_0}, 0_{\seq S_0}, i},
      \seqof{\squD \squD \seqel f \hc \zeta_{\seqel X}^h}) & \text{(definitions)}\\
      &= (\tuple{0_{\seq S_0}, i},
      \seqof{\squD \seqel f}) &\text{(shim lemma)}\\
      &= \seqD (i, \seq f)
    \end{align*}
  
  \item $\seqD\seqD (i, \seq f)\circ\delta_{\seq X,\seq X} = \seqD\seqD(i, \seq f)$.

    We invoke the shim lemma with 
    $\delta_{\seq S_0,\seq S_0}\circ \tuple{0_{\seq S_0}, 0_{\seq S_0}, 0_{\seq S_0}, i} = 
    \tuple{0_{\seq S_0}, 0_{\seq S_0}, 0_{\seq S_0}, i}$
    and Proposition~\ref{prop:squD_properties}(8)
    \[
    \squD\squD \seqel f\vc \delta_{\nextseqel S, \nextseqel S}^v = \delta_{\seqel S, \seqel S}^v\vc \squD \squD \seqel f \hc \delta_{\seqel X, \seqel X}^h.
    \]
    We obtain the axiom directly now,
    \begin{align*}
      \seqD \seqD (i, \seq f)\circ\delta_{\seq X,\seq X}
      &= (\tuple{0_{\seq S_0}, 0_{\seq S_0}, 0_{\seq S_0}, i},
      \seqof{\squD \squD \seqel f \hc \delta_{\seqel X,\seqel X}^h}) & \text{(definitions)}\\
      &= (\tuple{0_{\seq S_0}, 0_{\seq S_0}, 0_{\seq S_0}, i},
      \seqof{\squD\squD \seqel f}) &\text{(shim lemma)}\\
      &= \seqD\seqD (i, \seq f)
    \end{align*}

\end{enumerate}

\end{document}